\documentclass[review,hidelinks,onefignum,onetabnum]{siamart250211}



\usepackage{lipsum}
\usepackage{amsfonts}
\usepackage{graphicx}
\usepackage{epstopdf}
\usepackage{algorithmic}
\ifpdf
  \DeclareGraphicsExtensions{.eps,.pdf,.png,.jpg}
\else
  \DeclareGraphicsExtensions{.eps}
\fi


\newsiamremark{remark}{Remark}
\newsiamremark{hypothesis}{Hypothesis}
\crefname{hypothesis}{Hypothesis}{Hypotheses}
\newsiamthm{claim}{Claim}
\newsiamremark{fact}{Fact}
\crefname{fact}{Fact}{Facts}

\headers{Modelling and simulation of the MCL in non-isothermal flows}{Junkai Wang and Qiaolin He}

\title{Thermodynamically consistent modelling and simulation of the moving contact line problem in non-isothermal compressible two-phase flows\thanks{Submitted to the editors DATE.
\funding{This research is supported part by the National Key R \& D Program of China Under Grant (No. 2022YFE03040002), the National Natural Science Foundation of China (No.12371434).}}}

\author{
Junkai Wang\thanks{School of Mathematics, Sichuan University, Chengdu, 610064, P.R. China 
  (\email{wangjk0520@gmail.com}, \email{qlhejenny@scu.edu.cn})}
\and Qiaolin He\footnotemark[2]
}

\usepackage{amsopn}

\usepackage{bm}
\usepackage[caption=false]{subfig}
\allowdisplaybreaks[4]


\ifpdf
\hypersetup{
  pdftitle={Thermodynamically consistent modelling and simulation of the moving contact line problem in non-isothermal compressible two-phase flows},
  pdfauthor={Junkai Wang, and Qiaolin He}
}
\fi

\externaldocument[][nocite]{non_iso_mcl_supplement}



\begin{document}

\maketitle

\begin{abstract}
According to the dynamic van der Waals theory, we propose a thermodynamically consistent model for non-isothermal compressible two-phase flows with contact line motion.
In this model, fluid temperature is treated as a primary variable, characterized by the proposed temperature equation instead of being obtained from intermediate variables such as total energy density, internal energy density and entropy density.
The hydrodynamic boundary conditions, which represent a generalization of the generalized Navier slip boundary condition in non-isothermal flows, are imposed on the proposed model. 
We then develop the dimensionless form of the model and prove that it rigorously satisfies the first and second laws of thermodynamics.
Two numerical schemes based on the dimensionless system are constructed: one is fully coupled and thermodynamically consistent, namely strictly satisfying the temporally discrete first and second laws of thermodynamics; the other, designed by extending the multiple scalar auxiliary variable approach to entropy production, is decoupled, linear, and unconditionally entropy-stable.
Several numerical results are presented to validate the effectiveness and stability of the proposed method.
\end{abstract}

\begin{keywords}
dynamic van der Waals theory, non-isothermal flow, moving contact line, thermodynamical consistency, entropy stability
\end{keywords}

\begin{MSCcodes}
76T10, 76N20, 80M99, 65Z05
\end{MSCcodes}

\section{Introduction}
\label{sec:intro}
In most traditional phase transition theories, including those of dynamics, the fluid temperature is assumed to be homogeneous, i.e., independent of space and time.
However, non-isothermal two-phase (gas-liquid) flows are ubiquitous in scientific and engineering fields, such as oil reservoir, boiling, evaporation, and condensation.
In recent years, there has been growing interest in exploring the models of non-isothermal two-phase flows.
The dynamic van der Waals theory (DVDWT) \cite{van1979thermodynamic,onuki2005dynamic,onuki2007dynamic}, presented for one-component fluids, is generalized by including gradient contributions, enabling the description of two-phase hydrodynamics involving the gas-liquid transition in inhomogeneous temperature.
Building on the DVDWT, a continuum mechanics modelling framework based on the laws of thermodynamics for liquid-vapor flows is proposed in \cite{liu2015liquid}.
Recently, a non-isothermal gas-liquid flow model with the Peng-Robinson equation of state (EoS) \cite{peng1976new} has been derived in \cite{kou2018thermodynamically2,kou2018entropy}.

For modelling of non-isothermal flows, it is essential that models satisfy the fundamental laws of thermodynamics.
Specifically, the first law of thermodynamics, namely energy balance law, is a fundamental physical principle, and the second law of thermodynamics describes the entropy production of realistic irreversible processes. 
Notably, the energy dissipation law for isothermal systems, which is widely admitted in phase-field model, plays a same role as the balance equation for entropy in non-isothermal systems.
Thus, in non-isothermal systems, the entropy balance equation can be used to derive constitutive relations consistent with the second law of thermodynamics. 

In the literature, there are mainly three modelling frameworks for non-isothermal flows. 
The first framework includes the conservation equations for mass, momentum, and total energy, in which the fluid temperature is determined by the total energy density \cite{liu2015liquid,kou2018entropy,kou2025structure}.
Then, the second framework consists of the mass, momentum, and internal energy equations, where fluid temperature is characterized by the internal energy density \cite{kou2018thermodynamically2}.
As suggested by \cite{teshigawara2008droplet}, this framework leads to an artificial parasitic flow \cite{lafaurie1994modelling,jamet2002theory}.
To avoid this issue, the third framework replaces the internal energy equation with the entropy equation \cite{onuki2007dynamic,xu2010contact}, where fluid temperature is evolved by the entropy density. 
However, in numerical simulations, these frameworks introduce additional computational costs due to the use of intermediate variables.

The moving contact line (MCL) problem, where the fluid-fluid interface intersects the solid wall, is a classical problem that occurs in many physical phenomena. 
It is well known that classical hydrodynamical models with a no-slip boundary condition lead to nonphysical singularity in the vicinity of the contact line \cite{qian2006molecular}.
A phase-field model with the generalized Navier boundary condition (GNBC) \cite{qian2003molecular}, has effectively addressed this issue for immiscible flow over flat surfaces. 
Then, enormous efforts have been made to develop effective numerical schemes to approximate this model, for instance, \cite{qian2006variational,he2011least,gao2012gradient,gao2014efficient,xu2023unified} and the references therein.
Particularly, based on the DVDWT, \cite{xu2010contact} has applied the diffuse-interface modelling to the study of contact line motion in non-isothermal gas-liquid systems, with the fluid slip fully taken into account.
Based on \cite{xu2010contact}, the hydrodynamic boundary conditions have been derived in \cite{liu2012hydrodynamic,xu2012droplet,xu2014single} for the non-isothermal and heterogeneous fluid-solid interface, which are able to describe velocity slip and temperature slip (Kapitza resistance) that contribute to interfacial entropy production. 
Recently, assuming a homogeneous temperature and employing the tangential force balance in boundary layer, a compressible two-phase flow model with GNBC based on the realistic EoS has been proposed in \cite{wang2022energy,wang2024cicp}.

In the numerical simulation of non-isothermal flow models, a primary challenge is to construct efficient numerical schemes that satisfy the discrete laws of thermodynamics.
Another key challenge is the strong nonlinearity and tight coupling between variables.
For phase-field models, there has been a sizable body of literature exploring the energy-stable and linear schemes, including the convex splitting approach \cite{elliott1993global,eyre1998unconditionally,wang2025nme}, and the stabilized approach \cite{shen2010numerical}. 
However, extending these popular approaches to non-isothermal models remains challenging.
To address such problems, building on the convex splitting approach, \cite{kou2018thermodynamically2,kou2018entropy} designed a novel thermodynamically stable numerical scheme for a non-isothermal gas-liquid flow model and rigorously proved that the proposed method satisfies both the first and second laws of thermodynamics.
Although these schemes successfully decouple the discrete mass and momentum balance equations, the tight coupling between number density and fluid temperature persists, leading to the requirement of a linearized iterative method. 

The scalar auxiliary variable (SAV) approach \cite{shen2018scalar} and its extended methods \cite{cheng2018multiple,cheng2020new}, are notable works in the phase-field model for constructing efficient and accurate energy-stable schemes for nonlinear dissipative systems.
Recently, \cite{wang2022energy,wang2024cicp} have employed these methods for both Helmholtz free energy and surface free energy to design energy-stable, decoupled, and linear schemes in isothermal compressible two-phase flows with GNBC based on a realistic EoS.
In particular, \cite{liu2024efficient} has extended the multiple SAV (MSAV) approach \cite{cheng2018multiple} to study the Cahn-Hilliard equation with dynamic boundary conditions.
However, when the general slip boundary conditions are introduced into the non-isothermal flow models, especially the fluid-solid interfaces are assumed to be non-isothermal and heterogeneous, it is a challenge to design efficient, fully decoupled, and thermodynamically consistent numerical schemes. 

A main objective of this paper is to propose a new thermodynamically consistent non-isothermal two-phase model, which consists of equations for number density, fluid velocity, and fluid temperature.
The dimensionless form of proposed model is proved to satisfy the fundamental laws of thermodynamics.
Notably, the proposed temperature equation can accommodate more general fluid motions beyond MCL, which is demonstrated in numerical simulations.
For MCL problems, we introduce general slip boundary conditions on both isothermal and non-isothermal fluid-solid interfaces.

Another purpose of this paper is to construct efficient numerical schemes that satisfy the temporally discrete laws of thermodynamics.
Based on the dimensionless system, we develop two numerical schemes.
The first is fully coupled and thermodynamically consistent.
The second is designed by extending the MSAV approach for the part of bulk entropy and surface entropy, and it enjoys several advantages: decoupled, linear, and unconditionally entropy-stable.
We present ample numerical tests to show efficiency and stability of the second scheme for various fluid-solid interfaces.

The remaining sections of this paper are organized as follows.
In \cref{sec:Derivation of the mathematical model}, the DVDWT and hydrodynamic boundary conditions are first reviewed.
Then, a new thermodynamically consistent model is derived.
In \cref{sec:The dimensionless model and thermodynamical consistency}, we develop a dimensionless system for the proposed model and prove its thermodynamical consistency. 
The thermodynamically consistent numerical schemes are constructed in \cref{sec:Thermodynamically consistent numerical schemes}. 
Numerical experiments are implemented and the results are presented in \cref{sec:Numerical experiments}. 
The paper is concluded in \cref{sec:conclusions} with a few remarks.

\section{Derivation of the mathematical model}
\label{sec:Derivation of the mathematical model}

\subsection{Dynamic van der Waals theory}
\label{subsec:Dynamic van der Waals theory}
By fundamental laws of thermodynamics, for a one-component homogeneous system in equilibrium, we have the Euler equation, Gibbs equation and Gibbs-Duhem equation \cite{firoozabadi1999thermodynamics,onuki2002phase}:
\begin{equation}\label{eq:threemaineq}
    \begin{aligned}
        e-Tns+p-n\mu=0,\quad
        d(ns)=\frac{1}{T}de-\frac{\mu}{T}dn,\quad
        -nsdT+dp-nd\mu=0,
    \end{aligned}
\end{equation}
in which $n$, $e$, $s$, $T$, $p$ and $\mu$ are the number density, internal energy density, entropy per particle, fluid temperature, pressure, and chemical potential per particle, respectively. 
Choosing $n$ and $T$ as the independent state variables, we obtain the Helmholtz free energy density (defined by $f=e-Tns$) satisfies \cite{onuki2007dynamic}
\begin{equation*}
    \begin{aligned}
        df=de-nsdT-Td(ns)=-nsdT+\mu dn.
    \end{aligned}
\end{equation*}
The Helmholtz free energy density $f$ as a function of $n$ and $T$ is given by \cite{van1979thermodynamic,onuki2007dynamic}
\begin{equation*}
    \begin{aligned}
        f=\frac{3nk_{B}T}{2}-nk_{B}T\ln{\left(\left(\frac{k_{B}T}{\epsilon}\right)^{\frac{3}{2}}\left(\frac{1}{v_{0}n}-1\right)\right)}-\epsilon v_{0}n^{2},
    \end{aligned}
\end{equation*}
where $k_{B}$ is the Boltzmann constant, $v_{0}$ is the molecular volume, and $\epsilon$ is the strength of attractive interaction. 
From the definition of $f$, we can derive $e$, $s$, and $p=n\mu-f$ as functions of $n$ and $T$ \cite{onuki2005dynamic,xu2012droplet}:
\begin{equation*}
    \begin{aligned}
        e=\frac{3nk_{B}T}{2}-\epsilon v_{0}n^{2}, \quad
        s=k_{B}\ln{\left(\left(\frac{k_{B}T}{\epsilon}\right)^{\frac{3}{2}}\left(\frac{1}{v_{0}n}-1\right)\right)}, \quad
        p=\frac{nk_{B}T}{1-v_{0}n}-\epsilon v_{0}n^{2}.
    \end{aligned}
\end{equation*}
Using \eqref{eq:threemaineq}, the chemical potential $\mu$ changes as
\begin{equation*}
    \begin{aligned}
        \mu=-T\frac{\partial(ns)}{\partial n}+\frac{\partial e}{\partial n}
        =\frac{3k_{B}T}{2}-k_{B}T\ln{\left(\left(\frac{k_{B}T}{\epsilon}\right)^{\frac{3}{2}}\left(\frac{1}{v_{0}n}-1\right)\right)}+\frac{k_{B}T}{1-v_{0}n}-2\epsilon v_{0}n.
    \end{aligned}
\end{equation*}
The van der Waals theory is generalized by including gradient contributions to the Helmholtz free energy, internal energy and entropy \cite{onuki2007dynamic}:
    \begin{equation}\label{eq:feSdefine}
    \begin{aligned}
        f_{b}=f+\frac{M(n,T)}{2}\vert\nabla n\vert^{2}, \quad 
        \hat{e}=e+\frac{K(n)}{2}\vert\nabla n\vert^{2}, \quad 
        \hat{S}=ns-\frac{C(n)}{2}\vert\nabla n\vert^{2}.
    \end{aligned}
    \end{equation}
Here, $M$, $K$ and $C$ are coefficients with relation: $M=K+CT$. 
The entropy in bulk fluid region is defined as $S_{b}=\int_{\Omega} \hat{S}d\mathbf{x}$. 
Then, the fluid temperature $T$ and generalized chemical potential $\hat{\mu}$ are given by
\begin{equation}\label{eq:muTdefine}
    \begin{aligned}
        \frac{1}{T}=\left(\frac{\delta S_{b}}{\delta e}\right)_{n}, \quad 
        \hat{\mu}=-T\left(\frac{\delta S_{b}}{\delta n}\right)_{\hat{e}}=\mu+\frac{M_{n}}{2}|\nabla n|^{2}-T\nabla\cdot\left(\frac{M}{T}\nabla n\right),
    \end{aligned}
\end{equation}
where the subscripts denote $n$ or $\hat{e}$ is fixed in the derivatives and $M_{n} = (\partial M/\partial n)_{T}$.
Regarding $S_{b}$ as a functional of $n$ and $\hat{e}$, we consider small changes $\delta n$ and $\delta \hat{e}$ which yield an incremental change of $S_{b}$. 
Using \eqref{eq:threemaineq}, \eqref{eq:feSdefine} and \eqref{eq:muTdefine}, we obtain \cite{onuki2007dynamic}
\begin{equation}\label{eq:entropyvar}
    \begin{aligned}
        \delta S_{b}
        =&\int_{\Omega} \left(\frac{1}{T}\delta \hat{e} -\frac{\hat{\mu}}{T}\delta n  - \nabla\cdot\left(\frac{M}{T}\nabla n\delta n\right)\right) d\mathbf{x}.
    \end{aligned}
\end{equation}
In \eqref{eq:entropyvar}, maximizing the entropy $S_{b}$ for fixed particle number $N=\int_{\Omega} nd\mathbf{x}$ and fixed internal energy $E_{b}=\int_{\Omega}\hat{e}d\mathbf{x}$ yields the bulk equilibrium conditions: ({\romannumeral1}) the homogeneity of $T$, and ({\romannumeral2}) the homogeneity of $ \hat{\mu}$. 
When $T$ is homogeneous, the bulk Helmholtz free energy functional $F_{b}=E_{b}-TS_{b}$ is given by $F_{b}=\int_{\Omega} f_{b} d\mathbf{x}$, with $f_{b}=\hat{e}-T\hat{S}$. 
Minimizing $F_{b}$ with respect to $n$ also yields homogeneity of $\hat{\mu}$.
We then introduce the generalized pressure $\hat{p}$ by the generalized Euler equation \cite{onuki2002phase}: 
\begin{equation}\label{eq:generalizedEuler}
    \begin{aligned}
        \hat{e}-T\hat{S}+\hat{p}-n\hat{\mu}=0,
    \end{aligned}
\end{equation}
from which we obtain $\hat{p}=p-M|\nabla n|^{2}/2+nM_{n}|\nabla n|^{2}/2-nT\nabla n \cdot\nabla(M/T)-Mn\Delta n$.

\subsection{Rate of entropy production in the bulk region}
\label{subsec:Rate of entropy production in the bulk region}

For two-dimensional non-isothermal compressible two-phase flows, the hydrodynamic equations are as follows: the number density $n$ obeys
\begin{equation}\label{eq:massbalance}
    \begin{aligned}
        \frac{\partial n}{\partial t}+\nabla\cdot(n\mathbf{v})+\nabla\cdot \mathbf{J}=0,
    \end{aligned}
\end{equation}
where $\mathbf{v}=\left(v_{x}, v_{z}\right)^{T}$ is the fluid velocity field, $v_{x}$ and $v_{z}$ are the velocities along $x$- and $z$-directions, respectively. 
$\mathbf{J}=-\mathcal{G}\nabla \left(\hat{\mu}/T\right)$ is the diffusion flux and $\mathcal{G}$ is the (phenomenological) mobility coefficient \cite{kou2018entropy}. 
The mass density is defined as $\rho=mn$, with $m$ being the particle mass.
The momentum density $\rho \mathbf{v}$ obeys 
\begin{equation}\label{eq:NS}
    \begin{aligned}
        \rho\left(\frac{\partial  \mathbf{v}}{\partial t}+ \mathbf{v}\cdot\nabla\mathbf{v}\right)+m\mathbf{J}\cdot\nabla\mathbf{v}=\nabla\cdot( -\Pi+\Theta)-\rho g \mathbf{z}.
    \end{aligned}
\end{equation}
Here, $g$ is the gravitational acceleration in the direction of $-\mathbf{z}$, $\Pi=\hat{p}\mathbf{I}+M\nabla n\otimes\nabla n$ is the reversible stress tensor, $\Theta=\eta D(\mathbf{v})+(\xi-2\eta/3)(\nabla\cdot\mathbf{v})\mathbf{I}$ is the viscous stress tensor \cite{onuki2007dynamic}, $D(\mathbf{v})=\nabla\mathbf{v}+\nabla\mathbf{v}^{T}$, $\eta$ and $\xi$ denote the shear and bulk viscosities, respectively. 
Moreover, including the kinetic part, the total energy density is defined as
\begin{equation}\label{eq:energydefine}
    \begin{aligned}
        e_{T}=\hat{e}+\frac{\rho}{2}\vert\mathbf{v}\vert^{2}.
    \end{aligned}
\end{equation}
Then $e_{T}$ is governed by \cite{landau1987fluid,liu2015liquid,kou2018entropy}
\begin{equation}\label{eq:energy}
    \begin{aligned}
        \frac{\partial e_{T}}{\partial t}+\nabla\cdot(e_{T}\mathbf{v})=-\nabla\cdot\left(\left(\Pi-\Theta\right)\cdot\mathbf{v}\right)-\nabla\cdot\mathbf{q}-\rho g v_{z},
    \end{aligned}
\end{equation}
with $\mathbf{q}=-\lambda\nabla T$ being the heat current density and $\lambda$ being the thermal conductivity. 
Using \eqref{eq:massbalance}, \eqref{eq:NS}, \eqref{eq:energydefine} and \eqref{eq:energy}, the equation for $\hat{e}$ can be derived as \cite{onuki2007dynamic, kou2018thermodynamically2}
\begin{equation}\label{eq:intenal}
    \begin{aligned}
        \frac{\partial \hat{e}}{\partial t}+\nabla\cdot(\hat{e}\mathbf{v})=-\Pi:\nabla \mathbf{v}+\dot{\epsilon}_{v}-\nabla\cdot\mathbf{q},
    \end{aligned}
\end{equation}
where $\dot{\epsilon}_{v}=\Theta:\nabla \mathbf{v}\geq 0$ is the rate of viscous heat production.
For simplicity of analysis and simulation, some assumptions are imposed on the parameters: 
(\romannumeral 1) the gravity is neglected, 
(\romannumeral 2) In \eqref{eq:feSdefine}, $C$ is a positive constant and $K$ vanishes, thus $M=CT$, 
(\romannumeral 3) the mobility coefficient $\mathcal{G}$ is a positive constant, and 
(\romannumeral 4) the shear viscosity, bulk viscosity, and thermal conductivity are proportional to $n$, with $\eta=\xi=\nu mn$ and $\lambda=\nu k_{B}n$, where $\nu$ is the kinematic viscosity independent of $n$ \cite{onuki2007dynamic}.

Based on the second law of thermodynamics (the positive definiteness of entropy production rate), we provide below a brief review of the derivation of the equation for $\hat{S}$ and the hydrodynamic boundary conditions \cite{xu2010contact,xu2012droplet}. 
Firstly, we introduce the following lemma from \cite{onuki2005dynamic}:
\begin{lemma}\label{lma:gradinet_p}
    The reversible stress tensor $\Pi$, internal energy $\hat{e}$, and chemical potential $\hat{\mu}$ satisfy the relation
    \begin{equation}\label{eq:gradinet_p}
    \begin{aligned}
        \nabla \cdot\frac{\Pi}{T}=n\nabla\frac{\hat{\mu}}{T}-\hat{e}\nabla \frac{1}{T},\quad\text{i.e.,}\quad\nabla\frac{\hat{p}}{T}+\nabla\cdot\left(C\nabla n \otimes \nabla n\right)=n\nabla\frac{\hat{\mu}}{T}-\hat{e}\nabla \frac{1}{T}.
    \end{aligned}
\end{equation}
\end{lemma}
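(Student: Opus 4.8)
The plan is to prove the second (equivalent) form of \eqref{eq:gradinet_p}, since it is the more explicit one; the equivalence with the first form is immediate from $\Pi/T=(\hat{p}/T)\mathbf{I}+C\nabla n\otimes\nabla n$, which follows from $\Pi=\hat{p}\mathbf{I}+M\nabla n\otimes\nabla n$ and $M=CT$. Under the standing assumptions ($K=0$, $M=CT$ with $C$ a positive constant, hence $M_{n}=0$ and $\hat{e}=e$), I would first record the simplified capillary quantities $\hat{\mu}/T=\mu/T-C\Delta n$ and $\hat{p}/T=p/T-\frac{C}{2}|\nabla n|^{2}-Cn\Delta n$, read off from the formulas for $\hat{\mu}$ and $\hat{p}$ in \cref{subsec:Dynamic van der Waals theory}. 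This makes the capillary contributions explicit and isolates a single third-order term $-Cn\nabla\Delta n$ that will appear on both sides.

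Next I would establish the classical, gradient-free identity $\nabla(p/T)=n\nabla(\mu/T)-e\nabla(1/T)$. Treating $p$, $\mu$, $e$, $s$ as functions of the state variables $n$ and $T$, I expand $d(p/T)$, $d(\mu/T)$ and $d(1/T)$ by the chain rule, use the Gibbs-Duhem relation $dp=ns\,dT+n\,d\mu$ from \eqref{eq:threemaineq} to eliminate $dp$, and collect the coefficients of $d\mu$ and $dT$. The $d\mu$ terms cancel immediately, and equating the $dT$ coefficients reduces precisely to the Euler equation $e-Tns+p-n\mu=0$, which closes the identity.

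Finally I would promote this to the generalized statement by a direct component-wise computation. Writing the tensor divergence in index form gives $\nabla\cdot(C\nabla n\otimes\nabla n)=C(\nabla n\cdot\nabla)\nabla n+C(\Delta n)\nabla n$, while differentiating the simplified $\hat{p}/T$ yields $\nabla(\hat{p}/T)=\nabla(p/T)-\frac{C}{2}\nabla|\nabla n|^{2}-C\nabla(n\Delta n)$. Expanding $\frac{1}{2}\nabla|\nabla n|^{2}=(\nabla n\cdot\nabla)\nabla n$ and $\nabla(n\Delta n)=(\Delta n)\nabla n+n\nabla\Delta n$, I expect every second-order capillary term to cancel, leaving $\nabla(p/T)-Cn\nabla\Delta n$ on the left. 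On the right, $n\nabla(\hat{\mu}/T)-\hat{e}\nabla(1/T)=n\nabla(\mu/T)-Cn\nabla\Delta n-e\nabla(1/T)$, so the $-Cn\nabla\Delta n$ terms cancel and the remainder is exactly the classical identity proved in the previous step.

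The main obstacle is the bookkeeping of the capillary terms: one must track the tensor divergence $\nabla\cdot(C\nabla n\otimes\nabla n)$ and the term $-C\nabla(n\Delta n)$ carefully and confirm that all second-order contributions in $\nabla n$ cancel, so that only the single third-order term $-Cn\nabla\Delta n$ survives to match across the equation. If the general coefficients were retained (nonconstant $M$, nonzero $K$), the cancellation would additionally involve $M_{n}$ and $\nabla(M/T)$, and the reduction to the classical identity would be correspondingly more delicate.
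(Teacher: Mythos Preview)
Your proposal is correct and follows essentially the same route as the paper's proof: reduce to the classical Gibbs--Duhem/Euler identity $\nabla(p/T)=n\nabla(\mu/T)-e\nabla(1/T)$ from \eqref{eq:threemaineq}, and handle the capillary terms via the tensor identity $\nabla\cdot(\nabla n\otimes\nabla n)=\tfrac{1}{2}\nabla|\nabla n|^{2}+(\Delta n)\nabla n$ (your $(\nabla n\cdot\nabla)\nabla n$ is exactly $\tfrac{1}{2}\nabla|\nabla n|^{2}$). The paper's proof is simply the one-line version of the computation you spell out.
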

\begin{proof}
Using \eqref{eq:threemaineq} and $\nabla \cdot(\nabla n\otimes \nabla n)=\nabla(\left\vert\nabla n\right\vert^{2})/2+\Delta n\nabla n$, we obtain \eqref{eq:gradinet_p}.
\end{proof}
\begin{remark}
    \cref{lma:gradinet_p} is a generalization of $d(p/T)=-ed(1/T)+nd(\mu/T)$ which follows from Gibbs-Duhem equation. 
    Moreover, in our model, the relation $\nabla p=n\nabla \mu+ns\nabla T$ can be derived and generalized to $\nabla\cdot\Pi=n\nabla\hat{\mu}+\hat{S}\nabla T+C|\nabla n|^{2}\nabla T$.
\end{remark}
From \eqref{eq:entropyvar}, the time derivative of $S_{b}$ becomes
\begin{equation}\label{eq:dSbdt}
    \begin{aligned}
       \frac{\partial S_{b}}{\partial t}= \int_{\Omega}\frac{\partial \hat{S}}{\partial t}d\mathbf{x}
       =\int_{\Omega} \left(\frac{1}{T}\frac{\partial \hat{e}}{\partial t} -\frac{\hat{\mu}}{T}\frac{\partial n}{\partial t} -\nabla\cdot\left(C\nabla n\frac{\partial n}{\partial t}\right)\right)d\mathbf{x}.
    \end{aligned}
\end{equation}
Substituting \eqref{eq:generalizedEuler}, \eqref{eq:massbalance}, \eqref{eq:intenal}, \eqref{eq:gradinet_p} into \eqref{eq:dSbdt}, $\hat{S}$ can be found to satisfy
\begin{align}
    \frac{\partial \hat{S}}{\partial t}
    =&-\nabla\cdot\left(\hat{S}\mathbf{v}\right)-\nabla\cdot\left(C\nabla n\left(\frac{\partial n}{\partial t}+\mathbf{v}\cdot\nabla n\right)+\frac{\mathbf{q}}{T}\right)+\frac{\dot{\epsilon}_{v}+\dot{\epsilon}_{\theta}}{T}+\frac{\hat{\mu}}{T}\nabla\cdot \mathbf{J}, \label{eq:entropydensitysatisfy}
\end{align}
in which $\dot{\epsilon}_{\theta}=\lambda(\nabla T)^{2}/T\geq 0$ is the rate of thermal heat production, and $(\dot{\epsilon}_{v}+\dot{\epsilon}_{\theta})/T\geq 0$ is the rate of entropy production per unit volume in the bulk.
Substituting \eqref{eq:massbalance} into \eqref{eq:entropydensitysatisfy}, we obtain the entropy density equation \cite{onuki2007dynamic}:
\begin{equation}\label{eq:entropydensity}
    \begin{aligned}
        \frac{\partial \hat{S}}{\partial t}+\nabla\cdot\left(\hat{S}\mathbf{v}\right)=\nabla\cdot\left(C\nabla n\left(n\nabla\cdot\mathbf{v}+\nabla\cdot \mathbf{J}\right)-\frac{\mathbf{q}}{T}\right)+\frac{\dot{\epsilon}_{v}+\dot{\epsilon}_{\theta}}{T}+\frac{\hat{\mu}}{T}\nabla\cdot \mathbf{J}.
    \end{aligned}
\end{equation}

\subsection{Derivation of the hydrodynamic boundary conditions}
\label{subsec:Derivation of the hydrodynamic boundary conditions}
In this paper, the solid substrates are assumed to be flat, rigid, impermeable, insoluble and non-isothermal (denoted by $\Gamma$).
The surface energy $E_{s}=\int_{\Gamma} e_{s}(n)dA$ and surface entropy $S_{s}=\int_{\Gamma}\sigma_s(n)dA$ are introduced, respectively, where the densities $e_{s}$ and $\sigma_{s}$ only depend on $n$ at the solid surface \cite{xu2012droplet}. 
Similar to the Helmholtz bulk free energy, we define the Helmholtz surface free energy and density as
\begin{equation}\label{eq:fsdefine}
    \begin{aligned}
         F_{s}=\int_{\Gamma} f_{s}(n,T)dA, \quad 
         f_{s}(n,T)=e_{s}-T\sigma_{s}.
    \end{aligned}
\end{equation}
By using \eqref{eq:fsdefine}, $\sigma_{s}=-(\partial f_{s}/\partial T)_{n}$ can be derived, and hence
\begin{align}
    df_{s}=-\sigma_{s}dT+\left(\frac{\partial f_{s}}{\partial n}\right)_{T}dn, \quad  
    d\sigma_{s}=\frac{1}{T}de_{s}-\frac{1}{T}\left(\frac{\partial f_{s}}{\partial n}\right)_{T}dn. \label{eq:Gibbs-typeequation}
\end{align}
Notice that the second equation in \eqref{eq:Gibbs-typeequation} yields a Gibbs-type equation on fluid-solid interface.
Similarly, minimizing the total Helmholtz free energy (defined as $F_{\text{tot}}=F_{b}+F_{s}$) with respect to $n$ at the surface and using \eqref{eq:entropyvar} yields the fluid-solid interface equilibrium condition:  $L:=(\partial f_{s}/\partial n)_{T}+CT\partial_{\gamma}n=0$ \cite{qian2006molecular}, in which $\bm{\gamma}$ is the outward normal unit vector, and $L$ is a quantity defined at the fluid-solid interface \cite{xu2010contact}.

The fluid-solid interfacial tension (denoted by $\gamma_{fs}$) satisfies a Euler-type equation: $e_{s}-T\sigma_{s}-\gamma_{fs}=0$, which leads to $\gamma_{fs}=f_{s}$ \cite{rowlinson2013molecular}.
It is assumed that the surface stress $\Xi$ and surface heat flux $q_{s}$ (tangent to the surface) exist at fluid-solid interface. 
The equation for internal energy balance at interface is given by \cite{xu2012droplet}
\begin{equation}\label{eq:ebalanceinterface}
    \begin{aligned}
        \frac{\partial e_{s}}{\partial t}+\partial_{\tau}\left(e_{s}v_{\tau}\right)=\partial_{\tau}\left(\Xi_{\gamma\tau}v_{\tau}\right)-\partial_{\tau}q_{s}+\mathbf{q}\cdot\bm{\gamma}-\mathbf{q}_{w}\cdot\bm{\gamma}-\left(-\Pi_{\gamma\tau}+\Theta_{\gamma\tau}\right)v_{\tau},
    \end{aligned}
\end{equation}
where $\Pi_{\gamma\tau}=CT\partial_{\gamma}n\partial_{\tau}n$, $\Theta_{\gamma\tau}=\eta \partial_{\gamma}v_{\tau}$, $v_{\tau}=\mathbf{v}\cdot\bm{\tau}$ and $\bm{\tau}$ denoting the direction tangent to the fluid-solid interface.
Here, $\Xi$ is assumed only has the reversible part (no surface viscosity) arising from the interfacial tension $\gamma_{fs}$, i.e., $\Xi_{\gamma\tau}=\gamma_{fs}=f_{s}$.
And $\mathbf{q}_{w}=-\lambda_{w}\nabla T_{w}$ is the heat flux in the solid substrate with $\lambda_{w}$ being the heat conductivity of solid and $T_{w}$ being the temperature of solid substrate.
The surface heat flux $q_{s}$ will be derived in \eqref{eq:qsdefine} through the second law of thermodynamics.
Substituting \eqref{eq:ebalanceinterface} into \eqref{eq:Gibbs-typeequation}, the balance equation for $\sigma_{s}$ can be derived as \cite{liu2012hydrodynamic}
\begin{align}
    \frac{\partial \sigma_{s}}{\partial t}+\partial_{\tau}\left(\sigma_{s}v_{\tau}\right)
    =&-\partial_{\tau}\frac{q_{s}}{T}-\frac{\mathbf{q}_{w}\cdot\bm{\gamma}}{T_{w}}+\Psi_{b}-\frac{1}{T^{2}}q_{s}\partial_{\tau}T-\left(\frac{1}{T}-\frac{1}{T_{w}}\right)\mathbf{q}_{w}\cdot\bm{\gamma}\label{eq:dsigmasdtfinal}\\
    &+\frac{1}{T}\left(\partial_{\tau}f_{s}+\Pi_{\gamma\tau}-\Theta_{\gamma\tau}\right)v_{\tau}-\frac{L}{T}\left(\frac{\partial n}{\partial t}+v_{\tau}\partial_{\tau}n\right). \nonumber
\end{align}
Here, $\Psi_{b}=\mathbf{q}\cdot\bm{\gamma}/T+C\partial_{\gamma}n(\partial n/\partial t+v_{\tau}\partial_{\tau}n)$ is total entropy flux from bulk fluid region in \eqref{eq:entropydensitysatisfy}, and the last four terms on the right-hand side of \eqref{eq:dsigmasdtfinal} are the rate of entropy production per unit area at fluid-solid interface, which must be positive definite (the second law of thermodynamics).
Thus, the formulation of $q_{s}$ and three constitutive relations can be derived as (without cross coupling term)
\begin{subequations}
    \begin{align}
        q_{s}&=-\lambda_{s}\partial_{\tau}T, \label{eq:qsdefine} \\ 
        \kappa\lambda_{w}\partial_{\gamma}T_{w}&=\frac{1}{T}-\frac{1}{T_{w}}, \label{eq:Twbcd} \\
        \beta v_{\tau}&=\partial_{\tau}f_{s}+\Pi_{\gamma\tau}-\Theta_{\gamma\tau}, \label{eq:GNBC} \\
        -\frac{L}{T}&=\alpha\left(\frac{\partial n}{\partial t}+v_{\tau}\partial_{\tau}n\right). \label{eq:RBC}
    \end{align}
\end{subequations}
Here, $\lambda_{s}\geq 0$ is the surface heat conductivity, $\kappa\geq0$ is the interfacial parameter related to the interfacial thermal resistance (i.e., the Kapitza resistance) \cite{barrat2003kapitza}, $\beta\geq 0$ is the slip coefficient, and $\alpha\geq 0$ is a damping relaxation coefficient. 
Substituting \eqref{eq:qsdefine} and \eqref{eq:Twbcd} into \eqref{eq:ebalanceinterface}, we obtain the boundary condition applicable to $T$ as
\begin{multline}\label{eq:Tbcd}
    -\lambda \partial_{\gamma}T+\kappa^{-1}\left(\frac{1}{T}-\frac{1}{T_{w}}\right) \\
    =\frac{\partial e_{s}}{\partial t}+\partial_{\tau}\left(T\sigma_{s}v_{\tau}\right)+v_{\tau}\partial_{\tau}f_{s}-\partial_{\tau}\left(\lambda_{s}\partial_{\tau}T\right)+v_{\tau}\left(-\partial_{\tau}f_{s}-\Pi_{\gamma\tau}+\Theta_{\gamma\tau}\right).
\end{multline}

When the solid wall is moving with velocity $v_{w}$, the slip velocity is defined by $v_{\tau}^{slip}=v_{\tau}-v_{w}$ and \eqref{eq:GNBC} becomes $\beta v_{\tau}^{slip}=\partial_{\tau}f_{s}+\Pi_{\gamma\tau}-\Theta_{\gamma\tau}$. 
Summing \eqref{eq:entropydensitysatisfy} and \eqref{eq:dsigmasdtfinal}, the rate of total entropy production (defined as $S_{\text{tot}}=S_{b}+S_{s}$) is given by
\begin{align}
    \frac{dS_{\text{tot}}}{dt}
    =&\int_{\Omega}\left(\frac{\dot{\epsilon}_{v}+\dot{\epsilon}_{\theta}}{T}+\mathcal{G}\left\vert\nabla\frac{\hat{\mu}}{T}\right\vert^{2}\right)d\mathbf{x} \label{eq:dStotdt}\\
    &+\int_{\Gamma}\left(\frac{\lambda_{s}}{T^{2}}\left(\partial_{\tau}T\right)^{2}+\kappa^{-1}\left(\frac{1}{T}-\frac{1}{T_{w}}\right)^{2}+\frac{\beta}{T}\left(v_{\tau}^{slip}\right)^{2}+\frac{1}{\alpha }\left(\frac{L}{T}\right)^{2}\right)dA \nonumber\\
    &+\int_{\Gamma} \frac{\beta}{T}v_{w}v_{\tau}^{slip}dA-\int_{\Gamma}\left(\partial_{\tau}\frac{q_{s}}{T}+\frac{1}{T_{w}}\mathbf{q}_{w}\cdot\bm{\gamma}+\partial_{\tau}\left(\sigma_{s}v_{\tau}\right)\right)dA,\nonumber
\end{align}
where boundary conditions $v_{\gamma} = 0$ and $\partial_{\gamma}(\hat{\mu}/T)=0$ are also assumed at the fluid-solid interface. 
In \eqref{eq:dStotdt}, the term $\beta v_{w}v_{\tau}^{slip}/T$ is the entropy done per unit time by the flow to the wall and the last term is the outgoing entropy flux.

Furthermore, some additional assumptions are imposed on the parameters: 
(\romannumeral 1) the damping coefficient $\alpha$ is a constant,
(\romannumeral 2) the coefficients at the solid surface are extrapolated via $\zeta(n)=(\zeta_{l}-\zeta_{g})(n-n_{g})/(n_{l}-n_{g})+\zeta_{g}$, $\zeta = \beta$, $\kappa$, $\lambda_{s}$,
where $\zeta_{g}$ and $\zeta_{l}$ are the values of a surface parameter in the homogeneous gas and liquid phases, respectively.
And $n_{g}$ and $n_{l}$ are the number densities of the homogeneous gas and liquid phases in gas-liquid coexistence equilibrium, respectively.
(\romannumeral 3) the surface energy density $e_{s}=0$ and the surface entropy density is defined as $\sigma_{s} = -a_{s}(n-n_{c})$, where $n_{c}$ is the critical density and $a_{s}$ is a constant, determined by the short-range fluid-solid interaction. 
Hence, the surface free energy density is given by \cite{briant2004lattice,xu2012droplet}
\begin{equation*}
    \begin{aligned}
        f_{s}=a_{s}T(n-n_{c}).
    \end{aligned}
\end{equation*}
It is confirmed that the static contact angle $\theta_{s}^{surf}$ at the fluid-solid interface (for the liquid phase) can be determined by $a_{s}$ \cite{xu2014single}.
Notably, $f_{s}$ can be defined in other forms, such as quadratic polynomial or trigonometric functions of $n$ and $T$ \cite{qian2006molecular,gao2012gradient,wang2022energy,wang2024cicp}. 
\begin{remark}\label{rmk:tem_dirichlet_bcd}
For the case that fluid-solid interface is isothermal \cite{xu2010contact}, with no temperature discontinuity across the fluid-solid interface, \eqref{eq:Tbcd} reduces to a Dirichlet boundary condition: $T=T_{w}$ with the solid temperature $T_{w}$ being fixed and homogeneous.
Meanwhile, \eqref{eq:GNBC} reduces to
\begin{equation}
    \beta v_{\tau}^{slip}=-\eta\partial_{\gamma}v_{\tau}+L\partial_{\tau}n=-\eta\partial_{\gamma}v_{\tau}+\left(\frac{\partial f_{s}}{\partial n}+CT\partial_{\gamma}n\right)\partial_{\tau}n,
\end{equation}
which is of the same form as the GNBC \cite{qian2003molecular,qian2006molecular}.
In numerical simulations, both isothermal and non-isothermal cases of the fluid-solid interface are considered. 
\end{remark}

\subsection{A new thermodynamically consistent mathematical model}
\label{subsec:A new thermodynamically consistent mathematical model}
As previously discussed, when the total energy equation is adopted as one of governing equations \cite{liu2015liquid,kou2018entropy}, the model consists of \eqref{eq:massbalance}, \eqref{eq:NS}, \eqref{eq:energy} and definition $e_{T}=\hat{e}+\rho\vert\mathbf{v}\vert^{2}/2$ with boundary conditions \eqref{eq:RBC}, \eqref{eq:GNBC} and \eqref{eq:Tbcd}, where
\begin{equation}\label{eq:edefine}
    \begin{aligned}
        \hat{e}(n, T)=\frac{3nk_{B}T}{2}-\epsilon v_{0}n^{2}.
    \end{aligned}
\end{equation}
Obviously, this model naturally satisfies the first law of thermodynamics, but it introduces difficulties in constructing efficient and thermodynamically consistent numerical schemes.
When the internal energy equation is used as one of the governing equations \cite{kou2018thermodynamically2}, i.e., with $\hat{e}$ as the intermediate variable, the model becomes \eqref{eq:massbalance}, \eqref{eq:NS}, \eqref{eq:intenal} and \eqref{eq:edefine} with \eqref{eq:RBC}, \eqref{eq:GNBC} and \eqref{eq:Tbcd}. 
However, using the internal energy equation leads to artificial parasitic flow \cite{teshigawara2008droplet}.
Alternatively, if the entropy equation is used instead of the internal energy equation \cite{onuki2005dynamic,onuki2007dynamic,xu2010contact}, with $\hat{S}$ as the intermediate variable, the system includes \eqref{eq:massbalance}, \eqref{eq:NS}, \eqref{eq:entropydensity} and 
\begin{equation}\label{eq:hatSdefine}
    \begin{aligned}
        \hat{S}(n, \nabla n, T)=ns-\frac{C}{2}\vert \nabla n\vert^{2}
        =nk_{B}\ln{\left(\left(\frac{k_{B}T}{\epsilon}\right)^{\frac{3}{2}}\left(\frac{1}{v_{0}n}-1\right)\right)}-\frac{C}{2}\vert \nabla n\vert^{2},
    \end{aligned}
\end{equation}
with \eqref{eq:RBC}, \eqref{eq:GNBC} and \eqref{eq:Tbcd}. 
Although this model effectively avoids the issue of artificial parasitic flow and can easily be proved to satisfy the second law of thermodynamics, the complexity of coupling between $\hat{S}$ and $T$ in \eqref{eq:hatSdefine} also presents challenges in designing numerical schemes.
To overcome these issues, we propose a new thermodynamically consistent model where the equations for $e_{T}$, $\hat{e}$ and $\hat{S}$ are replaced by the equation for $T$, which directly characterizes the evolution of the fluid temperature.

In the following, we provide two approaches to derive the temperature equation.
First, based on \eqref{eq:edefine}, regarding $E_{b}$ as a functional of $n$ and $T$, we have
\begin{equation*}
    \begin{aligned}
        \delta E_{b}
        =&\int_{\Omega} \frac{3nk_{B}}{2}\delta Td\mathbf{x} +\int_{\Omega}\left(\frac{3k_{B}T}{2}-2\epsilon v_{0}n\right)\delta n d\mathbf{x}.
    \end{aligned}
\end{equation*}
This implies that the time derivative of $\hat{e}$ becomes
\begin{equation}\label{eq:dedt1}
    \begin{aligned}
        \frac{\partial \hat{e}}{\partial t}
        =\frac{3nk_{B}}{2}\frac{\partial T}{\partial t} +\left(\frac{3k_{B}T}{2}-2\epsilon v_{0}n\right)\frac{\partial n}{\partial t}.
    \end{aligned}
\end{equation}
Substituting \eqref{eq:massbalance}, \eqref{eq:intenal} and \eqref{eq:edefine} into \eqref{eq:dedt1}, the time derivative of $T$ satisfies
\begin{multline}\label{eq:tempequation}
    \frac{3nk_{B}}{2}\left(\frac{\partial T}{\partial t}+\mathbf{v}\cdot\nabla T\right)
            +\epsilon v_{0}n^{2}\nabla\cdot \mathbf{v}\\
            =\left(\frac{3k_{B}T}{2}-2\epsilon v_{0}n\right)\nabla\cdot\mathbf{J}-\Pi:\nabla\mathbf{v}+\Theta:\nabla\mathbf{v}-\nabla\cdot\mathbf{q}
\end{multline}
Similarly, \eqref{eq:tempequation} can be obtained by considering the variation of $S_{b}$ as
\begin{align}
    \delta S_{b}
        =&\int_{\Omega} \frac{3nk_{B}}{2T}\delta Td\mathbf{x} +\int_{\Omega}\frac{1}{T}\left(\frac{3k_{B}T}{2}-2\epsilon v_{0}n-\hat{\mu}\right)\delta n d\mathbf{x}- \int_{\Omega} C\nabla\cdot\left(\nabla n\delta n\right) d\mathbf{x}.\nonumber
\end{align}
Then, the time derivative of $\hat{S}$ becomes
\begin{equation}\label{eq:dsdtT}
    \begin{aligned}
        \frac{\partial \hat{S}}{\partial t}=\frac{3nk_{B}}{2T}\frac{\partial T}{\partial t} +\left(\frac{3k_{B}}{2}-\frac{2\epsilon v_{0}n}{T}-\frac{\hat{\mu}}{T}\right)\frac{\partial n}{\partial t}- C\nabla\cdot\left(\nabla n\frac{\partial n}{\partial t}\right).
    \end{aligned}
\end{equation}
By substituting \eqref{eq:massbalance} and \eqref{eq:entropydensity} into \eqref{eq:dsdtT}, we have
\begin{align}
    \frac{3nk_{B}}{2T}\frac{\partial T}{\partial t}
    =&-\nabla\cdot\left(\hat{S}\mathbf{v}\right)+\nabla\cdot\left(C\left(n\nabla\cdot\mathbf{v}+\nabla\cdot\mathbf{J}\right)\nabla n\right)+ C\nabla\cdot\left(\nabla n\frac{\partial n}{\partial t}\right)\nonumber\\
    &+\frac{\Theta}{T}:\nabla\mathbf{v}-\frac{1}{T}\nabla\cdot\mathbf{q}+\frac{\hat{\mu}}{T}\nabla\cdot\mathbf{J}-\left(\frac{3k_{B}}{2}-\frac{2\epsilon v_{0}n}{T}-\frac{\hat{\mu}}{T}\right)\frac{\partial n}{\partial t}\nonumber\\
    =&- \frac{1}{T}\nabla\cdot\left(\hat{e}\mathbf{v}\right)-\frac{\Pi}{T}:\nabla\mathbf{v}-\left(\frac{3k_{B}}{2}-\frac{2\epsilon v_{0}n}{T}\right)\frac{\partial n}{\partial t}+\frac{\Theta}{T}:\nabla\mathbf{v}-\frac{1}{T}\nabla\cdot\mathbf{q}.\label{eq:partofderivationofT}
\end{align}
Notice that \eqref{eq:partofderivationofT} is equivalent to \eqref{eq:dedt1}.

Finally, we arrive at a new non-isothermal compressible two-phase flow model for MCL problems. 
The equations are \eqref{eq:massbalance}, \eqref{eq:NS}, \eqref{eq:tempequation} with hydrodynamic boundary conditions \eqref{eq:RBC}, \eqref{eq:GNBC} and \eqref{eq:Tbcd} at top and bottom boundaries, and the system is closed by applying the periodic boundary condition at left and right boundaries.
\begin{remark}\label{rmk:tempequation}
    By using \cref{lma:gradinet_p}, we can reformulate \eqref{eq:tempequation} as
    \begin{multline}\label{eq:tempequation_2}
        \frac{3nk_{B}}{2}\frac{\partial T}{\partial t}+\frac{\epsilon v_{0}n^{2}}{T}\nabla\cdot\left(T\mathbf{v}\right)-nT\nabla\frac{\hat{\mu}}{T}\cdot\mathbf{v}\\
        =\left(\frac{3k_{B}}{2}T-2\epsilon v_{0}n\right)\nabla\cdot\mathbf{J}-T\nabla\cdot\left(\frac{\Pi}{T}\cdot\mathbf{v}\right)+\Theta:\nabla\mathbf{v}-\nabla\cdot\mathbf{q}.
    \end{multline}
    Actually, although \eqref{eq:tempequation_2} is equivalent to \eqref{eq:tempequation} for the time-continuous case, it avoids the difficulties in designing numerical schemes that are decoupled, linear and satisfy the temporally discrete second law of thermodynamics.
    Consequently, we adopt \eqref{eq:tempequation_2} as the governing equation for $T$ in the \cref{subsec:A decoupled linear and unconditionally entropy-stable scheme}.
\end{remark}

\section{The dimensionless model and thermodynamical consistency}
\label{sec:The dimensionless model and thermodynamical consistency}

\subsection{Dimensionless equations}
\label{subsec:Dimensionless equations}
To deduce the dimensionless form for the new system, we scale the length by $l=(C/2\varepsilon k_{B}v_{0})^{1/2}$, $\mathbf{v}$ by a constant speed $U_{0}$, $t$ by $l/U_{0}$, $n$ by $1/v_{0}$, $\hat{\mu}$ by $1/\epsilon$, $-\Pi+\Theta$ by $\epsilon/v_{0}$, $\hat{S}$ by $k_{B}/v_{0}$, $L$ by $\epsilon l$, and $T$ by $\epsilon/k_{B}$.
Using the same symbols for the dimensionless variables, the system becomes:
\begin{subequations}\label{eq:dimensionless_equations}
    \begin{align}
            &\frac{\partial n}{\partial t}+\nabla\cdot(n\mathbf{v})=\mathcal{L}_{d}\Delta\frac{\hat{\mu}}{T}, \label{eq:dimensionless_mass} \\
            &\mathcal{R}_{e}\mathcal{R}n\left(\frac{\partial  \mathbf{v}}{\partial t}+\mathbf{v}\cdot\nabla\mathbf{v}\right)-\mathcal{B}\nabla \frac{\hat{\mu}}{T}\cdot\nabla\mathbf{v}=\nabla\cdot\left(-\Pi+\mathcal{R}\Theta\right), \label{eq:dimensionless_NS} \\
            &\frac{3n}{2}\left(\frac{\partial T}{\partial t}+\mathbf{v}\cdot\nabla T\right)
        +n^{2}\nabla\cdot \mathbf{v}+\mathcal{L}_{d}\left(\frac{3}{2}T-2n\right)\Delta \frac{\hat{\mu}}{T} \label{eq:dimensionless_Temp} \\
        &=-\Pi:\nabla\mathbf{v}+\mathcal{R}\Theta:\nabla\mathbf{v}+\mathcal{R}_{e}^{-1}\nabla\cdot\left(n\nabla T\right), \nonumber
    \end{align}
\end{subequations}
in which $\hat{\mu}=3T/2+T/(1-n)-3T\ln{T}/2-T\ln{(1/n-1)}-2n-2\varepsilon T\Delta n$, $\Pi=\hat{p}\mathbf{I}+2\varepsilon T\nabla n\otimes\nabla n$, $\Theta=n(D(\mathbf{v})+(\nabla\cdot\mathbf{v})\mathbf{I}/3)$, $\hat{p}=nT/(1-n)-n^{2}-\varepsilon T\vert\nabla n\vert^{2}-2\varepsilon nT\Delta n$, and $\varepsilon$ is a dimensionless parameter characterizing the gas-liquid interfacial thickness. 
The dimensionless hydrodynamic boundary conditions are
\begin{subequations}\label{eq:dimensionless_bcd}
    \begin{align}
        &\frac{\partial n}{\partial t}+v_{\tau}\partial_{\tau}n=-\mathcal{L}_{b}\frac{L}{T},\quad
        \partial_{\gamma}\frac{\hat{\mu}}{T}=0,\label{eq:dimensionless_rbc} \\
        &\beta v_{\tau}^{slip}=\mathcal{L}_{s}\mathcal{R}^{-1}\left(-\mathcal{R}\Theta_{\gamma\tau}+\mathcal{W}\partial_{\tau} f_{s}+\Pi_{\gamma\tau}\right), \quad v_{\gamma}=0, \label{eq:dimensionless_gnbc} \\
        &-n\partial_{\gamma}T+\mathcal{L}_{\kappa}^{-1}\kappa^{-1}\left(\frac{1}{T}-\frac{1}{T_{w}}\right)+\mathcal{R}_{e}\mathcal{W}f_{s}\partial_{\tau}v_{\tau}+\mathcal{L}_{\lambda}\partial_{\tau}\left(\lambda_{s}\partial_{\tau}T\right) \label{eq:dimensionless_Tbcd}\\
        &=\mathcal{R}_{e}v_{\tau}\left(\mathcal{R}\Theta_{\gamma\tau}-\mathcal{W}\partial_{\tau} f_{s}-\Pi_{\gamma\tau}\right), \nonumber
    \end{align}
\end{subequations}
where $L=\mathcal{W}\partial f_{s}/\partial n+2\varepsilon T\partial_{\gamma}n$, $f_{s}=T(n-n_{c})$, $\partial_{\tau} f_{s}=T\partial_{\tau}n+(n-n_{c})\partial_{\tau}T$, $v_{\tau}^{slip}=v_{\tau}-v_{w}$, and $v_{w}$ is the wall speed.
The dimensionless surface coefficients in \eqref{eq:dimensionless_bcd} are $\zeta=(1-\zeta_{gl})(n-n_{g})/(n_{l}-n_{g})+\zeta_{gl}$, $\zeta=\beta$, $\kappa$, $\lambda_{s}$, where $\zeta_{gl}=\zeta_{g}/\zeta_{l}$.
Periodic boundary conditions are imposed at the left and right boundaries.
Other dimensionless parameters are defined as $\mathcal{L}_{d}=\mathcal{G}v_{0}k_{B}/U_{0}l$, $\mathcal{R}_{e}=U_{0}l/\nu$, $\mathcal{R}=U_{0} m\nu /\epsilon l$, $\mathcal{B}=mv_{0}\mathcal{G}U_{0}k_{B}/\epsilon l=\mathcal{L}_{d}\mathcal{R}_{e}\mathcal{R}$, $\mathcal{W}=a_{s}/k_{B}l$, $\mathcal{L}_{b}=k_{B}v_{0}l^{2}/\alpha U_{0}$, $\mathcal{L}_{s}=m\nu /\beta_{l}v_{0}l$, $\mathcal{L}_{\kappa}=\nu \kappa_{l}\epsilon^{2}/v_{0}k_{B}l$, $\mathcal{L}_{\lambda}=\lambda_{sl}v_{0}/\nu k_{B}l$. 
Furthermore, the dimensionless internal energy density $\hat{e}$ and entropy density $\hat{S}$ are given by
\begin{equation}\label{eq:dimensionless_Se}
    \begin{aligned}
        \hat{e}=\frac{3}{2}nT-n^{2}, \quad 
        \hat{S}=n\ln{\left(T^{\frac{3}{2}}\frac{1-n}{n}\right)}-\varepsilon \vert\nabla n\vert^{2}.
    \end{aligned}
\end{equation}

\subsection{Thermodynamical consistency} 
\label{subsec:Thermodynamical consistency}
We first prove that the above dimensionless model satisfies the first law of thermodynamics.
We define the dimensionless total energy $E_{\text{tot}}$, kinetic energy $E_{k}$ and internal energy $E_{b}$ over the domain as
\begin{align}
    E_{\text{tot}}=E_{k}+E_{b}, \quad 
    E_{\text{tot}}=\int_{\Omega} e_{T}d\mathbf{x}, \quad
    E_{k}=\mathcal{R}_{e}\mathcal{R}\int_{\Omega}\frac{1}{2}n|\mathbf{v}|^{2}d\mathbf{x}, \quad 
    E_{b}=\int_{\Omega}\hat{e}d\mathbf{x}. \nonumber
\end{align}
\begin{theorem}\label{thm:firstlawcontinue}
The system \eqref{eq:dimensionless_equations}-\eqref{eq:dimensionless_bcd} satisfies first law of thermodynamics as
\begin{equation*}
    \begin{aligned}
        \frac{\partial E_{\text{tot}}}{\partial t}=-\mathcal{R}_{e}^{-1}\int_{\Gamma} \left(\partial_{\tau}q_{s}+\mathbf{q}_{w}\cdot\bm{\gamma}+\mathcal{R}_{e}\mathcal{W}\partial_{\tau}\left(T\sigma_{s}v_{\tau}\right)\right)dA,
    \end{aligned}
\end{equation*}
where $\sigma_{s}=-(n-n_{c})$, $q_{s}=-\mathcal{L}_{\lambda}\lambda_{s}\partial_{\tau}T$ and $\mathbf{q}_{w}\cdot\bm{\gamma}=-\mathcal{L}_{\kappa}^{-1}\kappa^{-1}(1/T-1/T_{w})$.
\end{theorem}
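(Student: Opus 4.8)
The plan is to differentiate $E_{\text{tot}} = E_k + E_b$ in time, substitute the governing equations \eqref{eq:dimensionless_mass}--\eqref{eq:dimensionless_Temp}, and show that every bulk contribution either cancels between the kinetic and internal parts or collapses into a divergence that the boundary conditions \eqref{eq:dimensionless_bcd} turn into the stated surface flux.

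First I would treat the kinetic energy. Differentiating gives $\partial_t E_k = \mathcal{R}_e\mathcal{R}\int_\Omega(\frac{1}{2}|\mathbf{v}|^2\partial_t n + n\mathbf{v}\cdot\partial_t\mathbf{v})\,d\mathbf{x}$; substituting \eqref{eq:dimensionless_mass} for $\partial_t n$ and \eqref{eq:dimensionless_NS} (dotted with $\mathbf{v}$) for $\partial_t\mathbf{v}$, the two convective pieces assemble into $-\mathcal{R}_e\mathcal{R}\nabla\cdot(\frac{1}{2}n|\mathbf{v}|^2\mathbf{v})$, while the mobility terms combine---here the identity $\mathcal{B}=\mathcal{L}_d\mathcal{R}_e\mathcal{R}$ is essential---into $\mathcal{B}\nabla\cdot(\frac{1}{2}|\mathbf{v}|^2\nabla(\hat{\mu}/T))$. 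A tensor integration by parts on $\int_\Omega\mathbf{v}\cdot\nabla\cdot(-\Pi+\mathcal{R}\Theta)\,d\mathbf{x}$ then produces the boundary traction $\int_\Gamma\mathbf{v}\cdot((-\Pi+\mathcal{R}\Theta)\cdot\bm{\gamma})\,dA$ together with the bulk work $\int_\Omega(\Pi-\mathcal{R}\Theta):\nabla\mathbf{v}\,d\mathbf{x}$.

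Next I would differentiate the internal energy using $\partial_t\hat{e} = \frac{3n}{2}\partial_t T + (\frac{3}{2}T-2n)\partial_t n$ read off from \eqref{eq:dimensionless_Se}. Substituting \eqref{eq:dimensionless_Temp} for the temperature rate and \eqref{eq:dimensionless_mass} for $\partial_t n$, the crucial observation is that the two occurrences of $\mathcal{L}_d(\frac{3}{2}T-2n)\Delta(\hat{\mu}/T)$ cancel exactly, and the remaining convective terms reorganize into $-\nabla\cdot(\hat{e}\mathbf{v})$. This recovers the dimensionless internal energy balance $\partial_t\hat{e} + \nabla\cdot(\hat{e}\mathbf{v}) = -\Pi:\nabla\mathbf{v} + \mathcal{R}\Theta:\nabla\mathbf{v} + \mathcal{R}_e^{-1}\nabla\cdot(n\nabla T)$, so that $\partial_t E_b$ contributes exactly $-\int_\Omega(\Pi-\mathcal{R}\Theta):\nabla\mathbf{v}\,d\mathbf{x}$ plus divergence terms. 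Adding the two halves, the work density $(\Pi-\mathcal{R}\Theta):\nabla\mathbf{v}$ cancels between kinetic and internal energy, which is the conceptual heart of the first law.

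At this stage every remaining bulk term is a divergence, so the divergence theorem turns $\partial_t E_{\text{tot}}$ into a pure surface integral over $\partial\Omega$; periodicity annihilates the left and right contributions, and on $\Gamma$ the conditions $v_\gamma=0$ and $\partial_\gamma(\hat{\mu}/T)=0$ from \eqref{eq:dimensionless_rbc}--\eqref{eq:dimensionless_gnbc} kill the kinetic-transport, internal-energy-transport and mobility fluxes, leaving $\int_\Gamma(\mathcal{R}_e^{-1}n\partial_\gamma T + v_\tau(-\Pi_{\gamma\tau}+\mathcal{R}\Theta_{\gamma\tau}))\,dA$. Finally I would substitute the temperature boundary condition \eqref{eq:dimensionless_Tbcd} for $n\partial_\gamma T$: the traction terms $v_\tau(\mathcal{R}\Theta_{\gamma\tau}-\Pi_{\gamma\tau})$ cancel against those already present, the surviving $\mathcal{W}(f_s\partial_\tau v_\tau + v_\tau\partial_\tau f_s)$ collapses to $\mathcal{W}\partial_\tau(f_s v_\tau)$ by the product rule, and using $f_s=-T\sigma_s$ together with the definitions $q_s=-\mathcal{L}_\lambda\lambda_s\partial_\tau T$ and $\mathbf{q}_w\cdot\bm{\gamma}=-\mathcal{L}_\kappa^{-1}\kappa^{-1}(1/T-1/T_w)$ rewrites everything in the claimed form. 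The main obstacle is this internal-energy step: one must verify both that the mobility terms cancel and that the convective remainder is precisely $-\nabla\cdot(\hat{e}\mathbf{v})$, i.e., that the temperature equation \eqref{eq:dimensionless_Temp} genuinely encodes the correct $\hat{e}$-balance; the rest is careful bookkeeping of the dimensionless constants and the boundary terms.
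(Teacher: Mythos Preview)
Your proof is correct and follows essentially the same route as the paper: differentiate $E_k$ and $E_b$ separately, substitute \eqref{eq:dimensionless_mass}--\eqref{eq:dimensionless_Temp} together with $\mathcal{B}=\mathcal{L}_d\mathcal{R}_e\mathcal{R}$, observe that the stress-work term $(\Pi-\mathcal{R}\Theta):\nabla\mathbf{v}$ cancels between the two pieces while the convective remainder in $\partial_t E_b$ reorganizes into $-\nabla\cdot(\hat{e}\mathbf{v})$ via the identity \eqref{eq:haterelation}, and then finish on $\Gamma$ by inserting \eqref{eq:dimensionless_Tbcd} and collapsing $\mathcal{W}(f_s\partial_\tau v_\tau+v_\tau\partial_\tau f_s)$ with $f_s=-T\sigma_s$. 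The only cosmetic difference is that the paper records the kinetic-energy bulk term directly as $\int_\Omega(\Pi-\mathcal{R}\Theta):\nabla\mathbf{v}\,d\mathbf{x}$ after integration by parts, whereas you first package the transport and mobility pieces as explicit divergences before discarding them via $v_\gamma=0$ and $\partial_\gamma(\hat{\mu}/T)=0$; the content is identical.
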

\begin{proof}
From \eqref{eq:dimensionless_mass}, \eqref{eq:dimensionless_NS} and relation $\mathcal{B}=\mathcal{L}_{d}\mathcal{R}_{e}\mathcal{R}$, we have
\begin{align}
    \frac{\partial E_{k}}{\partial t}
    =&\mathcal{R}_{e}\mathcal{R}\int_{\Omega}\left( n\mathbf{v}\cdot\frac{\partial \mathbf{v}}{\partial t}+\frac{1}{2}\frac{\partial n}{\partial t}|\mathbf{v}|^{2}\right)d\mathbf{x}\nonumber\\
    =&\int_{\Omega}\left( \nabla\mathbf{v}:\Pi-\mathcal{R}\nabla\mathbf{v}:\Theta\right)d\mathbf{x}+\int_{\Gamma} v_{\tau}\left(-\Pi_{\gamma \tau}+\mathcal{R}\Theta_{\gamma \tau}\right)dA.\label{eq:partialEkpartialt}
\end{align}
Using \eqref{eq:gradinet_p}, \eqref{eq:dimensionless_mass}, \eqref{eq:dimensionless_Temp} and \eqref{eq:dimensionless_Se}, we have
\begin{align}
    \frac{\partial E_{b}}{\partial t}
    =&\int_{\Omega}\left( \frac{3n}{2}\frac{\partial T}{\partial t}+\left(\frac{3T}{2}-2n\right)\frac{\partial n}{\partial t}\right)d\mathbf{x}\nonumber\\
    =&\int_{\Omega} \left(-\nabla\cdot(\hat{e}\mathbf{v})-\Pi:\nabla \mathbf{v}+\mathcal{R}\Theta:\nabla \mathbf{v}\right)d\mathbf{x}+\mathcal{R}_{e}^{-1}\int_{\Gamma} n\partial_{\gamma} T dA,\label{eq:partialEbpartialt}
\end{align}
in which we use the relation
\begin{align}
    \nabla\cdot\left(\hat{e}\mathbf{v}\right)
    =&\left(\frac{3}{2}T-2n\right)\nabla\cdot\left(n\mathbf{v}\right)+n^{2}\nabla\cdot\mathbf{v}+\frac{3}{2}n\nabla T\cdot\mathbf{v}.\label{eq:haterelation}
\end{align}
Combining \eqref{eq:dimensionless_Tbcd}, \eqref{eq:partialEkpartialt} and \eqref{eq:partialEbpartialt} yields
\begin{equation*}
    \begin{aligned}
        \frac{\partial E_{\text{tot}}}{\partial t} 
        =&\int_{\Gamma} \left(\mathcal{W}v_{\tau}\partial_{\tau}f_{s}+\mathcal{R}_{e}^{-1}\mathcal{L}_{\kappa}^{-1}\kappa^{-1}\left(\frac{1}{T}-\frac{1}{T_{w}}\right)+\mathcal{W}f_{s}\partial_{\tau}v_{\tau}+\frac{\mathcal{L}_{\lambda}}{\mathcal{R}_{e}}\partial_{\tau}\left(\lambda_{s}\partial_{\tau}T\right)\right)dA\\
        =&-\mathcal{R}_{e}^{-1}\int_{\Gamma} \left(\partial_{\tau}q_{s}+\mathbf{q}_{w}\cdot\bm{\gamma}+\mathcal{R}_{e}\mathcal{W}\partial_{\tau}\left(T\sigma_{s}v_{\tau}\right)\right)dA.
    \end{aligned}
\end{equation*}
This ends the proof.
\end{proof}
To prove that the proposed dimensionless model satisfies the second law of thermodynamics, for the notations, we use $\Vert \cdot\Vert$ to represent the $L^{2}(\Omega)$ norm and the subscript $\Vert \cdot\Vert_{\Gamma}$ denotes the norm at the top and bottom boundaries.
By using the definition of the total entropy $S_{\text{tot}}$ in \cref{subsec:Derivation of the hydrodynamic boundary conditions}, we have
\begin{align}
    S_{\text{tot}}=S_{b}+S_{s}, \quad 
    S_{b}=\int_{\Omega} \hat{S}d\mathbf{x}, \quad
    S_{s}=\int_{\Gamma} \mathcal{W}\sigma_{s}dA.\nonumber
\end{align}
\begin{theorem}\label{thm:secondlawcontinue}
The system \eqref{eq:dimensionless_equations}-\eqref{eq:dimensionless_bcd} satisfies second law of thermodynamics as
\begin{align}
    &\frac{\partial S_{\text{tot}}}{\partial t}
    =\frac{\mathcal{R}}{\mathcal{L}_{s}}\int_{\Gamma}\frac{\beta}{T} v_{w}v_{\tau}^{slip}dA-\int_{\Gamma} \left(\mathcal{R}_{e}^{-1}\partial_{\tau}\frac{q_{s}}{T}+\frac{\mathbf{q}_{w}\cdot\bm{\gamma}}{\mathcal{R}_{e}T_{w}}+\mathcal{W}\partial_{\tau}\left(\sigma_{s}v_{\tau}\right)\right)dA \nonumber \\
    &+\mathcal{R}_{e}^{-1}\left\Vert\frac{n^{\frac{1}{2}}\nabla T}{T}\right\Vert^{2}+\frac{\mathcal{R}}{2}\left\Vert \left(\frac{n}{T}\right)^{\frac{1}{2}}D(\mathbf{v})\right\Vert^{2}+\frac{\mathcal{R}}{3}\left\Vert \left(\frac{n}{T}\right)^{\frac{1}{2}}\nabla\cdot\mathbf{v}\right\Vert^{2}+\mathcal{L}_{d} \left\Vert\nabla\frac{\hat{\mu}}{T}\right\Vert^{2}\nonumber\\
    &+\mathcal{L}_{b}\left\Vert\frac{L}{T}\right\Vert^{2}_{\Gamma}+\frac{\mathcal{R}}{\mathcal{L}_{s}}\left\Vert \left(\frac{\beta}{T}\right)^{\frac{1}{2}}v_{\tau}^{slip}\right\Vert^{2}_{\Gamma}+\mathcal{R}_{e}^{-1}\mathcal{L}_{\kappa}^{-1}\left\Vert \kappa^{-\frac{1}{2}}\left(\frac{1}{T}-\frac{1}{T_{w}}\right)\right\Vert^{2}_{\Gamma}+\frac{\mathcal{L}_{\lambda}}{\mathcal{R}_{e}}\left\Vert \lambda_{s}^{\frac{1}{2}}\frac{\partial_{\tau}T}{T}\right\Vert^{2}_{\Gamma}.\nonumber
\end{align}
Here, the first term on the right-hand side is the entropy production per unit time due to the flow to the wall, and the second term is the outgoing entropy flux per unit time.
\end{theorem}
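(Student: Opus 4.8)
The plan is to mirror the structure of the proof of \cref{thm:firstlawcontinue} together with the dimensional entropy identity \eqref{eq:dStotdt}, now carrying the dimensionless groups through every step. I would write $\partial S_{\text{tot}}/\partial t=\partial S_b/\partial t+\partial S_s/\partial t$, compute each piece, and combine them so that the entropy flux carried from the bulk into $\Gamma$ cancels exactly against the flux received by the surface balance, leaving the claimed nonnegative quadratic terms plus the wall-work contribution and the outgoing flux. Throughout, the periodic conditions at the left and right boundaries ensure that every pure tangential derivative $\partial_{\tau}(\cdot)$ integrates to zero over $\Gamma$.

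For the bulk, I would first differentiate $\hat S$ from \eqref{eq:dimensionless_Se} by the chain rule in $n$, $\nabla n$ and $T$, obtaining the dimensionless analog of \eqref{eq:dsdtT},
\begin{equation*}
\frac{\partial \hat S}{\partial t}=\frac{3n}{2T}\frac{\partial T}{\partial t}+\left(\frac{3}{2}-\frac{2n}{T}-\frac{\hat\mu}{T}\right)\frac{\partial n}{\partial t}-2\varepsilon\nabla\cdot\left(\nabla n\frac{\partial n}{\partial t}\right).
\end{equation*}
Integrating over $\Omega$, I would substitute the dimensionless mass equation \eqref{eq:dimensionless_mass} for $\partial n/\partial t$ and the temperature equation \eqref{eq:dimensionless_Temp} for $\tfrac{3n}{2}\partial T/\partial t$, exactly as in the passage \eqref{eq:dsdtT}$\to$\eqref{eq:partofderivationofT}. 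The essential tool is \cref{lma:gradinet_p}, which rewrites $\Pi:\nabla\mathbf{v}/T$ as a divergence minus $\mathbf{v}\cdot\bigl(n\nabla(\hat\mu/T)-\hat e\,\nabla(1/T)\bigr)$, so the reversible stress contributions cancel against the advective terms; this is precisely the reformulation recorded in \eqref{eq:tempequation_2}. After integrating by parts, the bulk produces the nonnegative terms $\mathcal{R}_e^{-1}\|n^{1/2}\nabla T/T\|^2$, $\tfrac{\mathcal{R}}{2}\|(n/T)^{1/2}D(\mathbf{v})\|^2+\tfrac{\mathcal{R}}{3}\|(n/T)^{1/2}\nabla\cdot\mathbf{v}\|^2$ and $\mathcal{L}_d\|\nabla(\hat\mu/T)\|^2$, where $v_\gamma=0$ and $\partial_\gamma(\hat\mu/T)=0$ from \eqref{eq:dimensionless_bcd} annihilate the advective and diffusive boundary fluxes. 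What survives on $\Gamma$ is exactly the bulk entropy flux $\Psi_b$: the heat-flux piece $\propto n\partial_\gamma T/T$ and the gradient piece $-2\varepsilon\int_\Gamma\partial_\gamma n\,(\partial n/\partial t+v_\tau\partial_\tau n)\,dA$.

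For the surface, using $\sigma_s=-(n-n_c)$ gives $\partial S_s/\partial t=-\mathcal{W}\int_\Gamma\partial n/\partial t\,dA$, after which I invoke the boundary conditions \eqref{eq:dimensionless_rbc}--\eqref{eq:dimensionless_Tbcd}. Substituting the relaxation condition \eqref{eq:dimensionless_rbc} through the definition $L=\mathcal{W}\partial f_s/\partial n+2\varepsilon T\partial_\gamma n$ produces the relaxation term $\mathcal{L}_b\|L/T\|_\Gamma^2$ and reproduces the gradient part of $\Psi_b$; substituting the GNBC \eqref{eq:dimensionless_gnbc} gives the slip term $\tfrac{\mathcal{R}}{\mathcal{L}_s}\|(\beta/T)^{1/2}v_\tau^{slip}\|_\Gamma^2$ together with the wall-work contribution $\tfrac{\mathcal{R}}{\mathcal{L}_s}\int_\Gamma(\beta/T)v_w v_\tau^{slip}\,dA$; and substituting the temperature condition \eqref{eq:dimensionless_Tbcd} converts the remaining heat-flux boundary terms into the Kapitza term $\mathcal{R}_e^{-1}\mathcal{L}_\kappa^{-1}\|\kappa^{-1/2}(1/T-1/T_w)\|_\Gamma^2$ and the surface-conduction term $\tfrac{\mathcal{L}_\lambda}{\mathcal{R}_e}\|\lambda_s^{1/2}\partial_\tau T/T\|_\Gamma^2$, plus the outgoing flux.

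Adding the two contributions, the bulk flux $-\int_\Gamma\Psi_b\,dA$ cancels the surface flux $+\int_\Gamma\Psi_b\,dA$ term-by-term, and the stated identity follows, in direct analogy with \eqref{eq:dStotdt}. I expect the main obstacle to be the bookkeeping: keeping the groups $\mathcal{R}_e,\mathcal{R},\mathcal{L}_d,\mathcal{L}_b,\mathcal{L}_s,\mathcal{L}_\kappa,\mathcal{L}_\lambda,\mathcal{W}$ consistent through every integration by parts, and—above all—verifying that the $\Psi_b$ pieces from bulk and surface cancel exactly rather than merely up to sign. The delicate points are the gradient/relaxation cancellation routed through the definition of $L$ and the heat-flux cancellation routed through \eqref{eq:dimensionless_Tbcd}; once these are confirmed, the nonnegativity of the surviving quadratic terms delivers the second law.
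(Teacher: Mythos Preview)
Your proposal is correct and follows essentially the same approach as the paper: differentiate $\hat S$ via the chain rule, substitute \eqref{eq:dimensionless_mass} and \eqref{eq:dimensionless_Temp}, invoke \cref{lma:gradinet_p} to collapse the reversible stress terms, and then process the resulting boundary integral on $\Gamma$ using \eqref{eq:dimensionless_bcd} to extract the quadratic production terms and the outgoing flux. The only cosmetic difference is that the paper treats the surface contribution $\partial S_s/\partial t$ implicitly—it emerges as the $-\mathcal{W}\partial\sigma_s/\partial t$ term while rewriting the boundary integral (see \eqref{eq:partialSbpartialt_2})—rather than computing it separately and then cancelling $\Psi_b$, but the underlying calculation is identical.
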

\begin{proof}
Using \cref{lma:gradinet_p}, \eqref{eq:dimensionless_mass}, \eqref{eq:dimensionless_Temp}, \eqref{eq:dimensionless_Se} and \eqref{eq:haterelation}, we have
\begin{align}
    \frac{\partial S_{b}}{\partial t}
    =&\int_{\Omega} \left(\frac{3n}{2T}\frac{\partial T}{\partial t}+\left( \ln{\left(T^{\frac{3}{2}}\frac{1-n}{n}\right)}-\frac{1}{1-n}+2\varepsilon\Delta n\right)\frac{\partial n}{\partial t} - 2\varepsilon \nabla\cdot\left(\nabla n\frac{\partial n}{\partial t}\right)\right)d\mathbf{x} \nonumber\\
    =&\int_{\Omega} \left(-\nabla\cdot\left(\frac{\hat{e}}{T}\mathbf{v}\right)+\hat{e}\nabla\frac{1}{T}\cdot\mathbf{v}+\frac{\hat{\mu}}{T}\nabla\cdot(n\mathbf{v})\right)d\mathbf{x} \nonumber\\
    &+\int_{\Omega}\left(-\frac{\Pi}{T}:\nabla\mathbf{v}+2\varepsilon\nabla\cdot\left(\mathbf{v}\cdot\nabla n\otimes\nabla n\right)-2\varepsilon\nabla\cdot\left(\left(\frac{\partial n}{\partial t}+\mathbf{v}\cdot\nabla n\right)\nabla n\right)\right)d\mathbf{x} \nonumber\\
    &+\int_{\Omega} \left(\mathcal{R}\frac{\Theta}{T}:\nabla\mathbf{v}-\mathcal{L}_{d}\frac{\hat{\mu}}{T}\Delta\frac{\hat{\mu}}{T}+\mathcal{R}_{e}^{-1}\frac{\nabla\cdot\left(n\nabla T\right)}{T}\right)d\mathbf{x} \nonumber\\
    =&\int_{\Omega}-\nabla\cdot\left(\hat{S}\mathbf{v}\right)d\mathbf{x}+\mathcal{R}_{e}^{-1}\left\Vert\frac{n^{\frac{1}{2}}\nabla T}{T}\right\Vert^{2}+\mathcal{L}_{d} \left\Vert\nabla\frac{\hat{\mu}}{T}\right\Vert^{2}+\frac{\mathcal{R}}{2}\left\Vert \left(\frac{n}{T}\right)^{\frac{1}{2}}D(\mathbf{v})\right\Vert^{2} \nonumber\\
    &+\frac{\mathcal{R}}{3}\left\Vert \left(\frac{n}{T}\right)^{\frac{1}{2}}\nabla\cdot\mathbf{v}\right\Vert^{2}+\int_{\Gamma}\left(\mathcal{R}_{e}^{-1}\frac{n\partial_{\gamma} T}{T}-2\varepsilon\partial_{\gamma}n\left(\frac{\partial n}{\partial t}+v_{\tau}\partial_{\tau}n\right)\right)dA. \nonumber
\end{align}
Using \eqref{eq:dimensionless_bcd}, the last term on the right-hand side can be reformulated as 
\begin{align}
    rhs=&\int_{\Gamma} \frac{1}{T}\left(\frac{1}{\mathcal{R}_{e}\mathcal{L}_{\kappa}\kappa}\left(\frac{1}{T}-\frac{1}{T_{w}}\right)+\frac{\mathcal{L}_{\lambda}}{\mathcal{R}_{e}}\partial_{\tau}\left(\lambda_{s}\partial_{\tau}T\right)+\mathcal{W}f_{s}\partial_{\tau}v_{\tau}+\frac{\mathcal{R}}{\mathcal{L}_{s}}\beta v_{w}v_{\tau}^{slip}\right)dA \nonumber\\
    &-\int_{\Gamma}\frac{1}{T}\left(L-\mathcal{W}\frac{\partial f_{s}}{\partial n}\right)\left(\frac{\partial n}{\partial t}+v_{\tau}\partial_{\tau}n\right)dA+\frac{\mathcal{R}}{\mathcal{L}_{s}}\left\Vert \left(\frac{\beta}{T}\right)^{\frac{1}{2}}v_{\tau}^{slip}\right\Vert^{2}_{\Gamma} \nonumber\\
    =&\frac{\mathcal{R}}{\mathcal{L}_{s}}\int_{\Gamma} \frac{\beta}{T} v_{w}v_{\tau}^{slip}dA-\mathcal{R}_{e}^{-1}\int_{\Gamma} \left(\partial_{\tau}\frac{q_{s}}{T}+\frac{1}{T_{w}}\mathbf{q}_{w}\cdot\bm{\gamma}\right)dA\label{eq:partialSbpartialt_2}\\
    &-\int_{\Gamma} \mathcal{W}\left(\partial_{\tau}\left(\sigma_{s}v_{\tau}\right)+\frac{\partial \sigma_{s}}{\partial t}\right)dA+\frac{\mathcal{R}}{\mathcal{L}_{s}}\left\Vert \left(\frac{\beta}{T}\right)^{\frac{1}{2}}v_{\tau}^{slip}\right\Vert^{2}_{\Gamma}+\mathcal{L}_{b}\left\Vert\frac{L}{T}\right\Vert^{2}_{\Gamma}\nonumber \\
    &+\mathcal{R}_{e}^{-1}\mathcal{L}_{\kappa}^{-1}\left\Vert \kappa^{-\frac{1}{2}}\left(\frac{1}{T}-\frac{1}{T_{w}}\right)\right\Vert^{2}_{\Gamma}+\frac{\mathcal{L}_{\lambda}}{\mathcal{R}_{e}}\left\Vert \lambda_{s}^{\frac{1}{2}}\frac{\partial_{\tau}T}{T}\right\Vert^{2}_{\Gamma}.\nonumber
\end{align}
Substituting \eqref{eq:partialSbpartialt_2} into $\partial S_{b}/\partial t$ yields the desired result.
\end{proof}

\section{Thermodynamically consistent numerical schemes}
\label{sec:Thermodynamically consistent numerical schemes}
In this section, we develop efficient and thermodynamically consistent numerical schemes for the system in \cref{subsec:Dimensionless equations}.
The time interval $[0, T_{f}]$ with final time $T_{f}>0$ is uniformly divided into $M_{t}$ time steps, and the time step size $\delta t:=T_{f}/M_{t}$.
We denote $q^{k}$ or $\mathbf{q}^{k}$ as the approximation of a scalar function $q(t)$ or a vector function $\mathbf{q}(t)$ at time $t^{k}:=k\delta t$. 

\subsection{A fully coupled and thermodynamically consistent scheme}
\label{subsec:A fully coupled and thermodynamically consistent scheme}
To construct a scheme that strictly satisfies the temporally discrete laws of thermodynamics, we propose following semi-implicit scheme: find $(n^{k+1}, \mathbf{v}^{k+1}, T^{k+1})$ for $k \geq 0$ by
\begin{subequations}\label{eq:discret_eq}
    \begin{align}
        &\frac{n^{k+1}-n^{k}}{\delta t}+\nabla\cdot\left(n^{k+1}\mathbf{v}^{k+1}\right)=\mathcal{L}_{d}\Delta\frac{\hat{\mu}^{k+1}}{T^{k+1}}, \label{eq:discret_mass} \\
        &\mathcal{R}_{e}\mathcal{R}\left(n^{k}\frac{\mathbf{v}^{k+1}-\mathbf{v}^{k}}{\delta t}+n^{k+1}\mathbf{v}^{k+1}\cdot\nabla\mathbf{v}^{k+1}\right)-\mathcal{B}\nabla \frac{\hat{\mu}^{k+1}}{T^{k+1}}\cdot\nabla\mathbf{v}^{k+1} \label{eq:discret_ns}\\
        &=\nabla\cdot\left(-\Pi^{k+1}+\mathcal{R}\Theta^{k+1}\right)-\frac{\left(n^{k+1}-n^{k}\right)^{2}}{\delta t \mathbf{v}^{k+1}}, \nonumber \\
        &\frac{3}{2}\left(n^{k}\frac{T^{k+1}-T^{k}}{\delta t}+n^{k+1}\mathbf{v}^{k+1}\cdot\nabla T^{k+1}\right) +\left(n^{k+1}\right)^{2}\nabla\cdot \mathbf{v}^{k+1} \label{eq:discret_temp}\\
        &+\mathcal{L}_{d}\left(\frac{3}{2}T^{k+1}-2n^{k+1}\right)\Delta \frac{\hat{\mu}^{k+1}}{T^{k+1}}=-\Pi^{k+1}:\nabla\mathbf{v}^{k+1}+\mathcal{R}\Theta^{k+1}:\nabla\mathbf{v}^{k+1} \nonumber\\
        &+\mathcal{R}_{e}^{-1}\nabla\cdot\left(n^{k+1}\nabla T^{k+1}\right)+\frac{\mathcal{R}_{e}\mathcal{R}n^{k}}{2\delta t}\left|\mathbf{v}^{k+1}-\mathbf{v}^{k}\right|^{2}, \nonumber
    \end{align}
\end{subequations}
where $\hat{\mu}^{k+1}=3T^{k+1}/2+T^{k+1}/(1-n^{k+1})-3T^{k+1}\ln{T^{k+1}}/2-T^{k+1}\ln{(1/n^{k+1}-1)}-2n^{k+1}-2\varepsilon T^{k+1}\Delta n^{k+1}$, $\Pi^{k+1}=\hat{p}^{k+1}\mathbf{I}+2\varepsilon T^{k+1}\nabla n^{k+1}\otimes\nabla n^{k+1}$, $\Theta^{k+1}=n^{k+1}(D(\mathbf{v}^{k+1})+(\nabla\cdot\mathbf{v}^{k+1})\mathbf{I}/3)$, and $\hat{p}^{k+1}=n^{k+1}T^{k+1}/(1-n^{k+1})-(n^{k+1})^{2}-\varepsilon T^{k+1}\vert\nabla n^{k+1}\vert^{2}-2\varepsilon n^{k+1}T^{k+1}\Delta n^{k+1}$.
Here, $(n^{k+1}-n^{k})^{2}/\delta t \mathbf{v}^{k+1}$ and $\mathcal{R}_{e}\mathcal{R}n^{k}\vert\mathbf{v}^{k+1}-\mathbf{v}^{k}\vert^{2}/2\delta t$ are stabilization terms to ensure the scheme satisfy temporally discrete first law of thermodynamics.
The discrete hydrodynamic boundary conditions are
\begin{subequations}\label{eq:discret_bcd}
    \begin{align}
        &\frac{n^{k+1}-n^{k}}{\delta t}+v_{\tau}^{k+1}\partial_{\tau}n^{k+1}=-\mathcal{L}_{b}\frac{L^{k+1}}{T^{k+1}}, \quad \partial_{\gamma}\frac{\hat{\mu}^{k+1}}{T^{k+1}}=0, \label{eq:discret_rbc}\\
        &\beta^{k+1}v_{\tau}^{slip,k+1}=\mathcal{L}_{s}\mathcal{R}^{-1}\left(-\mathcal{R}\Theta_{\gamma\tau}^{k+1}+\mathcal{W}\partial_{\tau} f_{s}^{k+1}+\Pi_{\gamma\tau}^{k+1}\right),\quad v_{\gamma}^{k+1}=0 \label{eq:discret_gnbc}\\
        &-n^{k+1}\partial_{\gamma}T^{k+1}+\frac{1}{\mathcal{L}_{\kappa}\kappa^{k+1}}\left(\frac{1}{T^{k+1}}-\frac{1}{T_{w}}\right)+\mathcal{R}_{e}\mathcal{W}f_{s}^{k+1}\partial_{\tau}v_{\tau}^{k+1} \label{eq:discret_tbcd}\\
        &=-\mathcal{L}_{\lambda}\partial_{\tau}\left(\lambda_{s}^{k+1}\partial_{\tau}T^{k+1}\right)+\mathcal{R}_{e}v_{\tau}^{k+1}\left(\mathcal{R}\Theta_{\gamma\tau}^{k+1}-\mathcal{W}\partial_{\tau} f_{s}^{k+1}-\Pi_{\gamma\tau}^{k+1}\right),\nonumber
    \end{align}
\end{subequations}
in which $L^{k+1}=\mathcal{W}(\partial f_{s}/\partial n)^{k+1}+2\varepsilon T^{k+1}\partial_{\gamma}n^{k+1}$, $f_{s}^{k+1}=T^{k+1}(n^{k+1}-n_{c})$, $\partial_{\tau} f_{s}^{k+1}=T^{k+1}\partial_{\tau}n^{k+1}+(n^{k+1}-n_{c})\partial_{\tau}T^{k+1}$, $v_{\tau}^{slip,k+1}=v_{\tau}^{k+1}-v_{w}$, and $\zeta^{k+1}=(1-\zeta_{gl})(n^{k+1}-n_{g})/(n_{l}-n_{g})+\zeta_{gl}, \ \zeta=\beta, \ \kappa,\ \lambda_{s}$.
The discrete internal energy density and entropy density are defined as
\begin{equation}\label{eq:discret_hate_hatS}
    \begin{aligned}
        \hat{e}^{k}=\frac{3}{2}n^{k}T^{k}-\left(n^{k}\right)^{2}, \quad 
        \hat{S}^{k}=n^{k}\ln{\left(\left(T^{k}\right)^{\frac{3}{2}}\frac{1-n^{k}}{n^{k}}\right)}-\varepsilon\left\vert\nabla n^{k}\right\vert^{2}.
    \end{aligned}
\end{equation}
To demonstrate that the above scheme obeys the temporally discrete laws of thermodynamics, we firstly define the formulations of total energy, kinetic energy, and internal energy over the domain at the time $t^{k}$ as
\begin{align}
    E_{\text{tot}}^{k}=E_{k}^{k}+E_{b}^{k}, \ 
    E_{\text{tot}}^{k}=\int_{\Omega} e_{T}^{k}d\mathbf{x}, \
    E_{k}^{k}=\mathcal{R}_{e}\mathcal{R}\int_{\Omega}\frac{1}{2}n^{k}|\mathbf{v}^{k}|^{2}d\mathbf{x}, \ 
    E_{b}^{k}=\int_{\Omega}\hat{e}^{k}d\mathbf{x}. \label{eq:totaldiscreteenergydefine}
\end{align}
\begin{theorem}
    The scheme \eqref{eq:discret_eq}-\eqref{eq:discret_bcd} satisfies the temporally discrete first law of thermodynamics as
    \begin{align}
       \frac{E_{\text{tot}}^{k+1}-E_{\text{tot}}^{k}}{\delta t}=-\mathcal{R}_{e}^{-1}\int_{\Gamma} \left(\partial_{\tau}q_{s}^{k+1}+\mathbf{q}_{w}^{k+1}\cdot\bm{\gamma}+\mathcal{R}_{e}\mathcal{W}\partial_{\tau}\left(T^{k+1}\sigma_{s}^{k+1}v_{\tau}^{k+1}\right)\right)dA, \nonumber
    \end{align}
    where $q_{s}^{k+1}=-\mathcal{L}_{\lambda}\lambda_{s}^{k+1}\partial_{\tau}T^{k+1}$ and $\mathbf{q}_{w}^{k+1}\cdot\bm{\gamma}=-(\mathcal{L}_{\kappa}\kappa^{k+1})^{-1}(1/T^{k+1}-1/T_{w})$.
\end{theorem}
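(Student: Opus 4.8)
The plan is to mirror the proof of \cref{thm:firstlawcontinue}, tracking the kinetic energy $E_{k}^{k}$ and internal energy $E_{b}^{k}$ separately and replacing each time derivative and each integration by parts in the continuous argument by its discrete counterpart, where the continuous product rule $\partial_{t}(ab)=a\,\partial_{t}b+b\,\partial_{t}a$ is traded for exact telescoping identities between the two time levels $k$ and $k+1$. The two stabilization terms in \eqref{eq:discret_ns} and \eqref{eq:discret_temp} are exactly what make these telescoping identities close, so the real content is to show that they cancel the quadratic defects produced by the discrete product rules.

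For the kinetic energy I would test \eqref{eq:discret_ns} with $\mathbf{v}^{k+1}$, integrate over $\Omega$, and use the exact identity
\[
\frac{1}{2}\left(n^{k+1}|\mathbf{v}^{k+1}|^{2}-n^{k}|\mathbf{v}^{k}|^{2}\right)=n^{k}\mathbf{v}^{k+1}\cdot\left(\mathbf{v}^{k+1}-\mathbf{v}^{k}\right)+\frac{1}{2}|\mathbf{v}^{k+1}|^{2}\left(n^{k+1}-n^{k}\right)-\frac{1}{2}n^{k}|\mathbf{v}^{k+1}-\mathbf{v}^{k}|^{2}
\]
to express $(E_{k}^{k+1}-E_{k}^{k})/\delta t$. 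The cubic convection term and the transport piece produced when \eqref{eq:discret_mass} is inserted into the middle summand cancel after one integration by parts using $v_{\gamma}^{k+1}=0$ and periodicity, while the $\mathcal{B}$-coupling term, after integration by parts using $\partial_{\gamma}(\hat{\mu}^{k+1}/T^{k+1})=0$ and $\mathcal{B}=\mathcal{L}_{d}\mathcal{R}_{e}\mathcal{R}$, cancels the mobility contribution generated by the density increment. Testing the momentum stabilization term with $\mathbf{v}^{k+1}$ leaves the extra bulk term $-\int_{\Omega}(n^{k+1}-n^{k})^{2}/\delta t\,d\mathbf{x}$, and the identity above leaves the kinetic defect $-\frac{\mathcal{R}_{e}\mathcal{R}}{2\delta t}\int_{\Omega}n^{k}|\mathbf{v}^{k+1}-\mathbf{v}^{k}|^{2}\,d\mathbf{x}$; what remains is the discrete analogue of \eqref{eq:partialEkpartialt}, namely the bulk stress term $\int_{\Omega}(\Pi^{k+1}-\mathcal{R}\Theta^{k+1})\!:\!\nabla\mathbf{v}^{k+1}\,d\mathbf{x}$ and the boundary term $\int_{\Gamma}v_{\tau}^{k+1}(-\Pi_{\gamma\tau}^{k+1}+\mathcal{R}\Theta_{\gamma\tau}^{k+1})\,dA$.

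For the internal energy I would expand $\hat{e}^{k+1}-\hat{e}^{k}$ with the exact product rules $n^{k+1}T^{k+1}-n^{k}T^{k}=n^{k}(T^{k+1}-T^{k})+T^{k+1}(n^{k+1}-n^{k})$ and $(n^{k+1})^{2}-(n^{k})^{2}=(n^{k+1}+n^{k})(n^{k+1}-n^{k})$, so that the coefficient of $(n^{k+1}-n^{k})/\delta t$ is $\frac{3}{2}T^{k+1}-n^{k+1}-n^{k}$, rather than the $\frac{3}{2}T^{k+1}-2n^{k+1}$ appearing in \eqref{eq:discret_temp}. Substituting \eqref{eq:discret_temp} into the $T$-increment and \eqref{eq:discret_mass} into the $n$-increment, the viscous and pressure bulk terms cancel the stress term carried over from the kinetic balance, the term $\frac{\mathcal{R}_{e}\mathcal{R}n^{k}}{2\delta t}|\mathbf{v}^{k+1}-\mathbf{v}^{k}|^{2}$ in \eqref{eq:discret_temp} cancels the kinetic defect above, and the heat term supplies $\mathcal{R}_{e}^{-1}\int_{\Gamma}n^{k+1}\partial_{\gamma}T^{k+1}\,dA$. \emph{The crux of the discrete argument is the $\mathcal{L}_{d}$ bookkeeping.} Unlike the continuous case, the diffusion terms no longer cancel and leave $\mathcal{L}_{d}\int_{\Omega}(n^{k+1}-n^{k})\Delta(\hat{\mu}^{k+1}/T^{k+1})\,d\mathbf{x}$; rewriting $\mathcal{L}_{d}\Delta(\hat{\mu}^{k+1}/T^{k+1})=(n^{k+1}-n^{k})/\delta t+\nabla\cdot(n^{k+1}\mathbf{v}^{k+1})$ via \eqref{eq:discret_mass} splits this into $\int_{\Omega}(n^{k+1}-n^{k})^{2}/\delta t\,d\mathbf{x}$, which exactly cancels the momentum-stabilization leftover from the kinetic balance, plus a convective remainder. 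I expect this cancellation to be the main obstacle, as it is the sole reason the stabilization term $-(n^{k+1}-n^{k})^{2}/\delta t\,\mathbf{v}^{k+1}$ is placed in \eqref{eq:discret_ns}.

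Finally, I would collect the surviving convective terms, which after the above cancellations read $-\int_{\Omega}\big[\frac{3}{2}n^{k+1}\mathbf{v}^{k+1}\cdot\nabla T^{k+1}+(n^{k+1})^{2}\nabla\cdot\mathbf{v}^{k+1}+(\frac{3}{2}T^{k+1}-2n^{k+1})\nabla\cdot(n^{k+1}\mathbf{v}^{k+1})\big]\,d\mathbf{x}$; note that the $n^{k}$-dependence has dropped out, restoring the continuous coefficient. Because \eqref{eq:haterelation} is a purely spatial identity it holds verbatim at the single level $k+1$, collapsing the bracket into $\nabla\cdot(\hat{e}^{k+1}\mathbf{v}^{k+1})$, whose integral vanishes by $v_{\gamma}^{k+1}=0$ and periodicity. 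Hence $(E_{\text{tot}}^{k+1}-E_{\text{tot}}^{k})/\delta t$ reduces to $\mathcal{R}_{e}^{-1}\int_{\Gamma}n^{k+1}\partial_{\gamma}T^{k+1}\,dA+\int_{\Gamma}v_{\tau}^{k+1}(-\Pi_{\gamma\tau}^{k+1}+\mathcal{R}\Theta_{\gamma\tau}^{k+1})\,dA$. Inserting \eqref{eq:discret_tbcd} cancels the $\Pi_{\gamma\tau}^{k+1}$ and $\Theta_{\gamma\tau}^{k+1}$ boundary contributions and leaves $\mathcal{W}\big(f_{s}^{k+1}\partial_{\tau}v_{\tau}^{k+1}+v_{\tau}^{k+1}\partial_{\tau}f_{s}^{k+1}\big)=\mathcal{W}\partial_{\tau}(f_{s}^{k+1}v_{\tau}^{k+1})$ together with the Kapitza and surface-conduction terms; using $f_{s}^{k+1}=-T^{k+1}\sigma_{s}^{k+1}$ and the definitions $q_{s}^{k+1}=-\mathcal{L}_{\lambda}\lambda_{s}^{k+1}\partial_{\tau}T^{k+1}$ and $\mathbf{q}_{w}^{k+1}\cdot\bm{\gamma}=-(\mathcal{L}_{\kappa}\kappa^{k+1})^{-1}(1/T^{k+1}-1/T_{w})$ rewrites these as the claimed flux divergence, completing the proof.
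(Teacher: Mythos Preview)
Your proposal is correct and follows essentially the same route as the paper's proof: split $E_{\text{tot}}$ into kinetic and internal parts, use the exact telescoping identities for $n^{k+1}|\mathbf{v}^{k+1}|^{2}-n^{k}|\mathbf{v}^{k}|^{2}$ and $\hat{e}^{k+1}-\hat{e}^{k}$, observe that the two stabilization terms absorb the quadratic defects, reduce the convective remainder to $-\nabla\cdot(\hat{e}^{k+1}\mathbf{v}^{k+1})$ via \eqref{eq:haterelation}, and finish with \eqref{eq:discret_tbcd}. The only cosmetic difference is that the paper writes $-(n^{k+1}+n^{k})(n^{k+1}-n^{k})=-2n^{k+1}(n^{k+1}-n^{k})+(n^{k+1}-n^{k})^{2}$ from the outset, so the coefficient $\tfrac{3}{2}T^{k+1}-2n^{k+1}$ matches the $\mathcal{L}_d$ prefactor in \eqref{eq:discret_temp} immediately and the diffusion terms cancel on sight, whereas you recover the same $(n^{k+1}-n^{k})^{2}$ term one step later by resubstituting \eqref{eq:discret_mass} into the $\mathcal{L}_d$ mismatch; either bookkeeping yields the identical result.
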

\begin{proof}
    Using \eqref{eq:discret_mass}, \eqref{eq:discret_ns} and $\mathcal{B}=\mathcal{L}_{d}\mathcal{R}_{e}\mathcal{R}$, we obtain
\begin{align}
    E_{k}^{k+1}-E_{k}^{k}
    =&\mathcal{R}_{e}\mathcal{R}\int_{\Omega}\frac{1}{2}\left(n^{k}\left\vert\mathbf{v}^{k+1}\right\vert^{2}-n^{k}\left\vert\mathbf{v}^{k}\right\vert^{2} +\left(n^{k+1}-n^{k}\right)\left\vert\mathbf{v}^{k+1}\right\vert^{2}\right)d\mathbf{x} \nonumber\\
    =&\int_{\Omega}\left(\delta t \Pi^{k+1}:\nabla\mathbf{v}^{k+1}-\delta t\mathcal{R}\Theta^{k+1}:\nabla\mathbf{v}^{k+1}-\left(n^{k+1}-n^{k}\right)^{2}\right.\label{eq:E_K^k+1-E_K^k}\\
    &\left.-\frac{\mathcal{R}_{e}\mathcal{R}}{2}n^{k}\left\vert\mathbf{v}^{k+1}-\mathbf{v}^{k}\right\vert^{2}\right)d\mathbf{x} + \delta t\int_{\Gamma} v_{\tau}^{k+1}\left( -\Pi^{k+1}_{\gamma\tau}+\mathcal{R}\Theta^{k+1}_{\gamma\tau}\right) dA.\nonumber
\end{align}
From \eqref{eq:haterelation}, \eqref{eq:discret_mass}, \eqref{eq:discret_temp} and \eqref{eq:discret_hate_hatS}, we have
\begin{align}
    E_{b}^{k+1}-E_{b}^{k}
    =&\int_{\Omega}\left(\frac{3}{2}n^{k}\left(T^{k+1}-T^{k}\right)+\frac{3}{2}T^{k+1}\left(n^{k+1}-n^{k}\right)\right)d\mathbf{x} \nonumber\\
    &+\int_{\Omega}\left(-2n^{k+1}\left(n^{k+1}-n^{k}\right)+\left(n^{k+1}-n^{k}\right)^{2}\right)d\mathbf{x} \nonumber \\
    =&\delta t\int_{\Omega}\left(-\nabla\cdot\left(\hat{e}^{k+1}\mathbf{v}^{k+1}\right)-\Pi^{k+1}:\nabla\mathbf{v}^{k+1}+\mathcal{R}\Theta^{k+1}:\nabla\mathbf{v}^{k+1}\right)d\mathbf{x} \label{eq:E_b^k+1-E_b^k}\\
    &+\int_{\Omega}\left(\frac{\mathcal{R}_{e}\mathcal{R}}{2}n^{k}\left\vert\mathbf{v}^{k+1}-\mathbf{v}^{k}\right\vert^{2}+\left(n^{k+1}-n^{k}\right)^{2}\right)d\mathbf{x} \nonumber\\
    &+\delta t\mathcal{R}_{e}^{-1}\int_{\Gamma} n^{k+1}\partial_{\gamma} T^{k+1} dA. \nonumber
\end{align}
    Combining \eqref{eq:discret_tbcd}, \eqref{eq:E_K^k+1-E_K^k} and \eqref{eq:E_b^k+1-E_b^k} yields the result.
\end{proof}
Next, we define the formulations of total entropy, bulk entropy and surface entropy over the domain at time $t^{k}$ as
\begin{equation}\label{eq:totaldiscreteentropydefine}
    \begin{aligned}
        S_{\text{tot}}^{k}=S_{b}^{k}+S_{s}^{k}, \quad
        S_{b}^{k}=\int_{\Omega} \hat{S}^{k}d\mathbf{x}, \quad 
        S_{s}^{k}=\int_{\Gamma}\mathcal{W} \sigma_{s}^{k}dA.
    \end{aligned}
\end{equation}
\begin{theorem}\label{thm:entropyoriginal}
    The scheme \eqref{eq:discret_eq}-\eqref{eq:discret_bcd} satisfies the temporally discrete second law of thermodynamics as
    \begin{align}
       &\frac{ S_{\text{tot}}^{k+1}-S_{\text{tot}}^{k}}{\delta t}
        \geq\frac{\mathcal{R}}{\mathcal{L}_{s}}\int_{\Gamma}\frac{\beta^{k+1}}{T^{k+1}} v_{w}v_{\tau}^{slip,k+1}dA \nonumber\\
    &-\int_{\Gamma}\left(\mathcal{R}_{e}^{-1}\partial_{\tau}\frac{q_{s}^{k+1}}{T^{k+1}}+\frac{\mathbf{q}_{w}^{k+1}\cdot\bm{\gamma}}{\mathcal{R}_{e}T_{w}}+\mathcal{W}\partial_{\tau}\left(\sigma_{s}^{k+1}v_{\tau}^{k+1}\right)\right)dA +\mathcal{R}_{e}^{-1}\left\Vert\frac{\left(n^{k+1}\right)^{\frac{1}{2}}\nabla T^{k+1}}{T^{k+1}}\right\Vert^{2}\nonumber\\
    &+\frac{\mathcal{R}_{e}\mathcal{R}}{2\delta t}\left\Vert\left(\frac{n^{k}}{T^{k+1}}\right)^{\frac{1}{2}}\left(\mathbf{v}^{k+1}-\mathbf{v}^{k}\right)\right\Vert^{2}+\mathcal{L}_{d} \left\Vert\nabla\frac{\hat{\mu}^{k+1}}{T^{k+1}}\right\Vert^{2} +\frac{\mathcal{R}}{2}\left\Vert \left(\frac{n^{k+1}}{T^{k+1}}\right)^{\frac{1}{2}}D(\mathbf{v}^{k+1})\right\Vert^{2}\nonumber\\
    &+\frac{\mathcal{R}}{3}\left\Vert \left(\frac{n^{k+1}}{T^{k+1}}\right)^{\frac{1}{2}}\nabla\cdot\mathbf{v}^{k+1}\right\Vert^{2} +\mathcal{L}_{b}\left\Vert\frac{L^{k+1}}{T^{k+1}} \right\Vert^{2}_{\Gamma}+\frac{\mathcal{R}}{\mathcal{L}_{s}}\left\Vert \left(\frac{\beta^{k+1}}{T^{k+1}}\right)^{\frac{1}{2}}v_{\tau}^{slip,k+1}\right\Vert^{2}_{\Gamma}\nonumber\\
    & +\mathcal{R}_{e}^{-1}\mathcal{L}_{\kappa}^{-1}\left\Vert \left(\frac{1}{\kappa^{k+1}}\right)^{\frac{1}{2}}\left(\frac{1}{T^{k+1}}-\frac{1}{T_{w}}\right)\right\Vert^{2}_{\Gamma}+\frac{\mathcal{L}_{\lambda}}{\mathcal{R}_{e}}\left\Vert \left(\lambda_{s}^{k+1}\right)^{\frac{1}{2}}\frac{\partial_{\tau}T^{k+1}}{T^{k+1}}\right\Vert^{2}_{\Gamma}.\nonumber
    \end{align}
\end{theorem}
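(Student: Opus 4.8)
The plan is to mirror the proof of \cref{thm:secondlawcontinue} at the fully discrete level, with one essential modification: the exact chain rule that produces an equality in the time-continuous setting is no longer available, and must be replaced by one-sided convexity estimates. This replacement is precisely the source of the inequality $\geq$ in the statement. First I would compute $S_{\text{tot}}^{k+1}-S_{\text{tot}}^{k}$, treat the bulk part $S_b^{k+1}-S_b^k$ by a discrete entropy inequality, and then substitute the discrete balance laws \eqref{eq:discret_mass}--\eqref{eq:discret_temp} exactly as the continuous proof substitutes the governing equations.

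The key step is a \emph{discrete entropy inequality} bounding $\hat S^{k+1}-\hat S^k$ from below. Writing $\hat S=\frac{3}{2}n\ln T+n\ln\frac{1-n}{n}-\varepsilon|\nabla n|^2$, I would split the coupling term as $\frac{3}{2}(n^{k+1}\ln T^{k+1}-n^k\ln T^k)=\frac{3}{2}(n^{k+1}-n^k)\ln T^{k+1}+\frac{3}{2}n^k(\ln T^{k+1}-\ln T^k)$ and apply concavity of $\ln$ to the last piece, obtaining $n^k(\ln T^{k+1}-\ln T^k)\geq \frac{n^k}{T^{k+1}}(T^{k+1}-T^k)$. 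Concavity of $\phi(n):=n\ln\frac{1-n}{n}$ gives $\phi(n^{k+1})-\phi(n^k)\geq \phi'(n^{k+1})(n^{k+1}-n^k)$, and the identity $|\nabla n^{k+1}|^2-|\nabla n^k|^2=2\nabla n^{k+1}\cdot\nabla(n^{k+1}-n^k)-|\nabla(n^{k+1}-n^k)|^2$ disposes of the gradient term. Since $\frac{3}{2}\ln T^{k+1}+\phi'(n^{k+1})-\frac{1}{1-n^{k+1}}+2\varepsilon\Delta n^{k+1}=\frac{3}{2}-\frac{2n^{k+1}}{T^{k+1}}-\frac{\hat\mu^{k+1}}{T^{k+1}}$, integrating the gradient term by parts yields the discrete analogue of \eqref{eq:dsdtT}:
\[
\int_{\Omega}(\hat S^{k+1}-\hat S^k)\,d\mathbf{x}\geq \int_{\Omega}\left(\frac{3n^k}{2T^{k+1}}(T^{k+1}-T^k)+\left(\frac{3}{2}-\frac{2n^{k+1}}{T^{k+1}}-\frac{\hat\mu^{k+1}}{T^{k+1}}\right)(n^{k+1}-n^k)\right)d\mathbf{x}-2\varepsilon\int_{\Gamma}\partial_{\gamma}n^{k+1}(n^{k+1}-n^k)\,dA.
\]

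With this inequality in place, the remainder tracks the continuous argument. Dividing the discrete temperature equation \eqref{eq:discret_temp} by $T^{k+1}$ reproduces the $\frac{3n^k}{2T^{k+1}}(T^{k+1}-T^k)$ contribution and, crucially, turns the first-law stabilization term $\frac{\mathcal{R}_e\mathcal{R}n^k}{2\delta t}|\mathbf{v}^{k+1}-\mathbf{v}^k|^2$ into the positive production term $\frac{\mathcal{R}_e\mathcal{R}}{2\delta t}\Vert(n^k/T^{k+1})^{1/2}(\mathbf{v}^{k+1}-\mathbf{v}^k)\Vert^2$. Substituting \eqref{eq:discret_mass} into the $(n^{k+1}-n^k)$ term and invoking \cref{lma:gradinet_p}, which holds pointwise at level $k+1$ because $\Pi^{k+1},\hat e^{k+1},\hat\mu^{k+1}$ are the same algebraic functions of $n^{k+1},T^{k+1}$, lets me combine the reversible-stress contributions. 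Integration by parts then generates the bulk squares $\mathcal{R}_e^{-1}\Vert n^{1/2}\nabla T/T\Vert^2$, $\mathcal{L}_d\Vert\nabla(\hat\mu/T)\Vert^2$, and the viscous terms $\frac{\mathcal{R}}{2}\Vert\cdots\Vert^2+\frac{\mathcal{R}}{3}\Vert\cdots\Vert^2$, while the advective contributions assemble into $-\int_{\Omega}\nabla\cdot(\hat S^{k+1}\mathbf{v}^{k+1})$, which vanishes by $v_{\gamma}^{k+1}=0$ and periodicity.

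Finally I would add the surface increment $S_s^{k+1}-S_s^k=-\mathcal{W}\int_{\Gamma}(n^{k+1}-n^k)\,dA$, which is exact since $\sigma_s=-(n-n_c)$ is linear, and reformulate every boundary integral precisely as in \eqref{eq:partialSbpartialt_2}: the leftover $-2\varepsilon\int_{\Gamma}\partial_{\gamma}n^{k+1}(n^{k+1}-n^k)$ is rewritten via $L^{k+1}=\mathcal{W}(\partial f_s/\partial n)^{k+1}+2\varepsilon T^{k+1}\partial_{\gamma}n^{k+1}$ and the relaxation condition \eqref{eq:discret_rbc} to give $\mathcal{L}_b\Vert L^{k+1}/T^{k+1}\Vert_{\Gamma}^2$ together with surface-flux terms, while \eqref{eq:discret_gnbc} and \eqref{eq:discret_tbcd} supply the slip, Kapitza, and surface-conduction squares and the wall-power and outgoing-flux terms. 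The main obstacle is the discrete entropy inequality: the time-indexing must be chosen so that the linearized coefficients coincide \emph{exactly} with $\hat\mu^{k+1}/T^{k+1}$ and $\frac{3n^k}{2T^{k+1}}$. This forces one to retain $n^k$ as the prefactor of the logarithmic difference (matching the scheme's lagged coefficient $\frac{3}{2}n^k$) and to linearize $\ln T$ about $T^{k+1}$ and $\phi$ about $n^{k+1}$; any other combination leaves uncontrolled cross terms, so this alignment is the delicate part, after which the algebra is a direct transcription of the continuous manipulations.
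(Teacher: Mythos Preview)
Your approach is correct and essentially identical to the paper's: split $\hat S$, apply concavity of $\ln T$ and of $n\ln\frac{1-n}{n}$ (linearizing at $T^{k+1}$ and $n^{k+1}$, respectively) together with the polarization identity for $|\nabla n|^2$ to obtain the discrete entropy inequality \eqref{eq:Sbk+1-Sbk}, then substitute \eqref{eq:discret_mass}, \eqref{eq:discret_temp} and the boundary conditions \eqref{eq:discret_bcd} exactly as in \cref{thm:secondlawcontinue}. One minor slip: in your displayed identity the extra $-\frac{1}{1-n^{k+1}}$ should be dropped, since $\phi'(n^{k+1})=\ln\frac{1-n^{k+1}}{n^{k+1}}-\frac{1}{1-n^{k+1}}$ already contains it.
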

\begin{proof}
Using \eqref{eq:discret_hate_hatS} and the concavity and convexity of $\hat{S}$, we obtain
\begin{align}
    S_{b}^{k+1}-S_{b}^{k}
    =&\int_{\Omega}\left(\frac{3}{2}n^{k}\left(\ln{T^{k+1}}-\ln{T^{k}}\right)+\frac{3}{2}\left(n^{k+1}-n^{k}\right)\ln{T^{k+1}}\right)d\mathbf{x} \nonumber\\
    &+\int_{\Omega} \left(n^{k+1}\ln{\left(\frac{1-n^{k+1}}{n^{k+1}}\right)}-n^{k}\ln{\left(\frac{1-n^{k}}{n^{k}}\right)}\right)d\mathbf{x} \nonumber\\
    &-\int_{\Omega} \varepsilon\left(2\nabla n^{k+1}\cdot\nabla\left( n^{k+1}-n^{k}\right)-\left\vert\nabla\left( n^{k+1}-n^{k}\right)\right\vert^{2}\right)d\mathbf{x} \nonumber\\
    \geq&\int_{\Omega}\left(\frac{3}{2}\frac{n^{k}}{T^{k+1}}\left(T^{k+1}-T^{k}\right)+ \left(\frac{3}{2}-\frac{2n^{k+1}}{T^{k+1}}-\frac{\hat{\mu}^{k+1}}{T^{k+1}}\right)\left( n^{k+1}-n^{k}\right)\right)d\mathbf{x} \label{eq:Sbk+1-Sbk}\\
    &-\int_{\Gamma} 2\varepsilon\partial_{\gamma} n^{k+1}\left( n^{k+1}-n^{k}\right)dA. \nonumber
\end{align}
Similar to \cref{thm:secondlawcontinue}, substituting \eqref{eq:discret_mass} and \eqref{eq:discret_temp} into \eqref{eq:Sbk+1-Sbk} yields
\begin{align}
    &\frac{S_{b}^{k+1}-S_{b}^{k}}{\delta t}
    \geq-\int_{\Omega}\nabla\cdot\left(\hat{S}^{k+1}\mathbf{v}^{k+1}\right)d\mathbf{x}+\frac{\mathcal{R}_{e}\mathcal{R}}{2\delta t}\left\Vert\left(\frac{n^{k}}{T^{k+1}}\right)^{\frac{1}{2}}\left(\mathbf{v}^{k+1}-\mathbf{v}^{k}\right)\right\Vert^{2} \nonumber\\
    &+\frac{1}{\mathcal{R}_{e}}\left\Vert\frac{\left(n^{k+1}\right)^{\frac{1}{2}}\nabla T^{k+1}}{T^{k+1}}\right\Vert^{2}+\frac{\mathcal{R}}{2}\left\Vert \left(\frac{n^{k+1}}{T^{k+1}}\right)^{\frac{1}{2}}D\left(\mathbf{v}^{k+1}\right)\right\Vert^{2}+\frac{\mathcal{R}}{3}\left\Vert \left(\frac{n^{k+1}}{T^{k+1}}\right)^{\frac{1}{2}}\nabla\cdot\mathbf{v}^{k+1}\right\Vert^{2} \nonumber\\
    &+\mathcal{L}_{d}\left\Vert\nabla \frac{\hat{\mu}^{k+1}}{T^{k+1}}\right\Vert^{2}+\int_{\Gamma}\left(\frac{n^{k+1}\partial_{\gamma} T^{k+1}}{\mathcal{R}_{e}T^{k+1}}- 2\varepsilon\partial_{\gamma} n^{k+1}\left(\frac{n^{k+1}-n^{k}}{\delta t}+\partial_{\tau} n^{k+1} v_{\tau}^{k+1}\right)\right)dA.\nonumber
\end{align}
Using \eqref{eq:discret_bcd}, the last term on the right-hand side can be derived as
    \begin{align}
        rhs=&\int_{\Gamma}\left( \frac{\mathcal{R}}{\mathcal{L}_{s}}\frac{\beta^{k+1}}{T^{k+1}} v_{w}v_{\tau}^{slip,k+1}dA- \mathcal{W}\partial_{\tau}\left(\sigma_{s}^{k+1}v_{\tau}^{k+1}\right)-\mathcal{R}_{e}^{-1}\partial_{\tau}\frac{q_{s}^{k+1}}{T^{k+1}}-\frac{\mathbf{q}_{w}^{k+1}\cdot\bm{\gamma}}{\mathcal{R}_{e}T_{w}}\right)dA\nonumber\\
        &-\int_{\Gamma}\mathcal{W}\frac{\sigma_{s}^{k+1}-\sigma_{s}^{k}}{\delta t}dA+\mathcal{L}_{b}\left\Vert\frac{L^{k+1}}{T^{k+1}} \right\Vert^{2}_{\Gamma}+\frac{\mathcal{R}}{\mathcal{L}_{s}}\left\Vert \left(\frac{\beta^{k+1}}{T^{k+1}}\right)^{\frac{1}{2}}v_{\tau}^{slip,k+1}\right\Vert^{2}_{\Gamma} \nonumber\\
        &+\mathcal{R}_{e}^{-1}\mathcal{L}_{\kappa}^{-1}\left\Vert \left(\frac{1}{\kappa^{k+1}}\right)^{\frac{1}{2}}\left(\frac{1}{T^{k+1}}-\frac{1}{T_{w}}\right)\right\Vert^{2}_{\Gamma}+\frac{\mathcal{L}_{\lambda}}{\mathcal{R}_{e}}\left\Vert \left(\lambda_{s}^{k+1}\right)^{\frac{1}{2}}\frac{\partial_{\tau}T^{k+1}}{T^{k+1}}\right\Vert^{2}_{\Gamma}.\nonumber
    \end{align}
Substituting this term into $S_{b}^{k+1}-S_{b}^{k}$ leads to the result.
\end{proof}

\subsection{A decoupled, linear, and unconditionally entropy-stable scheme}
\label{subsec:A decoupled linear and unconditionally entropy-stable scheme}
Notice that the above scheme suffers from nonlinearity and coupling between variables, which lead to high computational costs. 
As noted in \cref{rmk:tempequation}, the tight coupling can be avoided by designing the scheme to satisfy only the second law of thermodynamics.
To address the nonlinearity, we first separate $\hat{S}$ into three parts:
\begin{equation*}
    \begin{aligned}
        \hat{S} =n\ln{\left(T^{\frac{3}{2}}\frac{1-n}{n}\right)}-\varepsilon \vert\nabla n\vert^{2}=\frac{3}{2}n\ln{T}+n\ln{\left(\frac{1-n}{n}\right)}-\varepsilon \vert\nabla n\vert^{2}:=\sum_{i=1}^{3}\hat{S}_{i},
    \end{aligned}
\end{equation*}
where $\hat{S}_{1}=3n\ln{T}/2$, $\hat{S}_{2}=n\ln{(1/n-1)}$, and $\hat{S}_{3}=-\varepsilon \vert\nabla n\vert^{2}$. 
Thus, the bulk entropy consists of three components: $S_{b}=\sum_{i=1}^{3}\int_{\Omega}\hat{S}_{i}d\mathbf{x}:=\sum_{i=1}^{3}S_{bi}$.
The main challenge in linearizing \eqref{eq:dimensionless_equations}-\eqref{eq:dimensionless_bcd} lies in term $\delta S_{b2} / \delta n$.
However, both the stabilized approach and the convex splitting approach cannot really overcome this.
Furthermore, when $f_{s}$ is defined as a nonlinear function of $n$ and $T$, a stabilization term is also required in these approaches.
Hence, we introduce an entropy-stable MSAV approach \cite{cheng2018multiple,liu2024efficient}.

For the system in \cref{subsec:Dimensionless equations}, we define two SAVs for the part of bulk entropy and surface entropy separately: $r(t)=\sqrt{S_{b2}+C_{r}}$ and $q(t)=\sqrt{S_{s}+C_{q}}$ \cite{wang2022energy,liu2024efficient}, where constants $C_{r}$ and $C_{q}$ are chosen such that $S_{b2}+C_{r},S_{s}+C_{q}>0$.
By the definitions of two SAVs, \eqref{eq:dimensionless_mass} and chemical potential can be reformulated as
\begin{subequations}\label{eq:dimensionless_sav1}
\begin{align}
    &\frac{\partial n}{\partial t}+\nabla\cdot(n\mathbf{v})=\mathcal{L}_{d}\Delta\tilde{\mu},\\
    &\tilde{\mu}=\frac{3}{2}-\frac{3}{2}\ln{T}-\frac{2n}{T}+\frac{r}{\sqrt{S_{b2}+C_{r}}}\left(\frac{1}{1-n}-\ln{\left(\frac{1-n}{n}\right)}\right)-2\varepsilon \Delta n, \label{eq:dimensionless_mu_sav}\\
    &\frac{dr}{dt}=\frac{1}{2\sqrt{S_{b2}+C_{r}}}\int_{\Omega} \left(\ln{\left(\frac{1-n}{n}\right)}-\frac{1}{1-n}\right)\frac{\partial n}{\partial t}d\mathbf{x}.
\end{align}
\end{subequations}
Meanwhile, \eqref{eq:dimensionless_rbc} can be expanded as
\begin{subequations}\label{eq:dimensionless_sav2}
\begin{align}
    & \frac{\partial n}{\partial t}+v_{\tau}\partial_{\tau}n=-\mathcal{L}_{b}\tilde{L}, \ \partial_{\gamma}\tilde{\mu}=0, \\
    & \tilde{L}=\frac{q}{\sqrt{S_{s}+C_{q}}}\mathcal{W}+2\varepsilon \partial_{\gamma}n,\\
    &\frac{dq}{dt}=\frac{1}{2\sqrt{S_{s}+C_{q}}}\int_{\Gamma} -\mathcal{W}\frac{\partial n}{\partial t}dA.
\end{align}
\end{subequations}
Thus, a decoupled, linear, and unconditionally entropy-stable scheme can be constructed as follows: for $k \geq 0$, find $(n^{k+1}, r^{k+1}, q^{k+1})$ by
\begin{subequations}\label{eq:discret_nrq_dc}
    \begin{align}
        &\frac{n^{k+1}-n^{k}}{\delta t}+\nabla\cdot\left(n^{k}\mathbf{v}^{k}\right)=\mathcal{L}_{d}\Delta \tilde{\mu}^{k+1},\label{eq:discret_mass_dc}\\
        &\tilde{\mu}^{k+1}
        =\tilde{\mu}_{b1}^{k}+\frac{r^{k+1}+r^{k}}{2\sqrt{S_{b2}^{k}+C_{r}}}\tilde{\mu}_{b2}^{k}-2\varepsilon \Delta n^{k+1}, \label{eq:discret_mu_dc}\\
        &r^{k+1}-r^{k}=\frac{1}{2\sqrt{S_{b2}^{k}+C_{r}}}\int_{\Omega} -\tilde{\mu}_{b2}^{k}\left(n^{k+1} - n^{k}\right)d\mathbf{x},\label{eq:discret_sav1_dc}\\
        &\frac{n^{k+1}-n^{k}}{\delta t}+v_{\tau}^{k}\partial_{\tau}n^{k+1}=-\mathcal{L}_{b}\tilde{L}^{k+1}, \quad \partial_{\gamma}\tilde{\mu}^{k+1}=0,\label{eq:discret_rbc_dc}\\
        &\tilde{L}^{k+1}=\frac{q^{k+1}+q^{k}}{2\sqrt{S_{s}^{k}+C_{q}}}\mathcal{W}+2\varepsilon \partial_{\gamma}n^{k+1},\label{eq:discret_L_dc}\\
        &q^{k+1}-q^{k}=\frac{1}{2\sqrt{S_{s}^{k}+C_{q}}}\int_{\Gamma} -\mathcal{W}\left(n^{k+1} - n^{k}\right)dA.\label{eq:discret_sav2_dc}
    \end{align}
\end{subequations}
Here, $\tilde{\mu}_{b1}^{k}:=3(1-\ln{T^{k}})/2-2n^{k}/T^{k}$ and $\tilde{\mu}_{b2}^{k}:=1/(1-n^{k})-\ln{(1/n^{k}-1)}$.
Then, compute $\mathbf{v}^{k+1}$ by
\begin{subequations}\label{eq:discret_v_dc}
    \begin{align}
        &\mathcal{R}_{e}\mathcal{R}n^{k}\left(\frac{\mathbf{v}^{k+1}-\mathbf{v}^{k}}{\delta t}+\mathbf{v}^{k}\cdot\nabla\mathbf{v}^{k+1}\right)-\mathcal{B}\nabla \tilde{\mu}^{k+1}\cdot\nabla\mathbf{v}^{k+1}\label{eq:discret_ns_dc}\\
        &=\nabla\cdot\left(-\Pi^{k+1}+\mathcal{R}\Theta^{k+1}\right),\nonumber \\
        &\beta^{k+1}v_{\tau}^{slip,k+1}=\frac{\mathcal{L}_{s}}{\mathcal{R}}\left(-\mathcal{R}\Theta_{\gamma\tau}^{k+1}+T^{k}\tilde{L}^{k+1}\partial_{\tau}n^{k+1}+\mathcal{W}\left(n^{k+1}-n_{c}\right)\partial_{\tau}T^{k}\right),\label{eq:discret_gnbc_dc}\\
        &v_{\gamma}^{k+1}=0, \nonumber
    \end{align}
\end{subequations}
where $\Pi^{k+1}=\hat{p}^{k+1}\mathbf{I}+2\varepsilon T^{k}\nabla n^{k+1}\otimes\nabla n^{k+1}$, $\Theta^{k+1}=n^{k+1}(D(\mathbf{v}^{k+1})+(\nabla\cdot\mathbf{v}^{k+1})\mathbf{I}/3)$ and $v_{\tau}^{slip,k+1}=v_{\tau}^{k+1}-v_{w}$.
Find $T^{k+1}$ by
\begin{subequations}\label{eq:discret_t_dc}
    \begin{align}
        &\frac{3n^{k+1}}{2}\frac{T^{k+1}-T^{k}}{\delta t}
        +T^{k+1}\left(\frac{n^{k}}{T^{k}}\right)^{2}\nabla\cdot \left(T^{k}\mathbf{v}^{k}\right)\label{eq:discret_temp_dc}\\
        &+\mathcal{L}_{d}T^{k+1}\left(\frac{3}{2}-\frac{2n^{k}}{T^{k}}\right)\Delta  \tilde{\mu}^{k+1}=n^{k}T^{k+1}\nabla\tilde{\mu}^{k+1}\cdot\mathbf{v}^{k}\nonumber\\
        &-T^{k+1}\nabla\cdot\left(\frac{\Pi^{k+1}}{T^{k}}\cdot\mathbf{v}^{k}\right)+\mathcal{R}\Theta^{k+1}:\nabla\mathbf{v}^{k+1}+\mathcal{R}_{e}^{-1}\nabla\cdot\left(n^{k+1}\nabla T^{k+1}\right),\nonumber\\
        &-n^{k+1}\partial_{\gamma}T^{k+1}+\frac{T^{k+1}}{\mathcal{L}_{\kappa}\kappa^{k}T^{k}}\left(\frac{1}{T^{k}}-\frac{1}{T_{w}}\right) \label{eq:discret_tbcd_dc}\\
        &+\mathcal{R}_{e}q^{k\ast}\mathcal{W}T^{k+1}\left(n^{k+1}-n_{c}\right)\partial_{\tau}v_{\tau}^{k}+\mathcal{L}_{\lambda}\partial_{\tau}\left(\lambda_{s}^{k+1}\partial_{\tau}T^{k+1}\right)\nonumber\\
        &=\mathcal{R}_{e}v_{\tau}^{k+1}\left(\mathcal{R}\Theta_{\gamma\tau}^{k+1}-T^{k}\tilde{L}^{k+1}\partial_{\tau}n^{k+1}-\mathcal{W}\left(n^{k+1}-n_{c}\right)\partial_{\tau}T^{k}\right),\nonumber
    \end{align}
\end{subequations}
where $q^{k\ast}:=(q^{k+1}+q^{k})/2\sqrt{S_{s}^{k}+C_{q}}$ and $\zeta^{k+1}=(1-\zeta_{gl})(n^{k+1}-n_{g})/(n_{l}-n_{g})+\zeta_{gl}, \ \zeta=\beta, \ \kappa,\ \lambda_{s}$.
After Step 1, we update $\hat{p}$ by the generalized Euler equation \eqref{eq:generalizedEuler}, which is discretized as
    \begin{align}
        \hat{p}^{k+1}&=n^{k}T^{k}\tilde{\mu}^{k+1}-\hat{e}^{k}+T^{k}\tilde{S}^{k+1} \nonumber\\
        &=\frac{n^{k}T^{k}}{1-n^{k}}-\left(n^{k}\right)^{2}-\varepsilon T^{k}\left\vert\nabla n^{k+1}\right\vert^{2}-2\varepsilon n^{k}T^{k}\Delta n^{k+1},\label{eq:pressure_dc}
    \end{align}
where $\tilde{S}^{k+1}$ is the SAV approximation of $\hat{S}^{k+1}$ and is only used to update $\hat{p}$:
\begin{equation*}
    \begin{aligned} 
        \tilde{S}^{k+1}=\left(1-r^{k\ast}\right)\frac{n^{k}}{1-n^{k}}+\frac{3}{2}n^{k}\ln{T^{k}}+r^{k\ast}n^{k}\ln{\left(\frac{1-n^{k}}{n^{k}}\right)}-\varepsilon \left\vert\nabla n^{k+1}\right\vert^{2},
    \end{aligned}
\end{equation*}
where $r^{k\ast}:=(r^{k+1}+r^{k})/2\sqrt{S_{b2}^{k}+C_{r}}$. 
Then, we define the modified total entropy (denoted by $S_{\text{sav}}$) at the time $t^{k}$ as
\begin{align}
    S_{\text{sav}}^{k}=S_{b1}^{k}+\left\vert r^{k}\right\vert^{2}-C_{r}+S_{b3}^{k}+\left\vert q^{k}\right\vert^{2}-C_{q}, \label{eq:totaldiscreteentropydefinesav}
\end{align}
where $S_{b1}^{k}=\int_{\Omega}3n^{k}\ln{T^{k}}/2 d\mathbf{x}$ and $ S_{b3}^{k}=\int_{\Omega}-\varepsilon \vert\nabla n^{k}\vert^{2}d\mathbf{x}$.
\begin{theorem}\label{thm:secondlaw_dc}
    The scheme \eqref{eq:discret_nrq_dc}-\eqref{eq:discret_t_dc} satisfies the temporally discrete second law of thermodynamics as  
    \begin{align}
        &\frac{S_{\text{sav}}^{k+1}-S_{\text{sav}}^{k}}{\delta t}
        \geq\frac{\mathcal{R}}{\mathcal{L}_{s}}\int_{\Gamma}\frac{\beta^{k+1}}{T^{k+1}} v_{w}v_{\tau}^{slip,k+1}dA-\int_{\Gamma}\left( \mathcal{R}_{e}^{-1}\partial_{\tau}\frac{q_{s}^{k+1}}{T^{k+1}}+\frac{\mathbf{q}_{w}^{k+1}\cdot\bm{\gamma}}{\mathcal{R}_{e}T_{w}}\right. \nonumber\\
        &\left. +q^{k\ast}\mathcal{W}\partial_{\tau}\left(\sigma_{s}^{k+1}v_{\tau}^{k}\right)\right)dA+\mathcal{R}_{e}^{-1}\left\Vert\frac{\left(n^{k+1}\right)^{\frac{1}{2}}\nabla T^{k+1}}{T^{k+1}}\right\Vert^{2}+\frac{\mathcal{R}}{2}\left\Vert \left(\frac{n^{k+1}}{T^{k+1}}\right)^{\frac{1}{2}}D(\mathbf{v}^{k+1})\right\Vert^{2} \nonumber\\
        &+\frac{\mathcal{R}}{3}\left\Vert \left(\frac{n^{k+1}}{T^{k+1}}\right)^{\frac{1}{2}}\nabla\cdot\mathbf{v}^{k+1}\right\Vert^{2}+\mathcal{L}_{d} \left\Vert\nabla\tilde{\mu}^{k+1}\right\Vert^{2} +\frac{\mathcal{R}}{\mathcal{L}_{s}}\left\Vert \left(\frac{\beta^{k+1}}{T^{k+1}}\right)^{\frac{1}{2}}v_{\tau}^{slip,k+1}\right\Vert^{2}_{\Gamma}\nonumber\\
        &+\mathcal{L}_{b}\left\Vert \tilde{L}^{k+1}\right\Vert^{2}_{\Gamma}+\mathcal{R}_{e}^{-1}\mathcal{L}_{\kappa}^{-1}\left\Vert \left(\frac{1}{\kappa^{k}}\right)^{\frac{1}{2}}\left(\frac{1}{T^{k}}-\frac{1}{T_{w}}\right)\right\Vert^{2}_{\Gamma}+\frac{\mathcal{L}_{\lambda}}{\mathcal{R}_{e}}\left\Vert \left(\lambda_{s}^{k+1}\right)^{\frac{1}{2}}\frac{\partial_{\tau}T^{k+1}}{T^{k+1}}\right\Vert^{2}_{\Gamma}, \nonumber
    \end{align}
    where $q_{s}^{k+1}=-\mathcal{L}_{\lambda}\lambda_{s}^{k+1}\partial_{\tau}T^{k+1}$ and $\mathbf{q}_{w}^{k}\cdot\bm{\gamma}=-(\mathcal{L}_{\kappa}\kappa^{k})^{-1}(1/T^{k}-1/T_{w})$.
\end{theorem}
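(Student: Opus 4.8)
\section*{Proof proposal}

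The plan is to follow the template of \cref{thm:secondlawcontinue,thm:entropyoriginal}, while exploiting the SAV structure to replace the troublesome nonlinear bulk term $S_{b2}$ and the surface entropy $S_s$ by \emph{exact} quadratic differences. First I would split $S_{\text{sav}}^{k+1}-S_{\text{sav}}^{k}$ according to \eqref{eq:totaldiscreteentropydefinesav} into the four pieces $S_{b1}^{k+1}-S_{b1}^{k}$, $|r^{k+1}|^{2}-|r^{k}|^{2}$, $S_{b3}^{k+1}-S_{b3}^{k}$, and $|q^{k+1}|^{2}-|q^{k}|^{2}$. For the two SAV pieces I use the elementary identity $a^{2}-b^{2}=(a+b)(a-b)$ together with the discrete SAV evolutions \eqref{eq:discret_sav1_dc} and \eqref{eq:discret_sav2_dc}; this turns them into the exact (not merely bounded) expressions $-\int_{\Omega} r^{k\ast}\tilde{\mu}_{b2}^{k}(n^{k+1}-n^{k})\,d\mathbf{x}$ and $-\int_{\Gamma} q^{k\ast}\mathcal{W}(n^{k+1}-n^{k})\,dA$, with $r^{k\ast},q^{k\ast}$ the averaged SAV ratios defined in the scheme. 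This is precisely the mechanism that removes the need for convexity/concavity of $S_{b2}$ and for any stabilization of a nonlinear $f_{s}$.

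Next I treat the two remaining pieces by inequalities. For $S_{b1}=\int_{\Omega}\tfrac{3}{2}n\ln T\,d\mathbf{x}$ I write $n^{k+1}\ln T^{k+1}-n^{k}\ln T^{k}=n^{k+1}(\ln T^{k+1}-\ln T^{k})+(n^{k+1}-n^{k})\ln T^{k}$ and apply the tangent-line inequality for the concave logarithm, $\ln T^{k+1}-\ln T^{k}\ge (T^{k+1}-T^{k})/T^{k+1}$, producing the lower bound whose leading term is $\tfrac{3n^{k+1}}{2T^{k+1}}(T^{k+1}-T^{k})$. For $S_{b3}=-\varepsilon\int_{\Omega}|\nabla n|^{2}\,d\mathbf{x}$ I expand $|\nabla n^{k+1}|^{2}-|\nabla n^{k}|^{2}=2\nabla n^{k+1}\!\cdot\!\nabla(n^{k+1}-n^{k})-|\nabla(n^{k+1}-n^{k})|^{2}$, drop the nonnegative remainder, and integrate by parts to obtain the bulk term $2\varepsilon\Delta n^{k+1}(n^{k+1}-n^{k})$ and the boundary term $-2\varepsilon\partial_{\gamma}n^{k+1}(n^{k+1}-n^{k})$. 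Collecting the coefficients of $(n^{k+1}-n^{k})$ coming from $S_{b1}$ (namely $\tfrac{3}{2}\ln T^{k}$), from $|r|^{2}$ (namely $-r^{k\ast}\tilde{\mu}_{b2}^{k}$), and from $S_{b3}$ (namely $2\varepsilon\Delta n^{k+1}$), the definition \eqref{eq:discret_mu_dc} of $\tilde{\mu}^{k+1}$ makes them collapse exactly to $\tfrac{3}{2}-2n^{k}/T^{k}-\tilde{\mu}^{k+1}$. I thus arrive at the clean lower bound $S_{\text{sav}}^{k+1}-S_{\text{sav}}^{k}\ge\int_{\Omega}\big[\tfrac{3n^{k+1}}{2T^{k+1}}(T^{k+1}-T^{k})+(\tfrac{3}{2}-2n^{k}/T^{k}-\tilde{\mu}^{k+1})(n^{k+1}-n^{k})\big]\,d\mathbf{x}$ plus the two boundary integrals, which is the SAV analogue of \eqref{eq:Sbk+1-Sbk}.

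Dividing by $\delta t$, I substitute the discrete mass equation \eqref{eq:discret_mass_dc} for $(n^{k+1}-n^{k})/\delta t$ and the discrete temperature equation \eqref{eq:discret_temp_dc}, divided by $T^{k+1}$, for the increment $\tfrac{3n^{k+1}}{2T^{k+1}}(T^{k+1}-T^{k})/\delta t$. The terms $\mathcal{L}_{d}(\tfrac{3}{2}-2n^{k}/T^{k})\Delta\tilde{\mu}^{k+1}$ produced by the mass equation cancel against the identical term in the temperature equation; integrating by parts $-\mathcal{L}_{d}\tilde{\mu}^{k+1}\Delta\tilde{\mu}^{k+1}$ (with vanishing boundary contribution by $\partial_{\gamma}\tilde{\mu}^{k+1}=0$) gives $\mathcal{L}_{d}\|\nabla\tilde{\mu}^{k+1}\|^{2}$; integrating by parts the conductive term gives $\mathcal{R}_{e}^{-1}\|(n^{k+1})^{1/2}\nabla T^{k+1}/T^{k+1}\|^{2}$ together with the boundary flux $\mathcal{R}_{e}^{-1}\int_{\Gamma}n^{k+1}\partial_{\gamma}T^{k+1}/T^{k+1}\,dA$; and $\mathcal{R}\,\Theta^{k+1}\!:\!\nabla\mathbf{v}^{k+1}/T^{k+1}$ splits into the $D(\mathbf{v}^{k+1})$ and $\nabla\cdot\mathbf{v}^{k+1}$ dissipation norms using $\Theta^{k+1}=n^{k+1}(D(\mathbf{v}^{k+1})+\tfrac{1}{3}(\nabla\cdot\mathbf{v}^{k+1})\mathbf{I})$.

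What remains are the reversible advective terms $\tilde{\mu}^{k+1}\nabla\cdot(n^{k}\mathbf{v}^{k})$, $-(\tfrac{3}{2}-2n^{k}/T^{k})\nabla\cdot(n^{k}\mathbf{v}^{k})$, $-(n^{k}/T^{k})^{2}\nabla\cdot(T^{k}\mathbf{v}^{k})$, $n^{k}\nabla\tilde{\mu}^{k+1}\cdot\mathbf{v}^{k}$, and $-\nabla\cdot(\Pi^{k+1}/T^{k}\cdot\mathbf{v}^{k})$. The hard part will be to show that these assemble into a pure divergence $-\nabla\cdot(\hat{S}\,\mathbf{v}^{k})$, whose bulk integral vanishes since $v_{\gamma}^{k}=0$, while its boundary trace together with the term $-\Pi^{k+1}_{\gamma\tau}v_{\tau}^{k}/T^{k}$ feeds the surface balance. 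This is the discrete, mixed-time-level analogue of \cref{lma:gradinet_p}: it holds precisely because $\hat{p}^{k+1}$ is updated through the discrete generalized Euler relation \eqref{eq:pressure_dc} and $\tilde{\mu}^{k+1},\Pi^{k+1}$ are defined consistently, so that $\nabla\cdot(\Pi^{k+1}/T^{k})=n^{k}\nabla\tilde{\mu}^{k+1}-\hat{e}^{k}\nabla(1/T^{k})$ holds exactly at the discrete level; verifying this identity and tracking the boundary terms it generates is the main obstacle. Finally I gather every boundary contribution (from $S_{b3}$, from $|q|^{2}-C_{q}$, from the conductive flux, and from the reversible stress and advection) and substitute the discrete boundary conditions \eqref{eq:discret_rbc_dc}, \eqref{eq:discret_gnbc_dc}, \eqref{eq:discret_tbcd_dc}. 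Exactly as in the computation of $rhs$ in \cref{thm:entropyoriginal}, this yields the surface dissipation norms (the $\mathcal{L}_{b}\|\tilde{L}^{k+1}\|_{\Gamma}^{2}$, the $\beta$-slip, the $\kappa$-Kapitza, and the $\lambda_{s}$ surface-heat terms), the wall-work term $\tfrac{\mathcal{R}}{\mathcal{L}_{s}}\int_{\Gamma}\tfrac{\beta^{k+1}}{T^{k+1}}v_{w}v_{\tau}^{slip,k+1}\,dA$, and the outgoing entropy flux including $q^{k\ast}\mathcal{W}\partial_{\tau}(\sigma_{s}^{k+1}v_{\tau}^{k})$, which is the asserted inequality. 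Throughout, the delicate bookkeeping is to keep the explicit/implicit time levels consistent so that the discrete Euler/Lemma identity is exact and the boundary terms match the time levels appearing in \eqref{eq:discret_gnbc_dc} and \eqref{eq:discret_tbcd_dc}.
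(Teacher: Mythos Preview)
Your proposal is essentially correct and follows the paper's route: the SAV identities for $|r^{k+1}|^2-|r^k|^2$ and $|q^{k+1}|^2-|q^k|^2$, the concavity bound on $\ln T$ for $S_{b1}$, the square-expansion bound for $S_{b3}$, and the resulting intermediate inequality all match \eqref{eq:Ssavb1k+1-Ssavb1k} exactly. The substitution of \eqref{eq:discret_mass_dc} and \eqref{eq:discret_temp_dc} and the final boundary bookkeeping are also what the paper does (the paper simply writes ``using the similar derivation in \cref{thm:entropyoriginal}'').

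One point where you overcomplicate things: you flag as ``the main obstacle'' the verification of a discrete analogue of \cref{lma:gradinet_p}, namely $\nabla\!\cdot(\Pi^{k+1}/T^{k})=n^{k}\nabla\tilde{\mu}^{k+1}-\hat{e}^{k}\nabla(1/T^{k})$, and attribute its validity to the pressure update \eqref{eq:pressure_dc}. In fact no such identity is needed, and \eqref{eq:pressure_dc} plays no role in the entropy estimate. The whole purpose of writing the temperature equation in the form \eqref{eq:tempequation_2} (see \cref{rmk:tempequation}) and discretizing it as \eqref{eq:discret_temp_dc} is that, after dividing by $T^{k+1}$, the advective residue decomposes \emph{automatically} into exact divergences: $\tilde{\mu}^{k+1}\nabla\!\cdot(n^{k}\mathbf v^{k})+n^{k}\nabla\tilde{\mu}^{k+1}\!\cdot\mathbf v^{k}=\nabla\!\cdot(n^{k}\tilde{\mu}^{k+1}\mathbf v^{k})$, while $2\tfrac{n^{k}}{T^{k}}\nabla\!\cdot(n^{k}\mathbf v^{k})-(\tfrac{n^{k}}{T^{k}})^{2}\nabla\!\cdot(T^{k}\mathbf v^{k})=\nabla\!\cdot\big(\tfrac{(n^{k})^{2}}{T^{k}}\mathbf v^{k}\big)$, and $-\nabla\!\cdot(\Pi^{k+1}/T^{k}\cdot\mathbf v^{k})$ is already a divergence whose only nontrivial trace is the boundary term $-\int_{\Gamma}\Pi_{\gamma\tau}^{k+1}v_{\tau}^{k}/T^{k}\,dA$. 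These integrate to zero (using $v_{\gamma}^{k}=0$) regardless of how $\hat{p}^{k+1}$ is defined, so there is nothing to verify here. The rest of your plan goes through as written.
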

\begin{proof}
    From \eqref{eq:discret_mu_dc}, \eqref{eq:discret_sav1_dc}, \eqref{eq:discret_L_dc} and \eqref{eq:discret_sav2_dc}, we have
    \begin{align}
        \left\vert r^{k+1}\right\vert^{2}-\left\vert r^{k}\right\vert^{2}
        =&\int_{\Omega}\left(\tilde{\mu}_{b1}^{k}-2\varepsilon \Delta n^{k+1}-\Tilde{\mu}^{k+1}\right)\left(n^{k+1}-n^{k}\right)d\mathbf{x}, \nonumber\\
        \left\vert q^{k+1}\right\vert^{2}-\left\vert q^{k}\right\vert^{2}
        =&\int_{\Gamma} \left(2\varepsilon\partial_{\gamma} n^{k+1}-\tilde{L}^{k+1}\right)\left( n^{k+1}-n^{k}\right)dA. \nonumber
    \end{align}
    Using the similar derivation in \cref{thm:entropyoriginal}, we can obtain
    \begin{align}
       S_{\text{sav}}^{k+1}-S_{\text{sav}}^{k}\geq
       &\int_{\Omega}\left(\frac{3}{2}-\frac{2n^{k}}{T^{k}}-\Tilde{\mu}^{k+1}\right)\left(n^{k+1}-n^{k}\right)d\mathbf{x} \label{eq:Ssavb1k+1-Ssavb1k}\\
       &+\int_{\Omega}\frac{3}{2}\frac{n^{k+1}}{T^{k+1}}\left(T^{k+1}-T^{k}\right)d\mathbf{x}-\int_{\Gamma} \tilde{L}^{k+1}\left( n^{k+1}-n^{k}\right)dA.\nonumber
    \end{align}
    Substituting \eqref{eq:discret_mass_dc} and \eqref{eq:discret_temp_dc} into \eqref{eq:Ssavb1k+1-Ssavb1k} and using boundary conditions \eqref{eq:discret_rbc_dc}, \eqref{eq:discret_gnbc_dc} and \eqref{eq:discret_tbcd_dc}, we can obtain the desired result.
\end{proof}
\begin{remark}
    (\romannumeral 1) As mentioned previously, since the discrete \cref{lma:gradinet_p} cannot be satisfied, the scheme \eqref{eq:discret_nrq_dc}-\eqref{eq:discret_t_dc} cannot satisfy the temporally discrete first law of thermodynamics.
    Consequently, the variables $n^{k+1}$, $\mathbf{v}^{k+1}$ and $T^{k+1}$ can be easily decoupled in this scheme.
    (\romannumeral 2) To obtain unconditionally original entropy-stable numerical schemes, as opposed to the modified entropy in this scheme, the Lagrange multiplier approach \cite{cheng2020new,wang2024cicp} can be extended in our model.
    However, this approach requires solving a nonlinear algebraic equation for the Lagrange multiplier which leads to additional computational costs and theoretical difficulties for its analysis \cite{cheng2024unique}.
\end{remark}
\textbf{The fully discrete scheme:}
The resulting scheme is discretized by a finite element method in space.
We introduce a discontinuous Galerkin (DG) method for the number density equation \cite{kay2009discontinuous,chen2024energy}, while the remaining two equations are discretized using continuous Galerkin (CG) approximations.
Let $\mathcal{T}_{h}$ denote a tessellation of $\Omega$ with quadrilateral elements $K$ and mesh size $h$, and let $\mathcal{E}_{h}$ be the set of all edges of $\mathcal{T}_{h}$ distinguishing between the interior edges $\mathcal{E}_{h}^{I}$ and the solid boundary edges $\Gamma_{h}$.
For two neighboring elements $K_{1}$, $K_{2}\in\mathcal{T}_{h}$, let $E=\partial K_{1}\cap \partial K_{2}$ be a shared edge with the unit normal vector $\bm{\gamma}_{E}$ directed outward from $K_{1}$.
We then define the average $\{\cdot\}$ and jump $[\cdot]$ operators as $\{\phi\}=\left((\phi\vert_{K_{1}})\vert_{E}+(\phi\vert_{K_{2}})\vert_{E}\right)/2$ and $[\phi]=(\phi\vert_{K_{1}})\vert_{E}-(\phi\vert_{K_{2}})\vert_{E}$, respectively.
Let $\mathbb{Q}^{p}(K)$ be the space of polynomials of degree $\leq p$ in each space direction on $K$.
The finite element spaces are denoted by
    \begin{align}
        \Phi_{h}&=\left\lbrace \phi\in L^{2}\left(\Omega\right): \phi\vert_{K}\in \mathbb{Q}^{p}(K),\ \forall K\in \mathcal{T}_{h}\right\rbrace, \nonumber\\
        W_{h}&=\left\lbrace w\in C^{0}\left(\Omega\right): w\vert_{K}\in \mathbb{Q}^{p}(K),\ \forall K\in \mathcal{T}_{h}\right\rbrace, \nonumber\\
        \mathbf{U}_{h}&=\left\lbrace\mathbf{u}\in \mathbf{C}^{0}\left(\Omega\right): \mathbf{u}\vert_{K}\in \left(\mathbb{Q}^{p}(K)\right)^{2},\ \forall K\in \mathcal{T}_{h}\right\rbrace. \nonumber
    \end{align}
Note that functions in $\Phi_{h}$ are allowed to be completely discontinuous across element interfaces.
We denote by $(\cdot,\cdot)_{K}$ the $L^{2}(K)$ inner product and by $\langle\cdot,\cdot\rangle_{E}$ the $L^{2}(E)$ inner product.
We first decouple the computation of two SAVs $r^{k+1}$ and $q^{k+1}$ as
\begin{align}
    \phi^{k+1}=\phi_{1}^{k+1}+r^{k+1}\phi_{2}^{k+1}+q^{k+1}\phi_{3}^{k+1}, \ \phi=n, \tilde{\mu}, \tilde{L}. \label{eq:savdecouple}
\end{align}
Substituting \eqref{eq:savdecouple} into \eqref{eq:discret_nrq_dc}, we observe that ($n_{i}^{k+1}$, $\tilde{\mu}_{i}^{k+1}$, $\tilde{L}_{i}^{k+1}$), $i=1$, $2$, $3$ can be determined from three decoupled linear systems \cite{liu2024efficient}.
For brevity, we directly present the fully discrete scheme of \eqref{eq:discret_nrq_dc}-\eqref{eq:discret_t_dc}: for $k \geq 0$, 

\textbf{Step 1:}
(\romannumeral 1) Find $(n_{1h}^{k+1},\tilde{\mu}_{1h}^{k+1})\in \Phi_{h}\times \Phi_{h}$, such that for $\forall (\phi_{1h}, \psi_{1h})\in \Phi_{h}\times \Phi_{h}$,
\begin{subequations}\label{eq:discret_n_fullydc}
    \begin{align}
        &\sum_{K\in\mathcal{T}_{h}}\left(n_{1h}^{k+1}-n_{h}^{k},\phi_{1h}\right)_{K}-\sum_{K\in\mathcal{T}_{h}}\delta t\left(n_{h}^{k}\mathbf{v}_{h}^{k},\nabla \phi_{1h}\right)_{K} \label{eq:discret_mass1_fullydc}\\
        &+\sum_{E\in\mathcal{E}_{h}^{I}}\delta t\left\langle \left\lbrace n_{h}^{k}\mathbf{v}_{h}^{k}\right\rbrace\cdot\bm{\gamma}_{E},\left[\phi_{1h}\right]\right\rangle_{E}=-\mathcal{L}_{d}\delta t\mathcal{A}\left( \tilde{\mu}_{1h}^{k+1},\phi_{1h}\right), \nonumber\\
        &\sum_{K\in\mathcal{T}_{h}}\left(\tilde{\mu}_{1h}^{k+1},\psi_{1h}\right)_{K}-\sum_{K\in\mathcal{T}_{h}}\left(\tilde{\mu}_{b1h}^{k}+r^{k}\chi_{b}^{k}\tilde{\mu}_{b2h}^{k},\psi_{1h}\right)_{K}=2\varepsilon\mathcal{A}\left( n_{1h}^{k+1},\psi_{1h}\right)\label{eq:discret_mu1_fullydc}\\
        &+\sum_{E\in\Gamma_{h}}\mathcal{L}_{b}^{-1}\left\langle \delta t^{-1}\left(n_{1h}^{k+1}-n_{h}^{k}\right)+v_{\tau h}^{k}\partial_{\tau}n_{1h}^{k+1}+ \mathcal{L}_{b}q^{k}\chi_{s}^{k}\mathcal{W},\psi_{1h}\right\rangle_{E}. \nonumber
    \end{align}
\end{subequations}
(\romannumeral 2) Find $(n_{2h}^{k+1},\tilde{\mu}_{2h}^{k+1})\in \Phi_{h}\times \Phi_{h}$, such that for $\forall (\phi_{2h}, \psi_{2h})\in \Phi_{h}\times \Phi_{h}$,
\begin{subequations}
    \begin{align}
        &\sum_{K\in\mathcal{T}_{h}}\left(n_{2h}^{k+1},\phi_{2h}\right)_{K}=-\mathcal{L}_{d}\delta t\mathcal{A}\left( \tilde{\mu}_{2h}^{k+1},\phi_{2h}\right),\label{eq:discret_mass2_fullydc} \\
        &\sum_{K\in\mathcal{T}_{h}}\left(\tilde{\mu}_{2h}^{k+1},\psi_{2h}\right)_{K}-\sum_{K\in\mathcal{T}_{h}}\left(\chi_{b}^{k}\tilde{\mu}_{b2h}^{k},\psi_{2h}\right)_{K}=2\varepsilon\mathcal{A}\left( n_{2h}^{k+1},\psi_{2h}\right) \label{eq:discret_mu2_fullydc}\\
        &+\sum_{E\in\Gamma_{h}}\mathcal{L}_{b}^{-1}\left\langle \delta t^{-1}n_{2h}^{k+1}+v_{\tau h}^{k}\partial_{\tau}n_{2h}^{k+1},\psi_{2h}\right\rangle_{E}.\nonumber
    \end{align}
\end{subequations}
(\romannumeral 3) Find $(n_{3h}^{k+1},\tilde{\mu}_{3h}^{k+1})\in \Phi_{h}\times \Phi_{h}$, such that for $\forall (\phi_{3h}, \psi_{3h})\in \Phi_{h}\times \Phi_{h}$,
\begin{subequations}
    \begin{align}
        &\sum_{K\in\mathcal{T}_{h}}\left(n_{3h}^{k+1},\phi_{3h}\right)_{K}=-\mathcal{L}_{d}\delta t\mathcal{A}\left( \tilde{\mu}_{3h}^{k+1},\phi_{3h}\right), \label{eq:discret_mass3_fullydc}\\
        &\sum_{K\in\mathcal{T}_{h}}\left(\tilde{\mu}_{3h}^{k+1},\psi_{3h}\right)_{K}=2\varepsilon\mathcal{A}\left( n_{3h}^{k+1},\psi_{3h}\right)\label{eq:discret_mu3_fullydc}\\
        &+\sum_{E\in\Gamma_{h}}\mathcal{L}_{b}^{-1}\left\langle \delta t^{-1}n_{3h}^{k+1}+v_{\tau h}^{k}\partial_{\tau}n_{3h}^{k+1}+ \mathcal{L}_{b}\chi_{s}^{k}\mathcal{W},\psi_{3h}\right\rangle_{E}. \nonumber
    \end{align}
\end{subequations}
Here, $\chi_{b}^{k}:=1/2\sqrt{S_{b2}^{k}+C_{r}}$, $\chi_{s}^{k}:=1/2\sqrt{S_{s}^{k}+C_{q}}$ and $\mathcal{A}\left(\cdot,\cdot\right)$ is the bilinear form defined as \cite{riviere2008discontinuous}
\begin{align}
    \mathcal{A}\left(\phi,\psi\right)=&\sum_{K\in\mathcal{T}_{h}}\left(\nabla \phi,\nabla \psi\right)_{K}-\sum_{E\in\mathcal{E}_{h}^{I}}\left\langle\left\lbrace\nabla \phi\cdot\bm{\gamma}_{E}\right\rbrace, \left[\psi\right]\right\rangle_{E} \nonumber\\
    &-\sum_{E\in\mathcal{E}_{h}^{I}}\left\langle\left\lbrace\nabla\psi\cdot\bm{\gamma}_{E}\right\rbrace, \left[\phi\right]\right\rangle_{E}+\sum_{E\in\mathcal{E}_{h}^{I}}\frac{\sigma_{E}}{h_{E}}\left\langle \left[\phi\right], \left[\psi\right]\right\rangle_{E}, \ \forall\phi, \psi\in \Phi_{h},\nonumber
\end{align}
where $h_{E}=\vert E\vert$ and $\sigma_{E}$ is the penalty parameter.
Notice that the above three linear fourth-order systems share a same time-independent coefficient matrix.
Once ($n_{ih}^{k+1}$, $\tilde{\mu}_{ih}^{k+1}$, $\tilde{L}_{ih}^{k+1}$) for $i=1, 2, 3$ are solved, $r^{k+1}$ and $q^{k+1}$ can be computed from
\begin{align}
    r^{k+1}&=r^{k}+\chi_{b}^{k}\sum_{K\in\mathcal{T}_{h}}\left(-\tilde{\mu}_{b2h}^{k},n_{1h}^{k+1}+r^{k+1}n_{2h}^{k+1}+q^{k+1}n_{3h}^{k+1} - n_{h}^{k}\right)_{K}, \nonumber\\
    q^{k+1}&=q^{k}+\chi_{s}^{k}\sum_{E\in\Gamma_{h}}\left\langle -\mathcal{W},n_{1h}^{k+1}+r^{k+1}n_{2h}^{k+1}+q^{k+1}n_{3h}^{k+1} - n_{h}^{k}\right\rangle_{E}.\nonumber
\end{align}
Consequently, $n_{h}^{k+1}$ and $\tilde{\mu}_{h}^{k+1}$ can be obtained from \eqref{eq:savdecouple}.

\textbf{A simple limiter to enforce bounds}: Notice that $n_{h}^{k+1}$ is expected to strictly remain in the range $0<n_{h}^{k+1}<1$ by \eqref{eq:dimensionless_mu_sav}.
For DG methods, the bounds-preserving can be enforced by the linear scaling limiter \cite{ZHANG20103091,zhang2017positivity}.
Let $n_{h,K}^{k+1}$ be the DG approximation of $n$ at the $(k+1)$-th time step for any $K\in\mathcal{T}_{h}$.
Assume that the cell average $\bar{n}_{h,K}^{k+1}:=\int_{K}n_{h,K}^{k+1}d\mathbf{x}/\vert K\vert\in(0,1)$.
Then, the limiter can be defined as
\begin{align}
   \forall K\in\mathcal{T}_{h}: \quad \tilde{n}_{h,K}^{k+1}&=\theta(n_{h,K}^{k+1}-\bar{n}_{h,K}^{k+1})+\bar{n}_{h,K}^{k+1}, \label{eq:limiter}\\
   \theta&=\min{\left\{1,\left\vert\frac{1-\bar{n}_{h,K}^{k+1}}{\max_{K}n_{h,K}^{k+1}-\bar{n}_{h,K}^{k+1}}\right\vert,\left\vert\frac{0-\bar{n}_{h,K}^{k+1}}{\min_{K}n_{h,K}^{k+1}-\bar{n}_{h,K}^{k+1}}\right\vert\right\}}.\nonumber
\end{align}
Here, $\tilde{n}_{h,K}^{k+1}$ is the modified approximation after scaling.
It is straightforward to see that $\tilde{n}_{h,K}^{k+1}\in(0,1)$ in $K$ and the cell average of $\tilde{n}_{h,K}^{k+1}$ is still $\bar{n}_{h,K}^{k+1}$.
Moreover, this simple limiter does not destroy the approximation accuracy of $n_{h,K}^{k+1}$.
For convenience, we use $n_{h}^{k+1}$ to denote the modified approximation below.

\textbf{Step 2:} Find $\mathbf{v}_{h}^{k+1}\in \mathbf{U}_{h}$, such that for $\forall \mathbf{u}_{h}\in \mathbf{U}_{h}$,
\begin{align}
    &\left(\mathcal{R}_{e}\mathcal{R}n_{h}^{k}\delta t^{-1}\left(\mathbf{v}_{h}^{k+1}-\mathbf{v}_{h}^{k}\right),\mathbf{u}_{h}\right)+\left(\left(\mathcal{R}_{e}\mathcal{R}n_{h}^{k}\mathbf{v}_{h}^{k}-\mathcal{B}\nabla \tilde{\mu}_{h}^{k+1}\right)\cdot\nabla\mathbf{v}_{h}^{k+1},\mathbf{u}_{h}\right) \nonumber\\
    &=\left(\Pi_{h}^{k+1}-\mathcal{R}\Theta_{h}^{k+1},\nabla \mathbf{u}_{h}\right)+\left\langle -\mathcal{L}_{s}^{-1}\mathcal{R}\beta_{h}^{k+1}\left(v_{\tau h}^{k+1}-v_{w}\right),u_{\tau h}\right\rangle \nonumber \\
    &+\left\langle q^{k\ast}\mathcal{W}T_{h}^{k}\partial_{\tau}n_{h}^{k+1}+\mathcal{W}\left(n_{h}^{k+1}-n_{c}\right)\partial_{\tau}T_{h}^{k},u_{\tau h}\right\rangle.\nonumber
\end{align}

\textbf{Step 3:} Find $T_{h}^{k+1}\in W_{h}$, such that for $\forall w_{h}\in W_{h}$,
\begin{align}
   &\delta t^{-1}\left(\frac{3}{2}n_{h}^{k+1}\left(T_{h}^{k+1}-T_{h}^{k}\right),w_{h}\right)
    +\left(T_{h}^{k+1}\left(\vartheta_{1h}^{k}\right)^{2}\nabla\cdot \left(T_{h}^{k}\mathbf{v}_{h}^{k}\right),w_{h}\right) \nonumber\\
    &-\left(\mathcal{L}_{d}\vartheta_{2h}^{k}\nabla  \tilde{\mu}_{h}^{k+1}\cdot\nabla T_{h}^{k+1}+\mathcal{L}_{d}T_{h}^{k+1}\nabla  \tilde{\mu}_{h}^{k+1}\cdot\nabla \vartheta_{2h}^{k}+n_{h}^{k}T_{h}^{k+1}\nabla\tilde{\mu}_{h}^{k+1}\cdot\mathbf{v}_{h}^{k},w_{h}\right)\nonumber\\
    &-\left(\mathcal{L}_{d}T_{h}^{k+1}\vartheta_{2h}^{k}\nabla  \tilde{\mu}_{h}^{k+1}+T_{h}^{k+1}\bm{\vartheta}_{3h}^{k}\cdot\mathbf{v}_{h}^{k}-\mathcal{R}_{e}^{-1} n_{h}^{k+1}\nabla T_{h}^{k+1},\nabla w_{h}\right) \nonumber\\
    &=\left(\left(\bm{\vartheta}_{3h}^{k}\cdot\mathbf{v}_{h}^{k}\right)\cdot\nabla T_{h}^{k+1}+\mathcal{R}\Theta_{h}^{k+1}:\nabla\mathbf{v}_{h}^{k+1},w_{h}\right)-\mathcal{R}_{e}^{-1}\mathcal{L}_{\lambda}\left\langle\lambda_{sh}^{k+1}\partial_{\tau}T_{h}^{k+1},\partial_{\tau}w_{h}\right\rangle \nonumber\\
    &-\left\langle T_{h}^{k+1}\left(v_{\tau h}^{k+1}\Pi_{\gamma\tau h}^{k+1}- \left(\mathcal{R}_{e}\mathcal{L}_{\kappa}\kappa_{h}^{k}T_{h}^{k}\right)^{-1}\vartheta_{4h}^{k}-q^{k\ast}\mathcal{W} \left(n_{h}^{k+1}-n_{c}\right)\partial_{\tau}v_{\tau h}^{k}\right),w_{h}\right\rangle \nonumber\\ 
    &-\left\langle v_{\tau h}^{k+1}\left(\mathcal{R}\Theta_{\gamma\tau h}^{k+1}-T_{h}^{k}\tilde{L}_{h}^{k+1}\partial_{\tau}n_{h}^{k+1}-\mathcal{W}\left(n_{h}^{k+1}-n_{c}\right)\partial_{\tau}T_{h}^{k}\right),w_{h}\right\rangle, \nonumber
\end{align}
where $\vartheta_{1h}^{k}:=n_{h}^{k}/T_{h}^{k}$, $\vartheta_{2h}^{k}:=3/2-2n_{h}^{k}/T_{h}^{k}$, $\bm{\vartheta}_{3h}^{k}:=\Pi_{h}^{k+1}/T_{h}^{k}$ and $\vartheta_{4h}^{k}:=1/T_{h}^{k}-1/T_{w}$.
Furthermore, under the assumption that the particle mass $m$ is constant, the mass conservation is equivalent to the conservation of particle number in our model. 
\begin{theorem}\label{thm:mass}
   The fully discrete scheme preserves the boundedness of $n$ and the conservation of total mass, namely,
   \begin{equation*}
       \begin{aligned}
        0<n_{h}^{k+1}<1, \quad \sum_{K\in\mathcal{T}_{h}}\int_{K}\rho_{h}^{k+1}d\mathbf{x}=\sum_{K\in\mathcal{T}_{h}}\int_{K}\rho_{h}^{k}d\mathbf{x}=\cdots=\sum_{K\in\mathcal{T}_{h}}\int_{K}\rho_{h}^{0}d\mathbf{x}.
       \end{aligned}
   \end{equation*}
\end{theorem}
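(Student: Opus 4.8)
The plan is to prove the two assertions separately: the pointwise bounds $0<n_h^{k+1}<1$ come directly from the scaling limiter, while total mass conservation comes from testing the discrete density equations against the constant function and exploiting the cancellations in the DG forms. For the bounds I would simply invoke the construction of the limiter \eqref{eq:limiter}: under the standing assumption $\bar n_{h,K}^{k+1}\in(0,1)$, the choice of $\theta$ as the minimum of $1$ and the two ratios caps the maximum of $\tilde n_{h,K}^{k+1}=\theta(n_{h,K}^{k+1}-\bar n_{h,K}^{k+1})+\bar n_{h,K}^{k+1}$ below $1$ and lifts its minimum above $0$ on each $K$, while leaving the cell average unchanged since $\int_K(n_{h,K}^{k+1}-\bar n_{h,K}^{k+1})\,d\mathbf{x}=0$. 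As $n_h^{k+1}$ denotes the limited field, the first claim follows; the only nontrivial ingredient is the admissibility of the cell average $\bar n_{h,K}^{k+1}\in(0,1)$.

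For conservation, since $\rho_h=mn_h$ with $m$ constant, it suffices to show $\sum_{K}\int_K n_h^{k+1}\,d\mathbf{x}=\sum_{K}\int_K n_h^{k}\,d\mathbf{x}$. Because the limiter preserves every cell integral, $\int_K\tilde n_{h,K}^{k+1}\,d\mathbf{x}=\int_K n_{h,K}^{k+1}\,d\mathbf{x}$, it is enough to prove conservation for the pre-limited reconstruction. First I would test the three density subsystems \eqref{eq:discret_mass1_fullydc}, \eqref{eq:discret_mass2_fullydc}, \eqref{eq:discret_mass3_fullydc} with the constant function $\phi_{ih}\equiv 1\in\Phi_h$. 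The key cancellations are that $\nabla 1=0$ annihilates every volume convection term $(n_h^k\mathbf{v}_h^k,\nabla\phi_{ih})_K$, that $[1]=0$ on interior edges annihilates the upwind flux $\langle\{n_h^k\mathbf{v}_h^k\}\cdot\bm{\gamma}_E,[\phi_{ih}]\rangle_E$, and that $\mathcal{A}(\tilde\mu_{ih}^{k+1},1)=0$, since $\nabla 1$ and $[1]$ vanish in every term defining $\mathcal{A}$. These give $\sum_K(n_{1h}^{k+1}-n_h^k,1)_K=0$ together with $\sum_K(n_{2h}^{k+1},1)_K=\sum_K(n_{3h}^{k+1},1)_K=0$, i.e. $\int_\Omega n_{1h}^{k+1}=\int_\Omega n_h^k$ and $\int_\Omega n_{2h}^{k+1}=\int_\Omega n_{3h}^{k+1}=0$.

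Finally, using the SAV reconstruction \eqref{eq:savdecouple}, $n_h^{k+1}=n_{1h}^{k+1}+r^{k+1}n_{2h}^{k+1}+q^{k+1}n_{3h}^{k+1}$, I would integrate over $\Omega$ to obtain $\int_\Omega n_h^{k+1}=\int_\Omega n_{1h}^{k+1}+r^{k+1}\cdot 0+q^{k+1}\cdot 0=\int_\Omega n_h^k$; a trivial induction on $k$ then closes the chain of equalities, and multiplication by $m$ yields the stated mass identity. The main obstacle, and really the only point requiring care, is verifying that all the DG coupling contributions (the interior upwind flux, the consistency and penalty terms hidden inside $\mathcal{A}$, and the boundary terms appearing in the $\tilde\mu_{ih}$ equations) genuinely drop out under the constant test function, and simultaneously confirming that the $r^{k+1}$ and $q^{k+1}$ pieces integrate to exactly zero so that the SAV augmentation does not spoil conservation; the cell-average invariance of the limiter must then be used to guarantee that conservation survives the bounds-enforcement step.
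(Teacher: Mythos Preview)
Your proposal is correct and follows essentially the same approach as the paper: test the discrete density subsystems with the constant function $1$ (the paper sums \eqref{eq:discret_mass1_fullydc}, $r^{k+1}\cdot$\eqref{eq:discret_mass2_fullydc}, $q^{k+1}\cdot$\eqref{eq:discret_mass3_fullydc} and tests with $\phi_{ih}=1$, whereas you test each separately and then combine via \eqref{eq:savdecouple}---these are equivalent), and invoke the cell-average invariance of the limiter \eqref{eq:limiter} for both boundedness and preservation of mass through the limiting step. Your exposition is in fact more explicit than the paper's about why the DG flux, consistency, and penalty terms in $\mathcal{A}(\cdot,1)$ vanish, but the underlying argument is identical.
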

\begin{proof}
    Summing \eqref{eq:discret_mass1_fullydc}, $r^{k+1}$\eqref{eq:discret_mass2_fullydc}, and $q^{k+1}$\eqref{eq:discret_mass3_fullydc}, and taking $\phi_{1h}=\phi_{2h}=\phi_{3h}=1$ in any element $K\in\mathcal{T}_{h}$ yields the total mass conservation.
    On the other hand, the linear scaling limiter \eqref{eq:limiter} ensures the strict bounds-preserving, expressed as $n_{h}^{k+1}\in(0,1)$ without changing the cell average of $n_{h}^{k+1}$ in $K$, hence, it does not destroy the local mass conservation.
\end{proof}

\section{Numerical experiments}
\label{sec:Numerical experiments}
In this section, we apply the proposed numerical method in \cref{subsec:A decoupled linear and unconditionally entropy-stable scheme} to simulate several non-isothermal gas-liquid flow problems. 
The fluid-solid interfaces at top and bottom boundaries are assumed to be isothermal in the first two examples and non-isothermal in the last example.
More precisely, the Dirichlet boundary condition for $T$ (\cref{rmk:tem_dirichlet_bcd}) is imposed in Example 1 and 2, while \eqref{eq:dimensionless_bcd} is imposed in Example 3.
Periodic boundary conditions are imposed on the left and right boundaries.
We employ the fully discrete scheme on a uniform quadrilateral mesh with $h=0.01$, $p=1$, $\sigma_{E}=10$ and $\delta t = 0.1h$.
The gas and liquid densities are set to the values at gas-liquid coexistence, given by $n_{g}=0.122$ and $n_{l}=0.58$ at $T = 0.875 T_{c}$. 
Here, $T_{c}=3.32$ K is the critical temperature, which becomes $0.296$ after nondimensionalization \cite{onuki2005dynamic,xu2010contact}.
The critical density is set as $n_{c}=1/3$.
The wettability of fluid-solid surface, characterized by the static contact angle $\theta_{s}$ (for liquid phase), is adjusted by varying the wettability number $\mathcal{W}$ \cite{xu2012droplet}. 
Specifically, $\mathcal{W}<0$ and $\mathcal{W}>0$ correspond to $\theta_{s}<90^{\circ}$ (hydrophilic) and $\theta_{s}>90^{\circ}$ (hydrophobic), respectively.
In particular, $\mathcal{W} = 0$ corresponds to $\theta_s = 90^{\circ}$.
The particle mass is set to $m=1$, and the parameters in MSAV approach are chosen as $C_{r} = C_{q} = 5$.

\subsection{Equilibrium evolution of a square droplet}
\label{subsec:Equilibrium evolution of a square droplet}
To demonstrate that our model is applicable beyond MCL problems, we first simulate the square-shaped interface droplet problem \cite{kou2018thermodynamically2}.
Parameters used are $\mathcal{L}_{d}=0.0005$, $\mathcal{L}_{b}=500$, $\mathcal{R}_{e}=0.1$, $\mathcal{R}=0.004$, $\mathcal{L}_{s}=1 / n_{l}$, $\beta_{gl}=1$, $\varepsilon=0.0001$, $\mathcal{W}=v_{w}=0$, $\Omega= [0, 1] \times [0, 1]$ and $T_{f}=1$.
A square-shaped droplet is initially placed at the center of domain, with $\mathbf{v}$ set to $0$.
The initial $T=0.875 T_{c}$ and $T_{w}$ is also fixed at $0.875 T_{c}$.

\cref{fig:case1:T0.875n} shows the dynamical evolution of $n$. 
It is clearly observed that the liquid droplet gradually reshapes into a circle shape from its initial square shape.
The droplet has an evolutionary tendency to an equilibrium state, characterized by the homogeneity of $T$ and $\tilde{\mu}$, as shown in \cref{fig:case1:T0.875TmuSM}(a) and (b).
The evolution of total entropy is illustrated in \cref{fig:case1:T0.875TmuSM}(c), showing the strict increase in both original and modified total entropy over time steps.
Here, $S_{\text{tot}}$ denotes the original total entropy defined by \eqref{eq:totaldiscreteentropydefine} in MSAV approach.
The evolution of total mass error is plotted in \cref{fig:case1:T0.875TmuSM}(d), which indicates that the proposed method conserves total mass.
\begin{figure}[htbp]
\centering
\subfloat[$t=0$]{\includegraphics[width=0.24\textwidth]{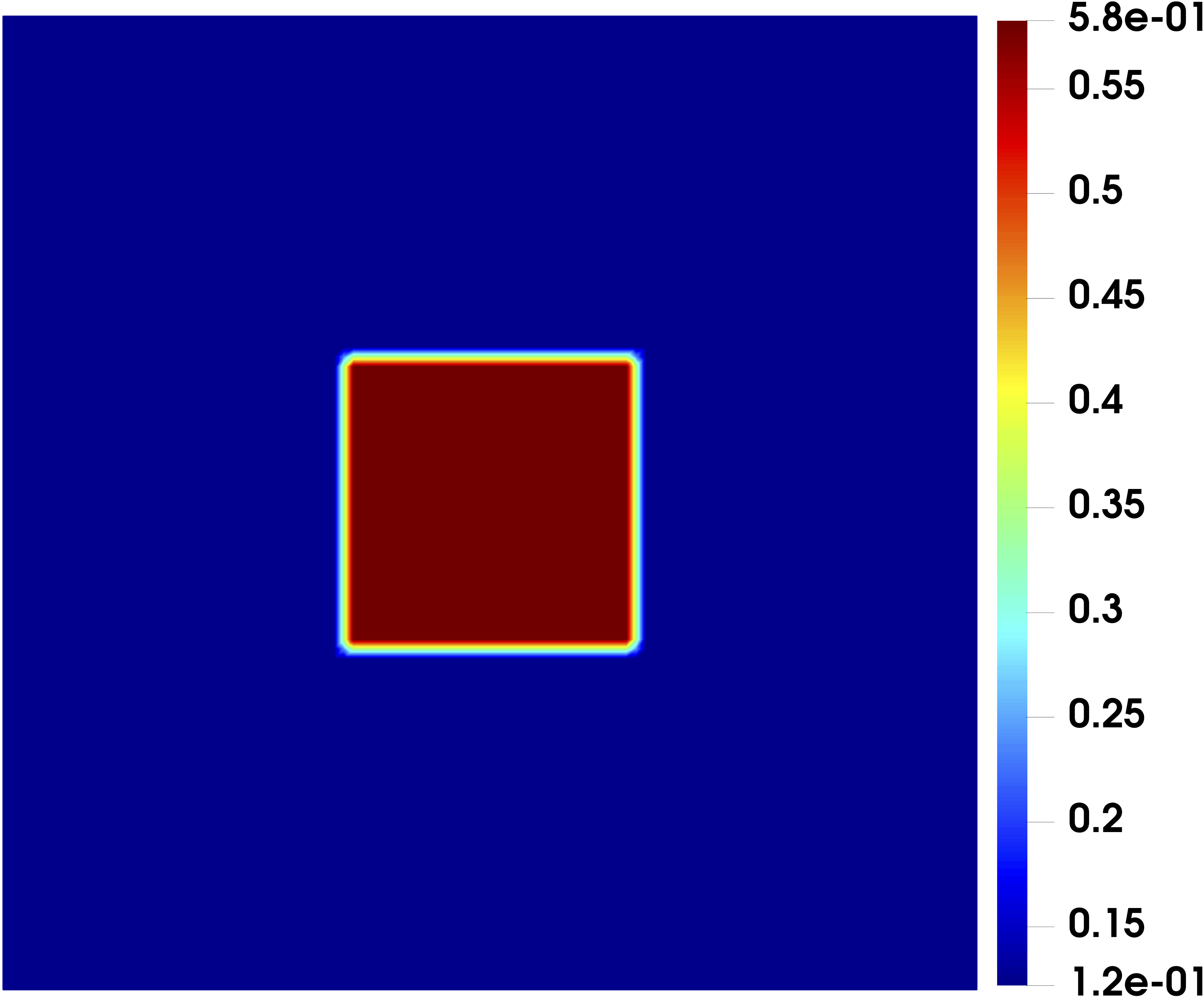}}
\hspace{5mm}
\subfloat[$t=0.5$]{\includegraphics[width=0.24\textwidth]{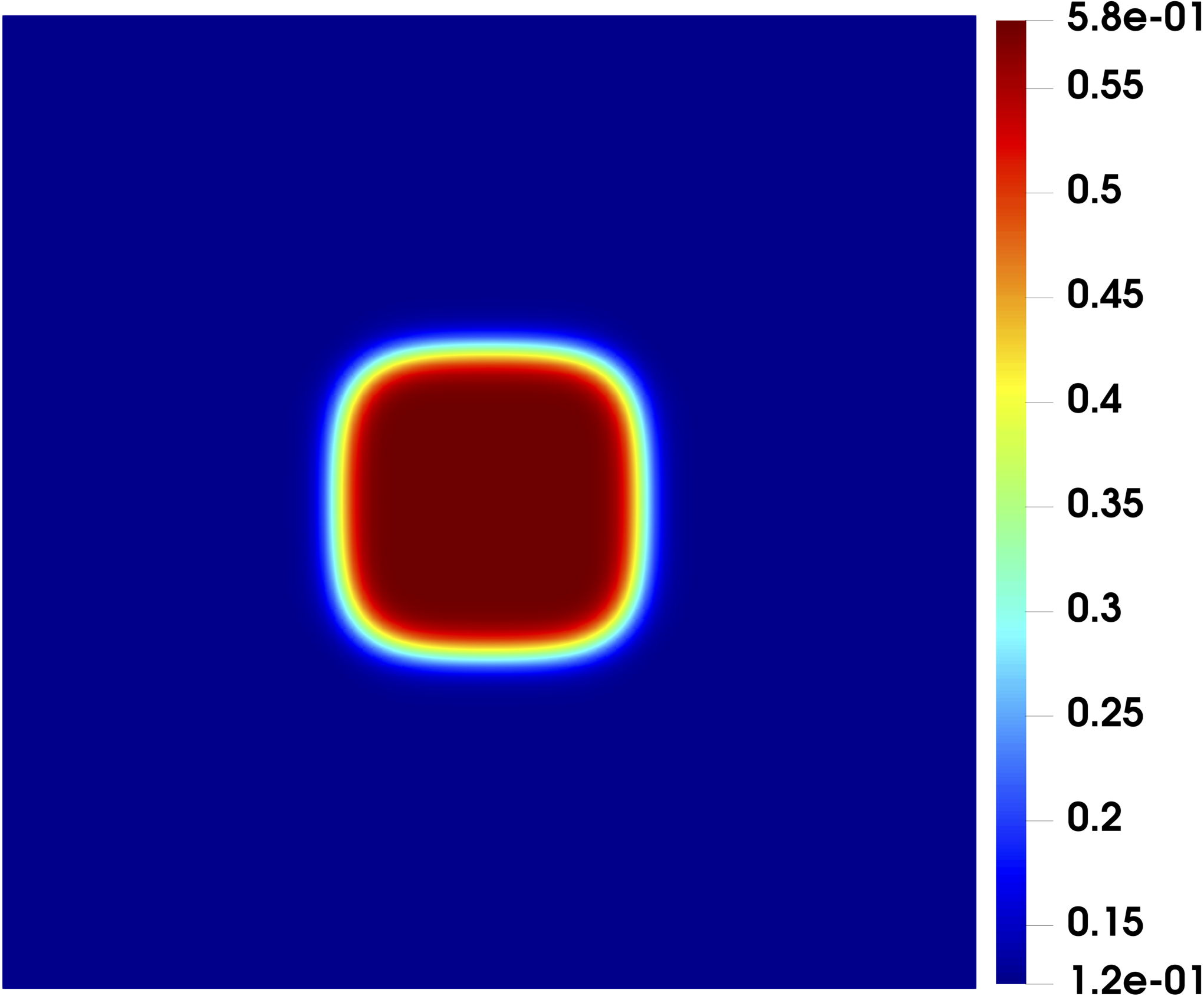}}
\hspace{5mm}
\subfloat[$t=1$]{\includegraphics[width=0.24\textwidth]{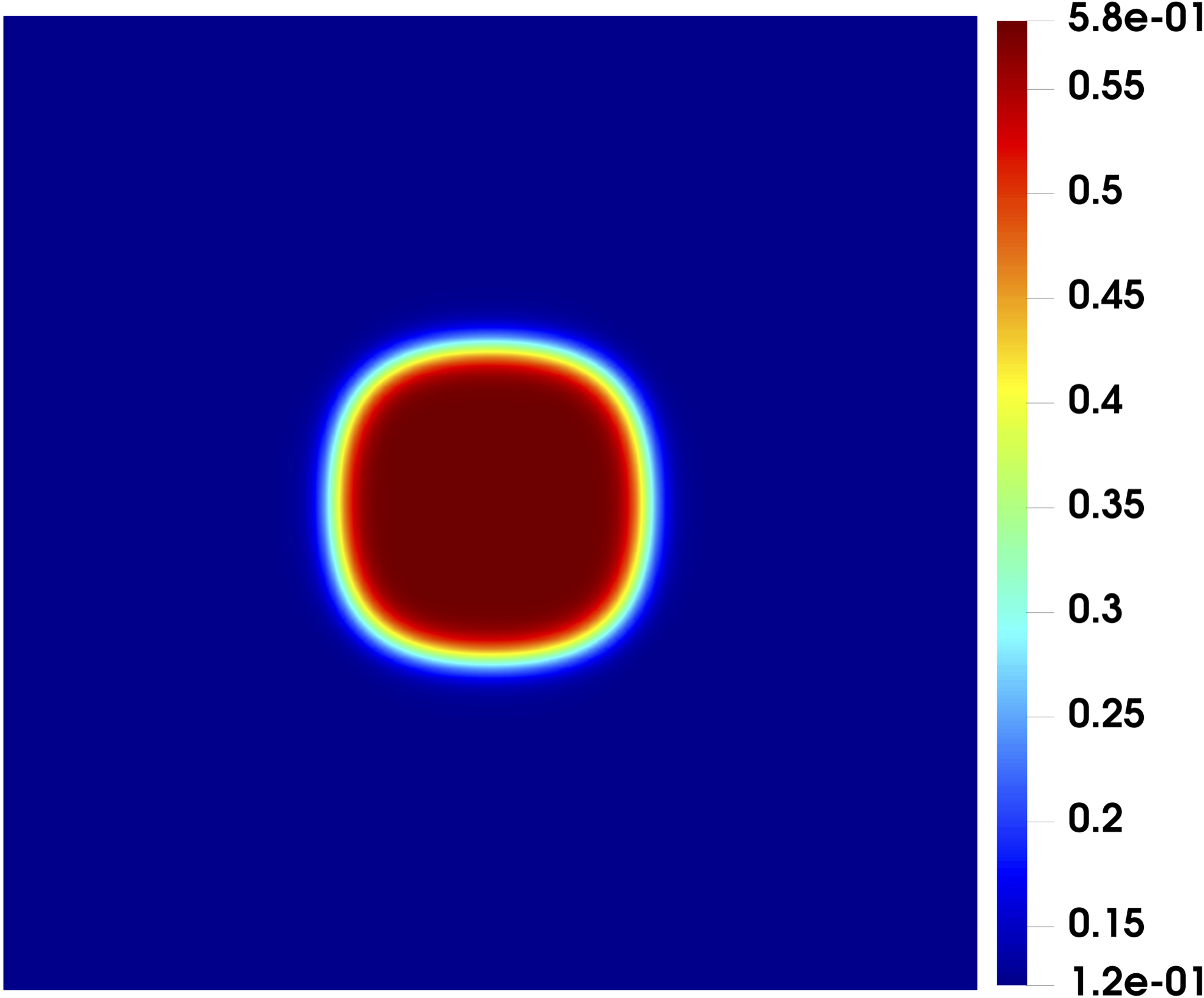}}
\caption{Dynamical evolution of $n$ for initial $T=0.875 T_{c}$ and $\varepsilon=0.0001$.}
\label{fig:case1:T0.875n}
\end{figure}
\begin{figure}[htbp]
\centering
\subfloat[$T$]{\includegraphics[width=0.24\textwidth]{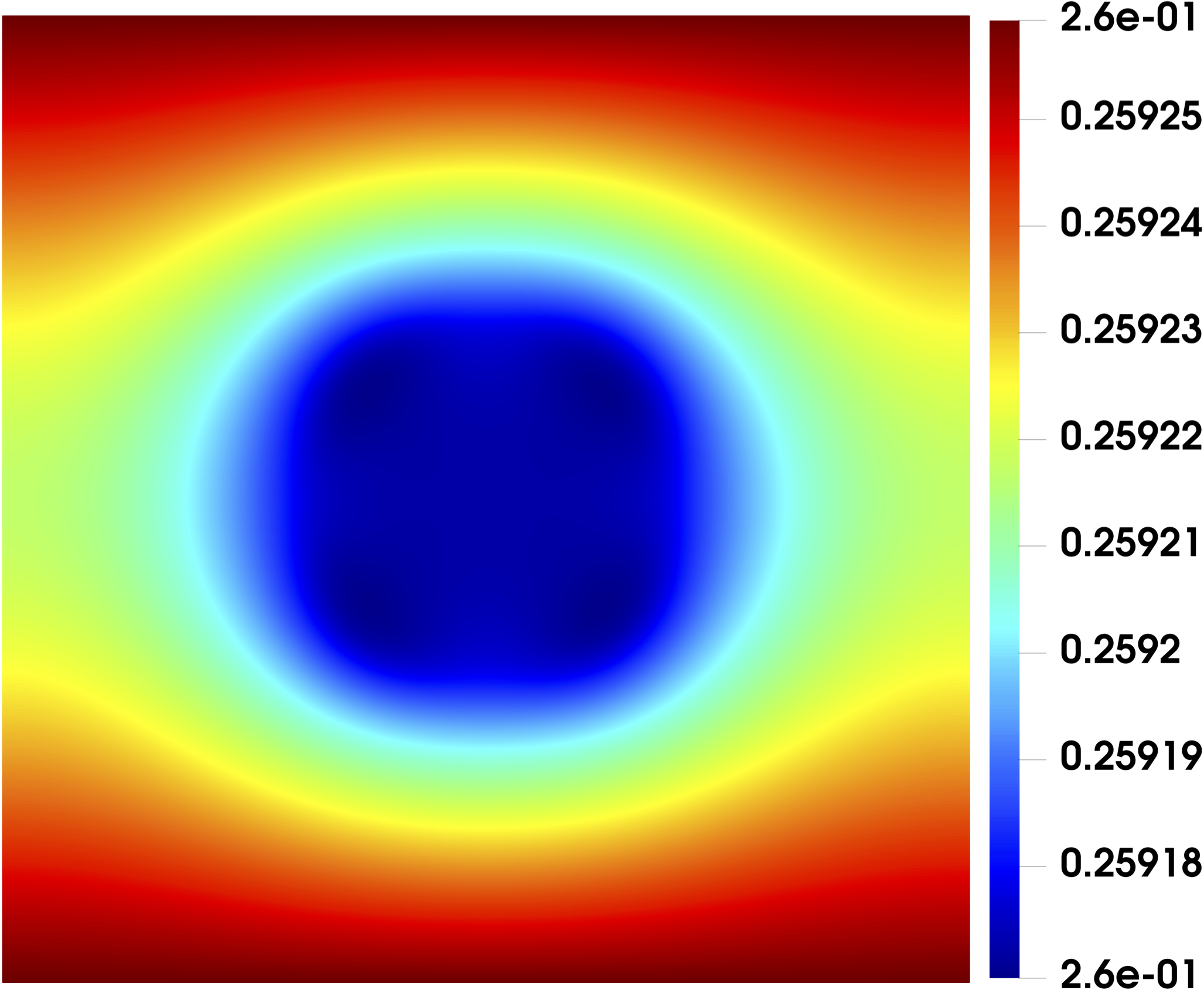}}
\hspace{0.5mm}
\subfloat[$\tilde{\mu}$]{\includegraphics[width=0.24\textwidth]{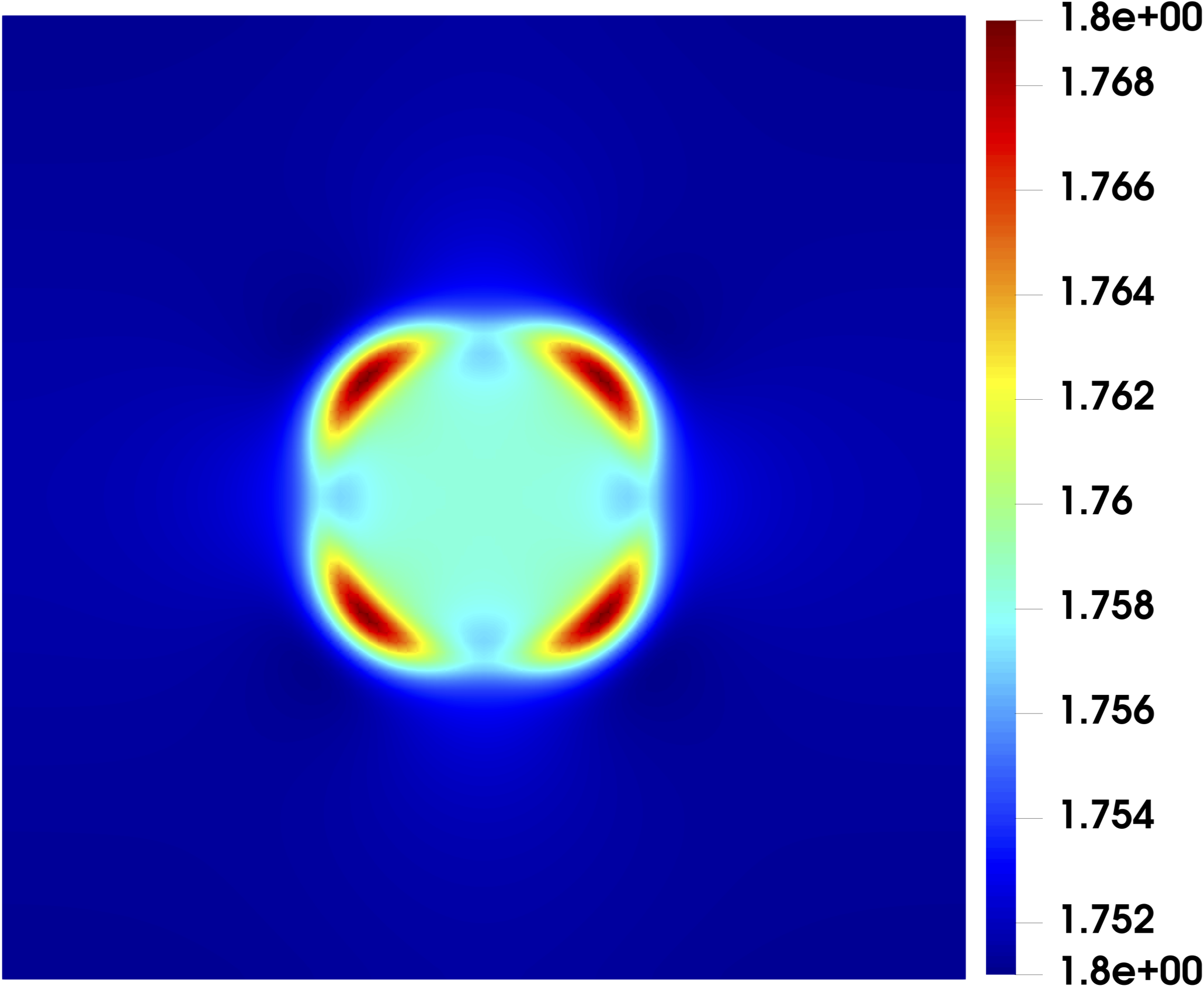}}
\hspace{0.5mm}
\subfloat[total entropy]{\includegraphics[width=0.24\textwidth]{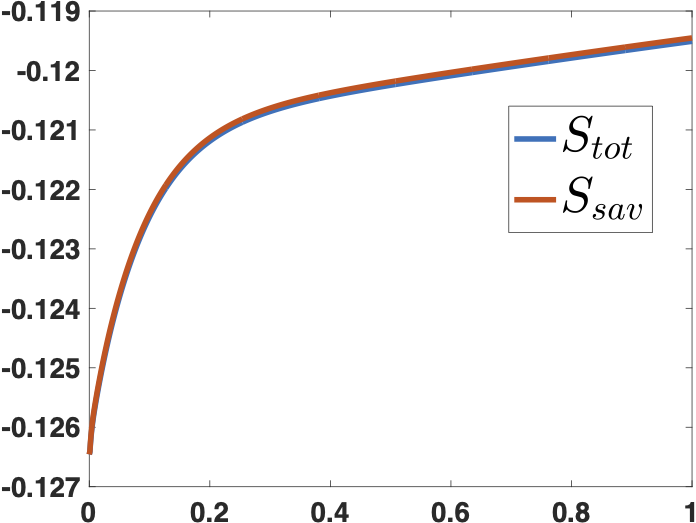}}
\hspace{0.5mm}
\subfloat[total mass error]{\includegraphics[width=0.24\textwidth]{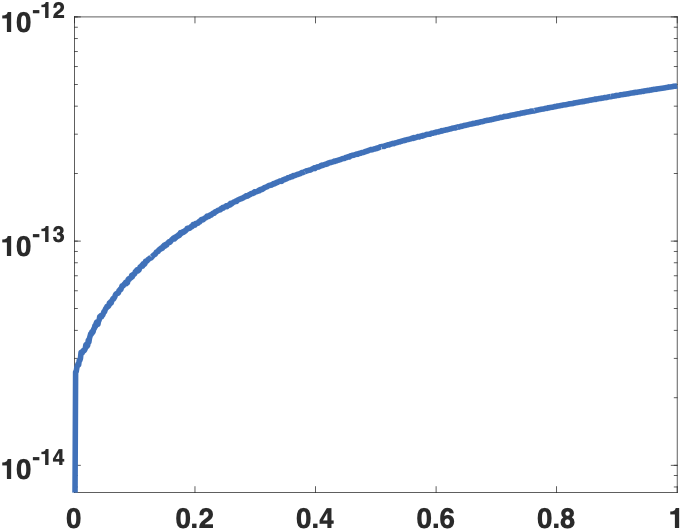}}
\caption{Final-time profiles, and evolution of total entropy and total mass error for initial $T=0.875 T_{c}$ and $\varepsilon=0.0001$.} 
\label{fig:case1:T0.875TmuSM}
\end{figure}
Furthermore, we simulate the dynamics of droplet under same parameters, except for $\varepsilon$ and initial $T$. 
\cref{fig:case1:thickness T0.95 nT} shows the final-time profiles of $n$ and $T$ for two cases: (\romannumeral 1) initial $T=0.875 T_{c}$, $\varepsilon=0.001$ and (\romannumeral 2) initial $T=0.95 T_{c}$, $\varepsilon=0.001$.
The results show that the magnitude of $\varepsilon$ affects interface thickness but has little impact on dynamics and structure of the droplet.
Compared to \cref{fig:case1:T0.875n}(c), the interface in \cref{fig:case1:thickness T0.95 nT}(c) is stabilized by a faster phase transition due to $T$ being closer to $T_{c}$.
\begin{figure}[htbp]
\centering
\subfloat[$n$]{\includegraphics[width=0.24\textwidth]{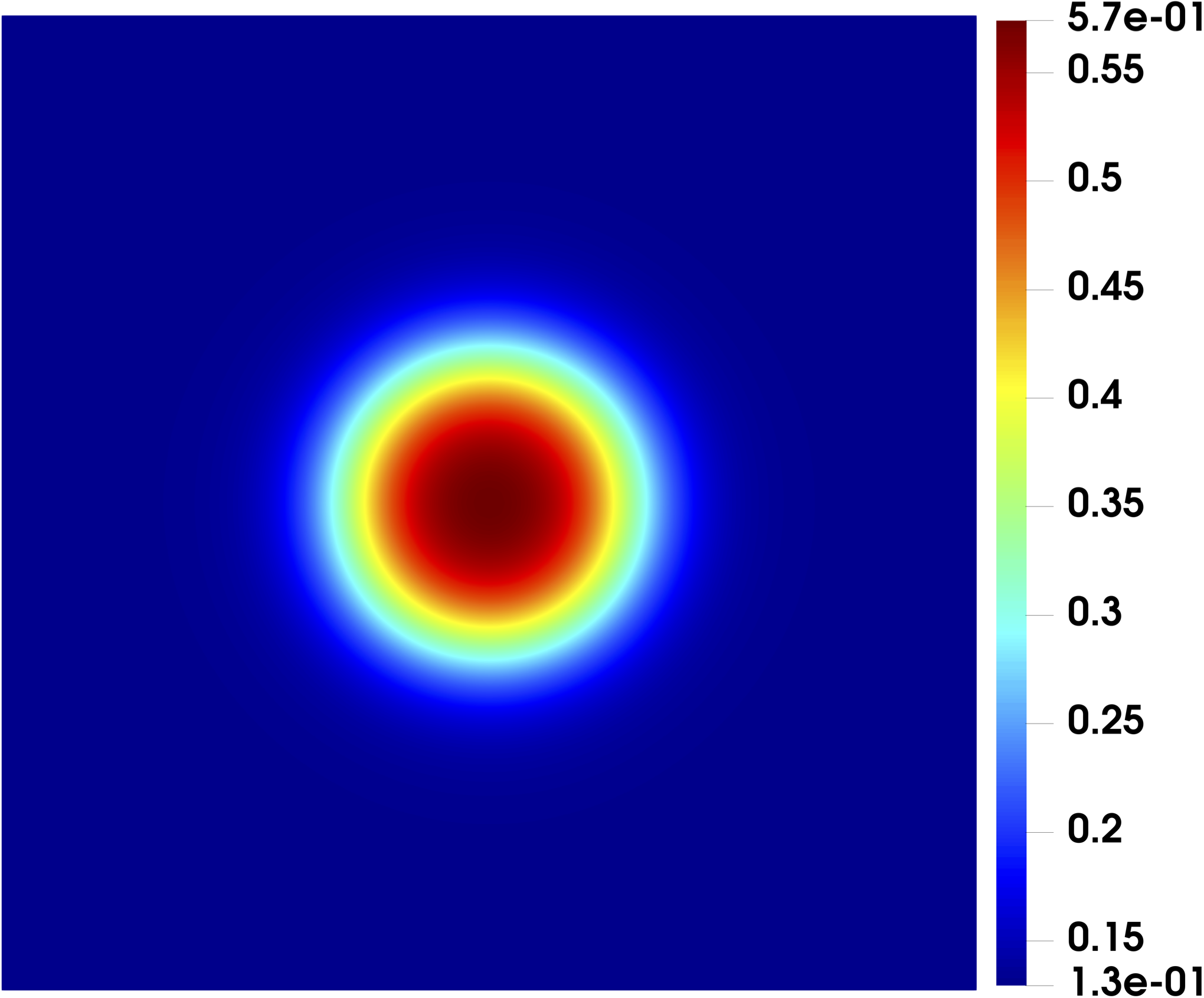}}
\hspace{0.5mm}
\subfloat[$T$]{\includegraphics[width=0.24\textwidth]{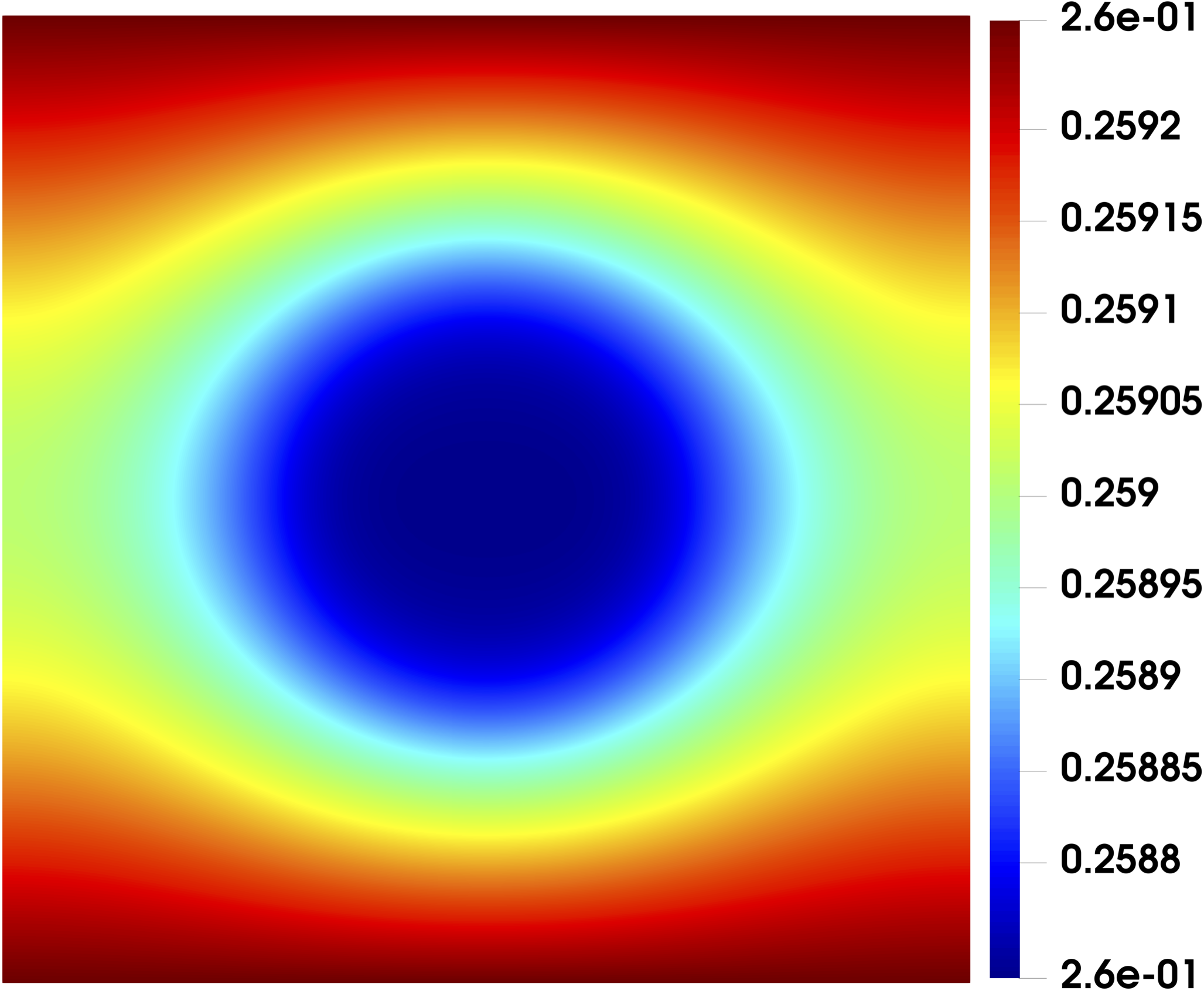}}
\hspace{0.5mm}
\subfloat[$n$]{\includegraphics[width=0.24\textwidth]{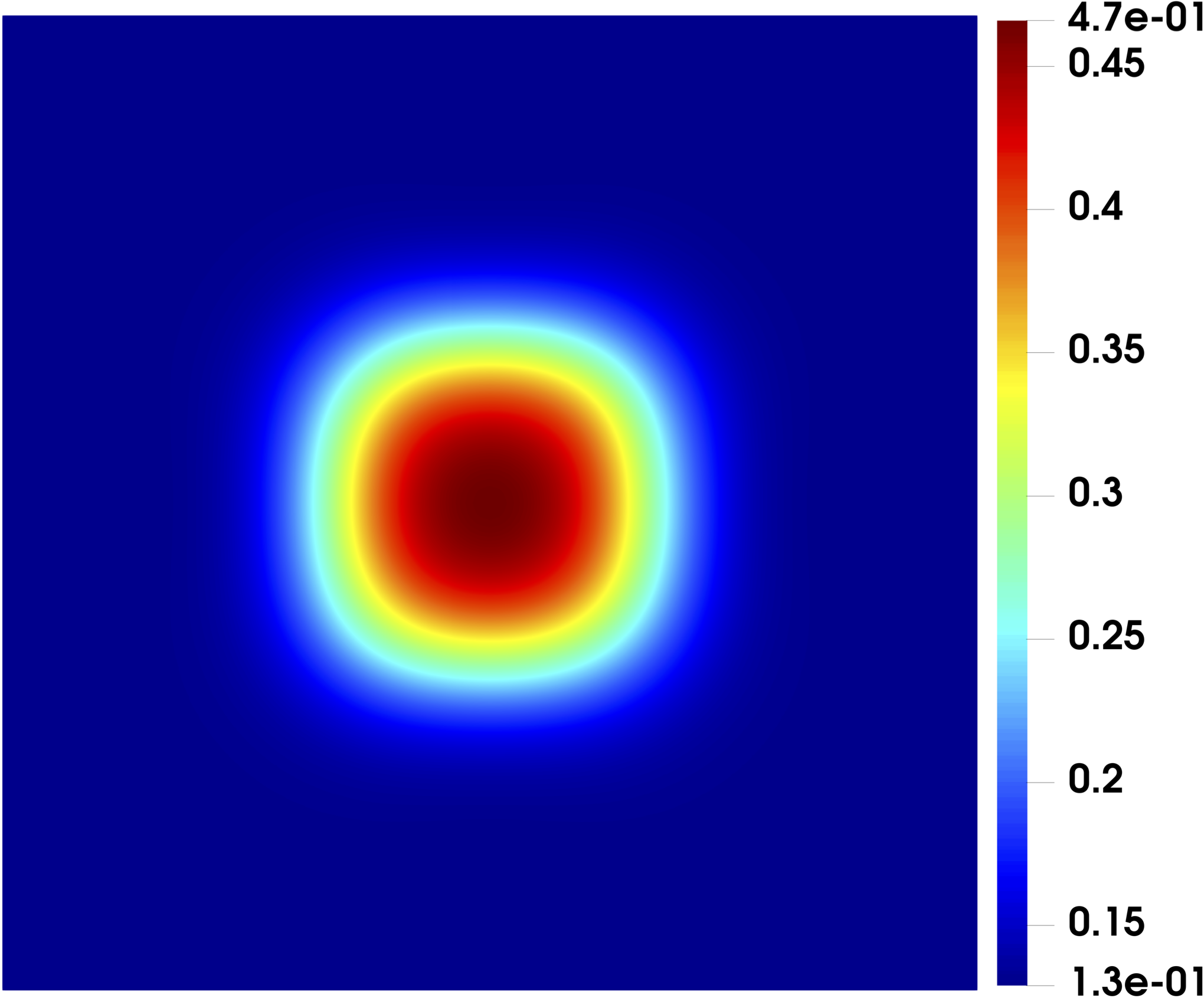}}
\hspace{0.5mm}
\subfloat[$T$]{\includegraphics[width=0.24\textwidth]{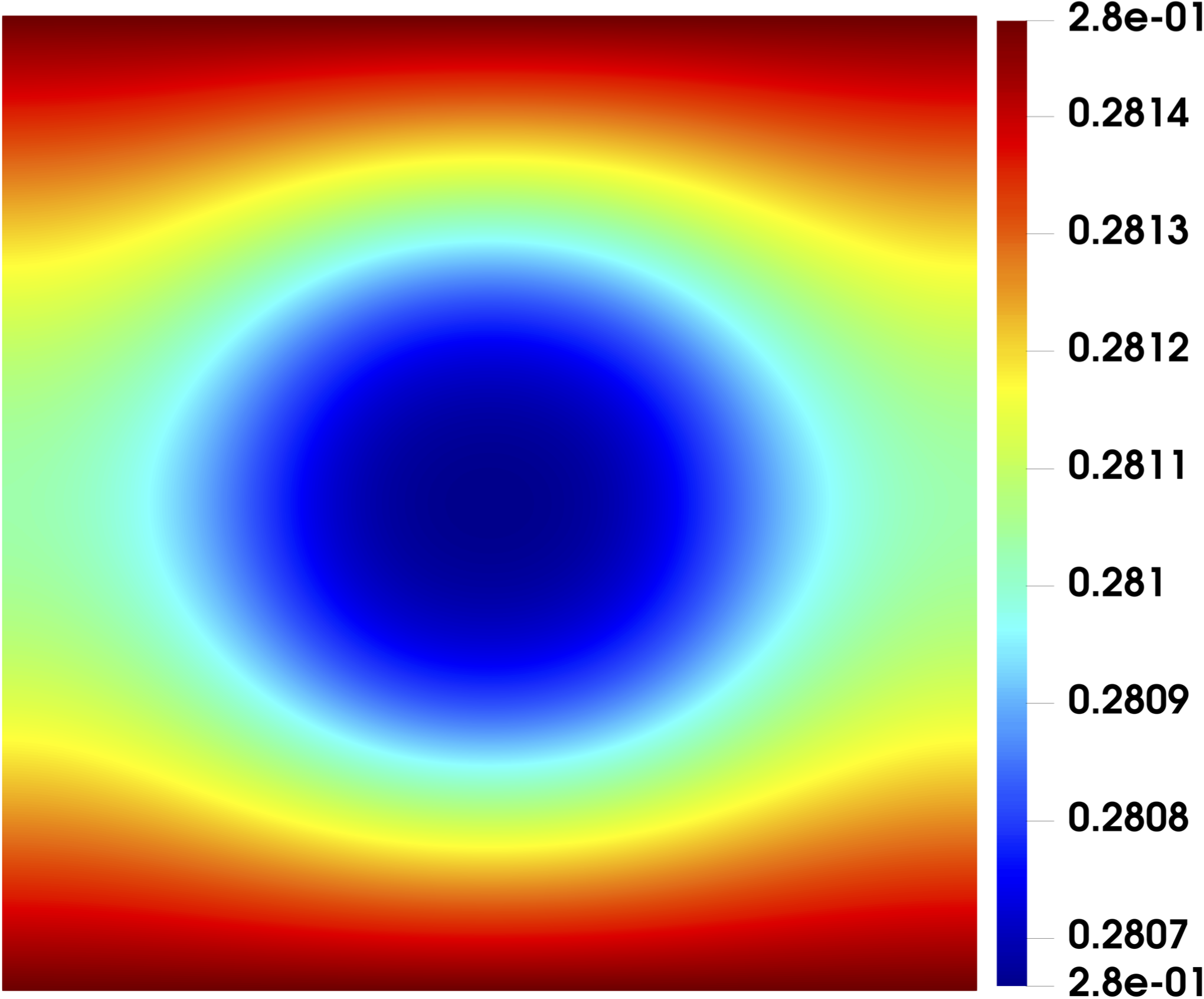}}
\caption{Final-time profiles: (a)-(b) initial $T=0.875 T_{c}$, $\varepsilon=0.001$; (c)-(d) initial $T=0.95 T_{c}$, $\varepsilon=0.001$.} 
\label{fig:case1:thickness T0.95 nT}
\end{figure}

\subsection{Contact line motion with moving walls}
\label{subsec:Contact line motion with moving walls}
In this example, we simulate a sheared gas-liquid system in a Couette-flow geometry, as shown in \cref{fig:case2:W0.00 T0.875 initial} \cite{xu2010contact,shen2020energy}.
\begin{figure}[htbp]
\centering
\includegraphics[width=0.6\textwidth]{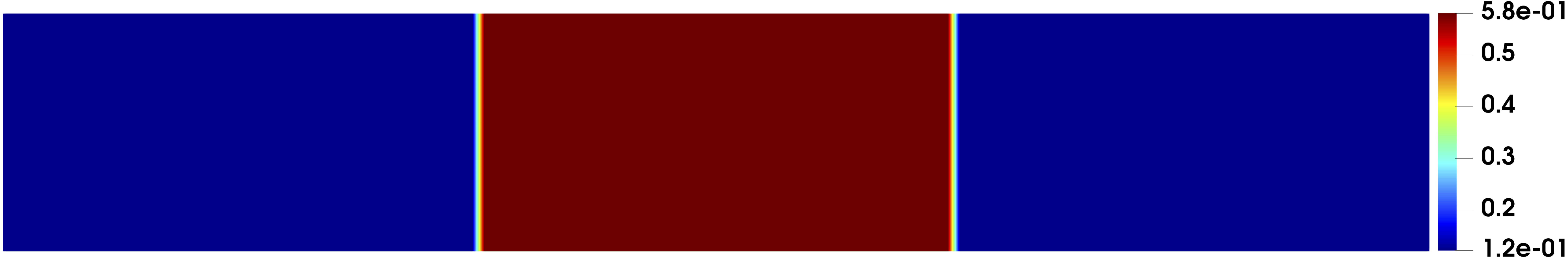}
\caption{Initial condition of $n$ for the two phase Couette flow.}
\label{fig:case2:W0.00 T0.875 initial}
\end{figure}
The Couette flow is induced by moving the top and bottom walls with a constant speed $v_{w}$ in opposite $\pm x$-directions.
Parameters are same as Example 1, except for $v_{w}=0.15$, $\Omega= [0, 2.4] \times [0, 0.4]$ and $T_{f}=2$.
The initial $T=0.875 T_{c}$, and $T_{w}$ is fixed at $0.875 T_{c}$.
\cref{fig:case2:W0.00 T0.875nTmu} shows the final-time $n$, $\mathbf{v}$, $T$ and $\tilde{\mu}$. 
It is observed that the contact line motion reaches a steady state and the gas-liquid interfaces are significantly penetrated by flow close to the solid surfaces. 
The homogeneity of $T$ and $\tilde{\mu}$ also indicate that the system arrives an equilibrium state.
\begin{figure}[htbp]
\subfloat[$n$]{\includegraphics[width=0.495\textwidth]{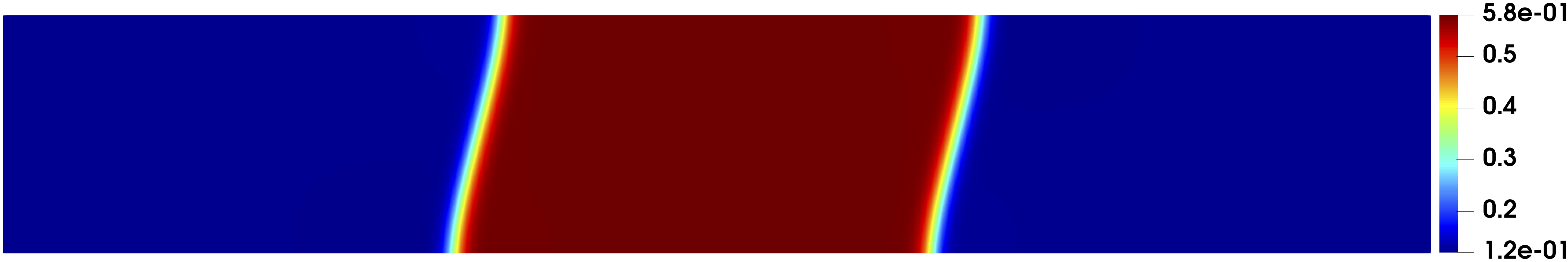}}
\subfloat[$\mathbf{v}$]{\includegraphics[width=0.495\textwidth]{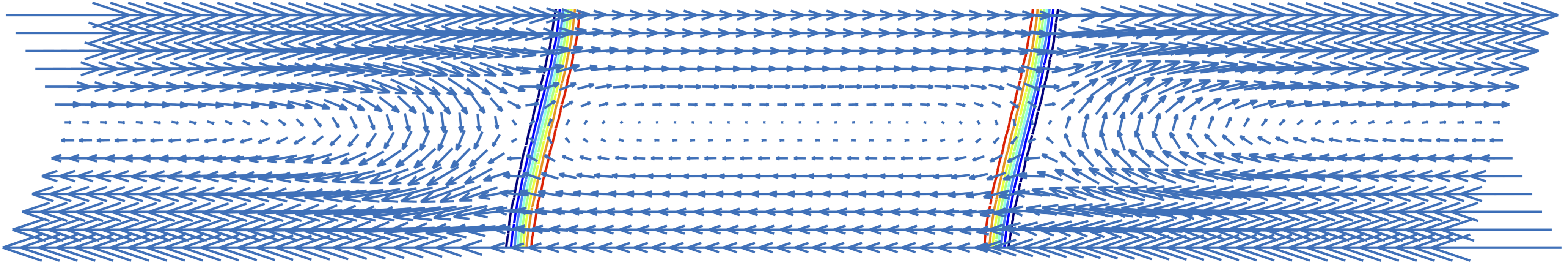}}\\
\subfloat[$T$]{\includegraphics[width=0.495\textwidth]{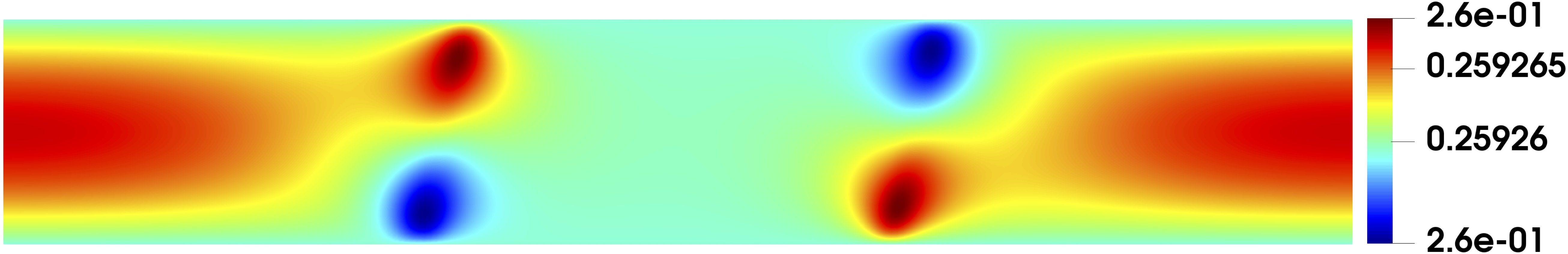}}
\subfloat[$\tilde{\mu}$]{\includegraphics[width=0.495\textwidth]{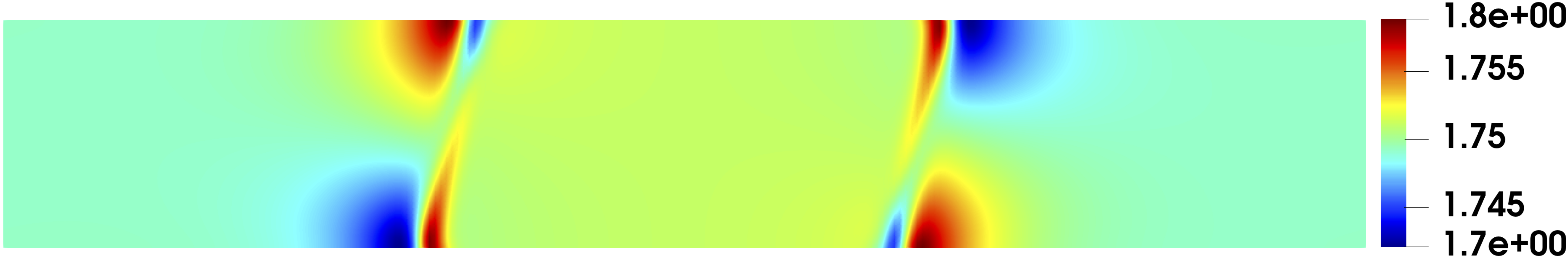}}
\caption{Final-time profiles for initial $T=0.875T_{c}$ and $\varepsilon=0.0001$.} 
\label{fig:case2:W0.00 T0.875nTmu}
\end{figure}
When the wall speed $v_{w}$ is not zero, the entropy produced by moving walls (denoted by $S_{w}$) should be considered. 
We define $S_{w}^{k}=-\sum_{i=0}^{k}\delta t\mathcal{L}_{s}^{-1}\mathcal{R}\int_{\Gamma}\beta^{i} v_{w}v_{\tau}^{slip,i}/T^{i}dA$ \cite{gao2014efficient,wang2022energy}.
As shown in \cref{thm:secondlaw_dc}, when the Dirichlet boundary condition for $T$ is imposed, $S_{\text{sav}}+S_{w}$ should increase over time.
\cref{fig:case2:entropy mass}(a) clearly demonstrates the strict increase of total entropy over time steps and \cref{fig:case2:entropy mass}(d) shows the evolution of the total mass error. 

To further test the effect of $\varepsilon$ and initial $T$, we also simulate this Couette-flow system with different $\varepsilon$ and initial $T$. 
We reset $v_{w}=0.5$ to more clearly observe the contact line motion.
The final-time profiles of $n$, $\mathbf{v}$, $T$ and $\tilde{\mu}$ for initial $T=0.875 T_{c}$ and $\varepsilon=0.001$ are shown in \cref{fig:case2:W0.00 T0.875 thickness nv T mu}.
We observe that the magnitude of $\varepsilon$ affects the interface thickness, fluid transport across the interface, and the homogeneity of $T$ and $\tilde{\mu}$.
\begin{figure}[htbp]
\centering
\subfloat[$n$]{\includegraphics[width=0.495\textwidth]{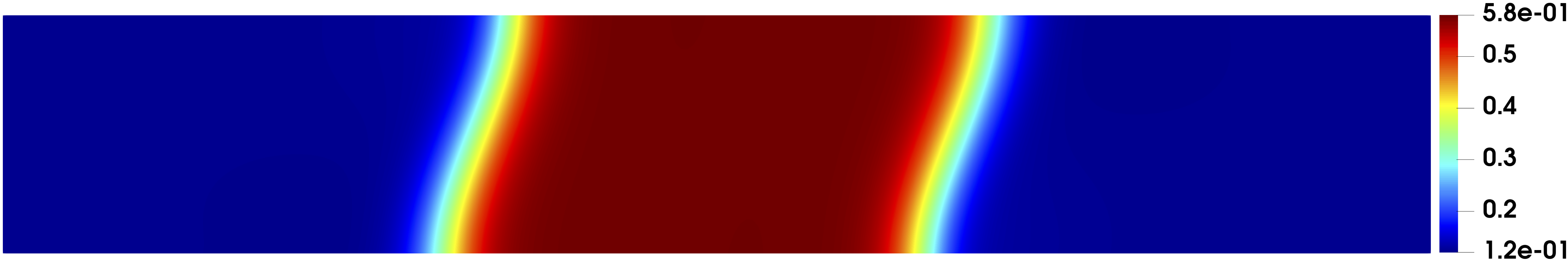}}
\subfloat[$\mathbf{v}$]{\includegraphics[width=0.495\textwidth]{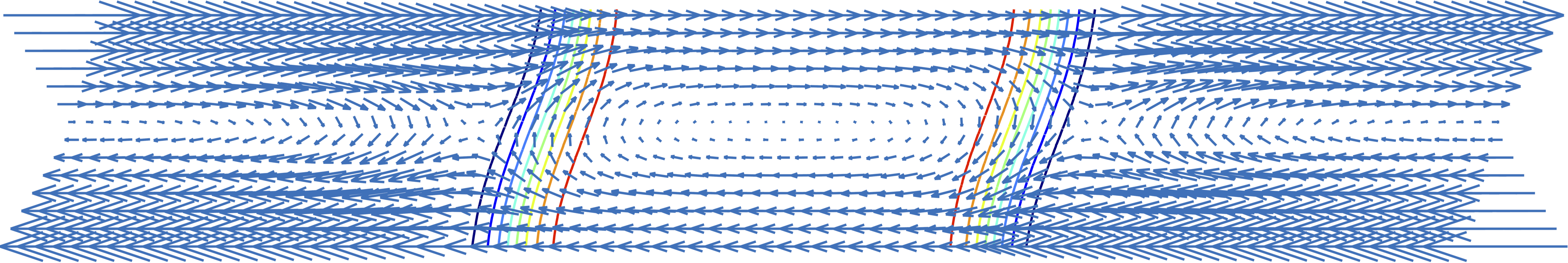}}\\
\subfloat[$T$]{\includegraphics[width=0.495\textwidth]{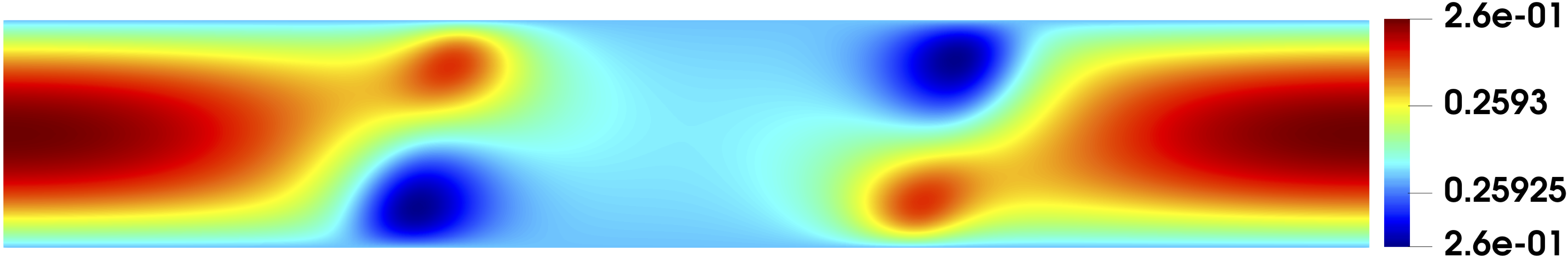}}
\subfloat[$\tilde{\mu}$]{\includegraphics[width=0.495\textwidth]{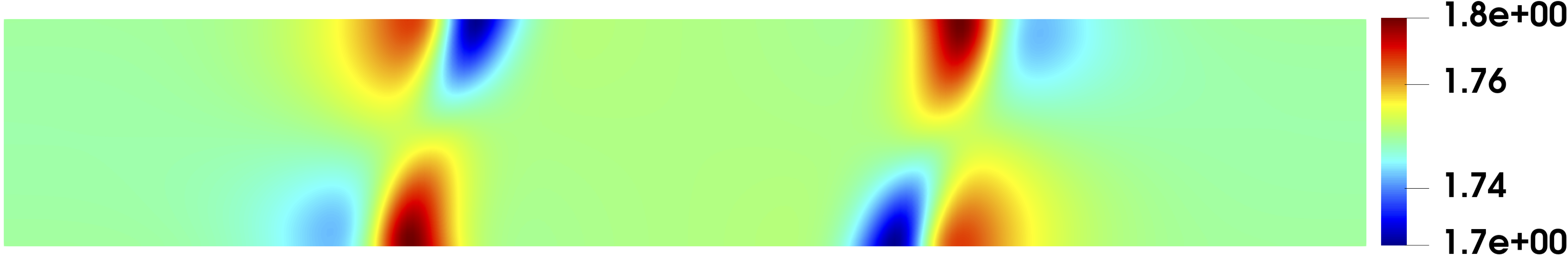}}
\caption{Final-time profiles for initial $T=0.875T_{c}$ and $\varepsilon=0.001$.} 
\label{fig:case2:W0.00 T0.875 thickness nv T mu}
\end{figure}
Moreover, the final-time profiles with initial $T=0.95 T_{c}$ and $\varepsilon=0.001$ are plotted in \cref{fig:case2:W0.00 T0.95 nv T mu}.
In this case, the interface is widened and penetrated by a larger flux of fluid.
Compared to \cref{fig:case2:W0.00 T0.875nTmu} and \cref{fig:case2:W0.00 T0.875 thickness nv T mu}, although the distributions of $T$ and $\tilde{\mu}$ in \cref{fig:case2:W0.00 T0.95 nv T mu} exhibit slight differences, their homogeneity remains well preserved.
\cref{fig:case2:entropy mass}(b) and (c) illustrate the evolution of the total entropy for above two cases.
It is observed that the initial $T$ influences not only the interface thickness but also evolution process toward equilibrium.
For brevity, we only present the mass conservation result for the first case, as the others are similar.
\begin{figure}[htbp]
\centering
\subfloat[$n$]{\includegraphics[width=0.495\textwidth]{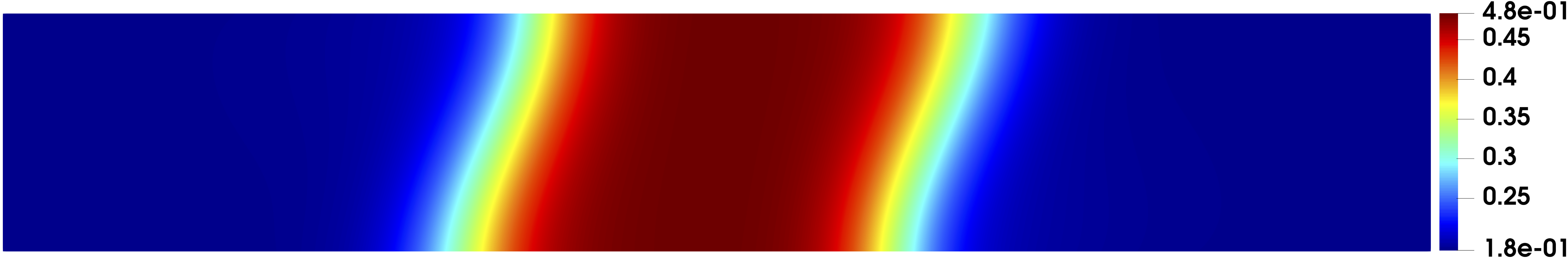}}
\subfloat[$\mathbf{v}$]{\includegraphics[width=0.495\textwidth]{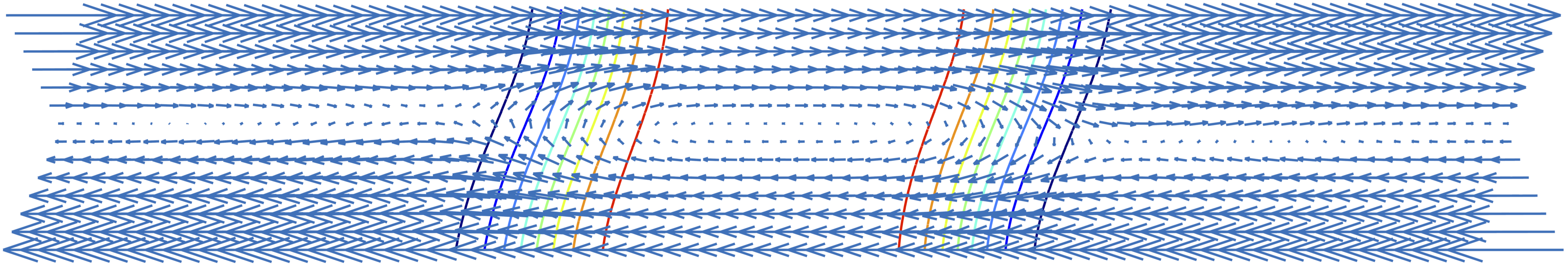}}\\
\subfloat[$T$]{\includegraphics[width=0.495\textwidth]{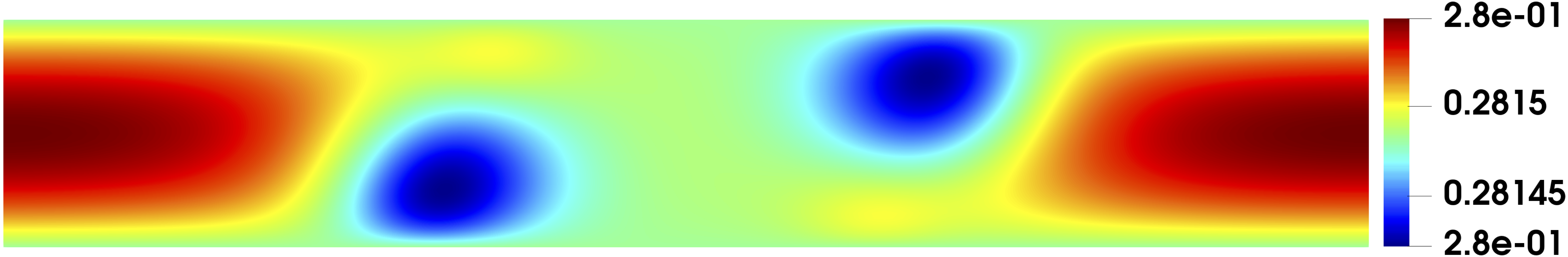}}
\subfloat[$\tilde{\mu}$]{\includegraphics[width=0.495\textwidth]{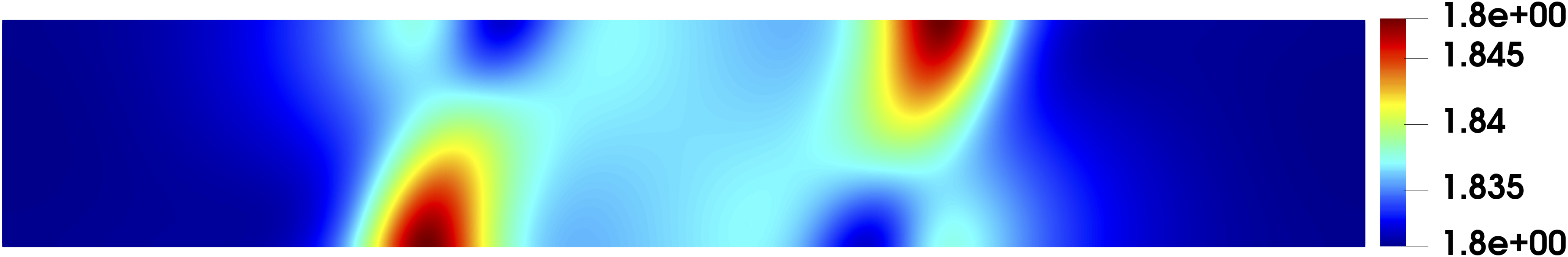}}
\caption{Final-time profiles for initial $T=0.95T_{c}$ and $\varepsilon=0.001$.} 
\label{fig:case2:W0.00 T0.95 nv T mu}
\end{figure}
\begin{figure}[htbp]
\centering
\subfloat[initial $T=0.875T_{c}$, $\varepsilon=0.0001$]{\includegraphics[width=0.24\textwidth]{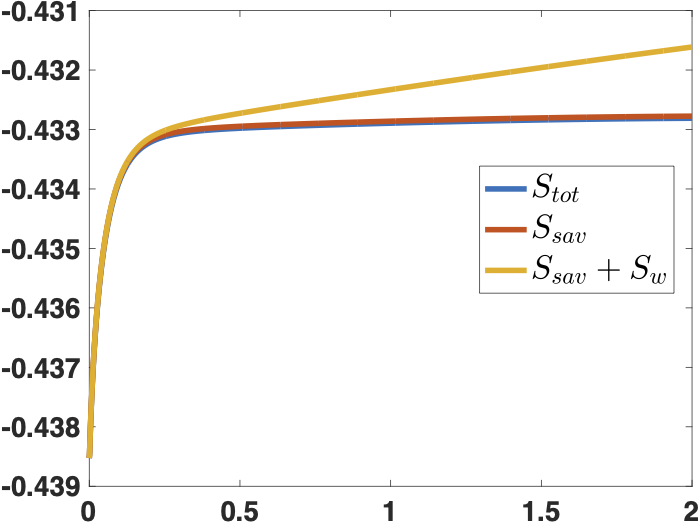}}
\hspace{0.5mm}
\subfloat[initial $T=0.875T_{c}$, $\varepsilon=0.001$]{\includegraphics[width=0.24\textwidth]{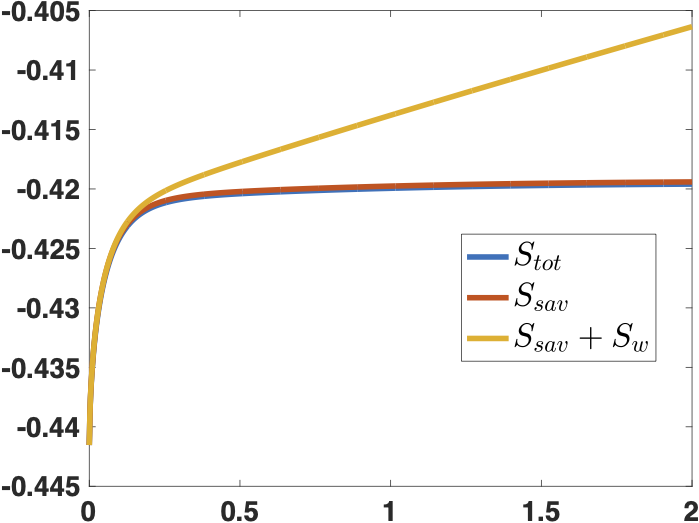}}
\hspace{0.5mm}
\subfloat[initial $T=0.95T_{c}$, $\varepsilon=0.001$]{\includegraphics[width=0.24\textwidth]{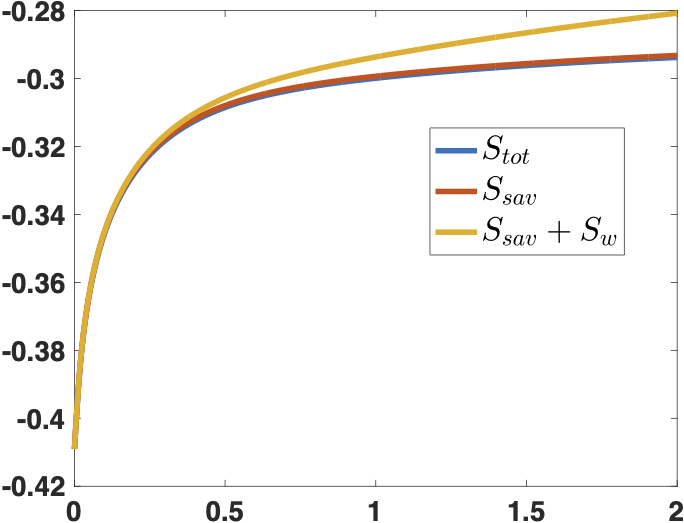}}
\hspace{0.5mm}
\subfloat[total mass error]{\includegraphics[width=0.24\textwidth]{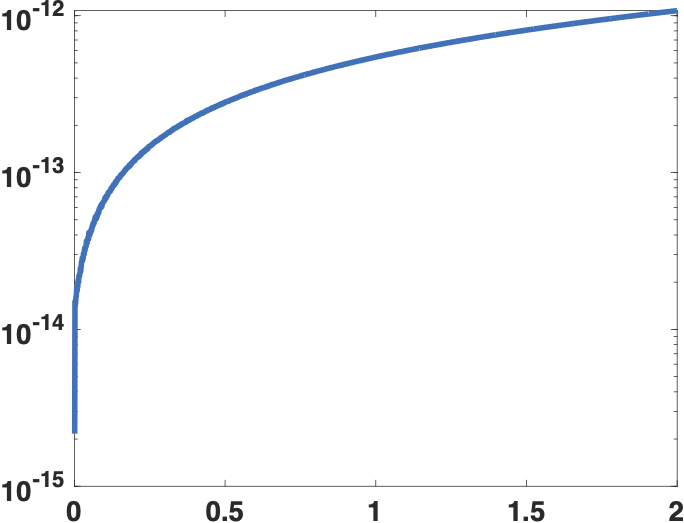}}
\caption{Evolution of total entropy and total mass error.} 
\label{fig:case2:entropy mass}
\end{figure}
To more clearly demonstrate the impact of initial $T$ and $\mathcal{W}$, \cref{fig:case2:angle} shows the final-time level curves of $(n_g + n_l)/2$ for initial $T = 0.875T_c$ and $0.95T_c$ with $\mathcal{W} = 0$, and $\mathcal{W} = -0.005$ and $0.005$ with initial $T = 0.875T_c$.
It is observed that the initial $T$ affects the width of the interface, while $\mathcal{W}$ influences the static contact angle of the liquid phase.
\begin{figure}[htbp]
\centering
\subfloat[different initial $T$]{\includegraphics[width=0.4\textwidth]{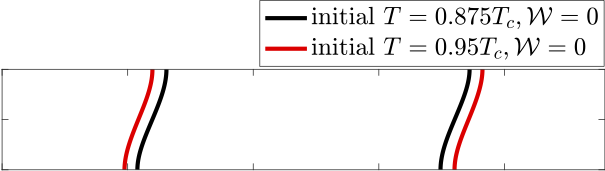}}
\hspace{5mm}
\subfloat[different $\mathcal{W}$]{\includegraphics[width=0.4\textwidth]{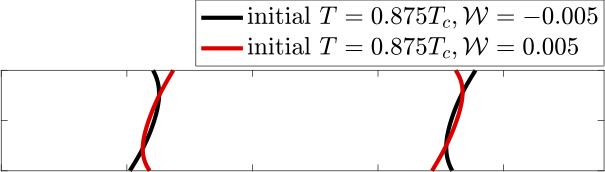}}
\caption{Level curves of $(n_g + n_l)/2$ at final time: (a) different initial $T$ with same $\mathcal{W}$; (b) different $\mathcal{W}$ with same initial $T$.}
\label{fig:case2:angle}
\end{figure}

\subsection{Droplet motion on solid substrates}
\label{subsec:Droplet motion on solid substrates}
In this example, we focus on cases where the droplet motion is induced by (\romannumeral 1) cooled, ordinary, and heated solid substrates, (\romannumeral 2) a wettability gradient at solid substrate, (\romannumeral 3) a thermal gradient at solid substrate.
A droplet is initially placed on the bottom solid surface, surrounded by gas, with no body forces introduced.
The hydrodynamic boundary conditions \eqref{eq:dimensionless_bcd} are imposed at top and bottom boundaries in this example.
Parameters used are $\Omega= [0, 1] \times [0, 0.5]$, $\mathcal{L}_{d}=0.0005$, $\mathcal{L}_{b}=50$, $\mathcal{R}_{e}=1$, $\mathcal{R}=0.06$, $\mathcal{L}_{s}=1 / n_{l}$, $\mathcal{L}_{\kappa}=1 / (n_{l}T_{c}^{2})$, $\mathcal{L}_{\lambda}=0.001n_{l}$, $\varepsilon = 0.0001$, $v_{w}=0$, $\zeta_{gl}=0.5$, $\zeta=\beta$, $\kappa$, $\lambda_{s}$.
\begin{figure}[htbp]
\centering
\subfloat[example 3(\romannumeral 1)]{\includegraphics[width=0.25\textwidth]{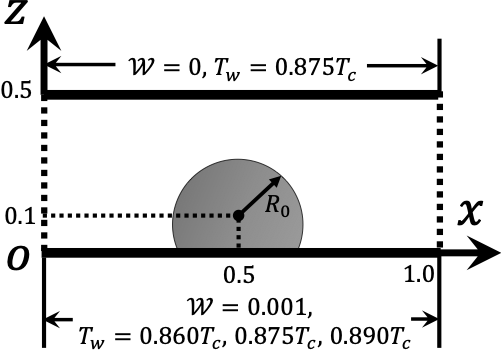}}
\subfloat[example 3(\romannumeral 2)]{\includegraphics[width=0.25\textwidth]{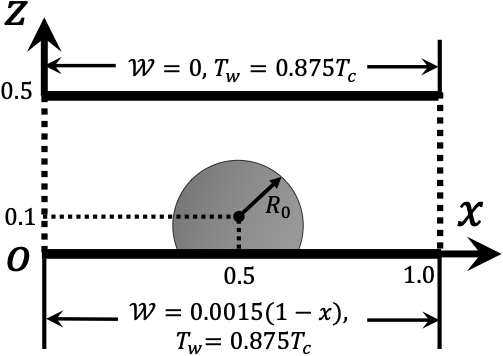}}
\subfloat[example 3(\romannumeral 2)]{\includegraphics[width=0.25\textwidth]{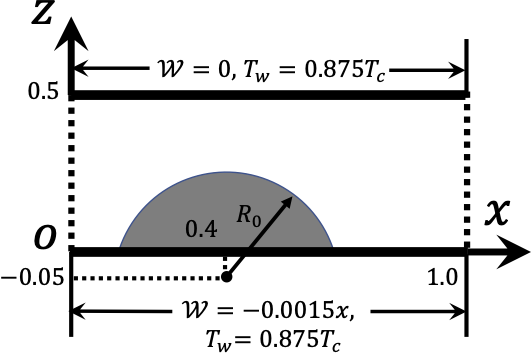}}
\subfloat[example 3(\romannumeral 3)]{\includegraphics[width=0.25\textwidth]{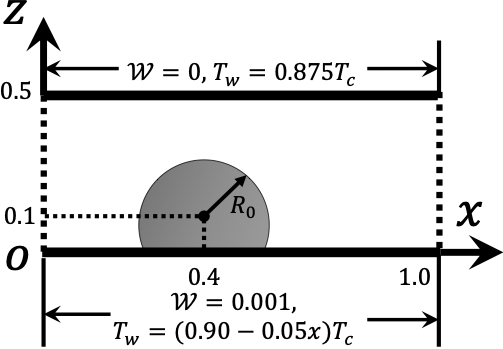}}
\caption{Schematic illustration of the system with different solid substrates.} 
\label{fig:case3:solid}
\end{figure}

\textbf{(\romannumeral 1) Cooled, ordinary, and heated solid substrates:}
First, we simulate cases where the bottom substrate temperature differs from the gas-liquid coexistence temperature \cite{xu2012droplet}.
As shown in \cref{fig:case3:solid}(a), three different bottom substrate temperatures are considered: $0.86T_{c}$, $0.875T_{c}$ and $0.89T_{c}$ for the cooled, ordinary and heated substrate, respectively.
For all cases, the top substrate temperature is fixed at $0.875T_{c}$, $\mathcal{W}=0$ for the top substrate and $\mathcal{W}=0.001$ for the bottom substrate.
The droplet radius $R_{0} = 0.1$, and initial $T = 0.875T_{c}$.
The final time $T_{f}=2$.

\cref{fig:case3:droplet T nv} shows the final-time profiles of $n$ and $\mathbf{v}$ (left column), $T$ (middle column) and $\tilde{\mu}$ (right column) for three cases.
It can be observed in \cref{fig:case3:droplet T nv}(a) that for droplet on the cooled substrate, condensation, characterized by a converging velocity field, occurs in a narrow region near the contact line.
In comparison, as shown in \cref{fig:case3:droplet T nv}(g), for droplet on the heated substrate, evaporation, characterized by a diverging velocity field, is prominent in a narrow region near the contact line.
For droplet on the ordinary substrate, local evaporation is observed from \cref{fig:case3:droplet T nv}(d), though it is less prominent compared to that for the heated substrate.
The effect of bottom substrate temperature can be observed in \cref{fig:case3:droplet T nv}(b), (e), and (h).
\cref{fig:case3:droplet T nv}(c), (f), and (i) show the homogeneity of $\tilde{\mu}$ at final time.
\begin{figure}[htbp]
\centering
\subfloat[$n$ (color) and $\mathbf{v}$ (arrow)]{\includegraphics[width=0.325\textwidth]{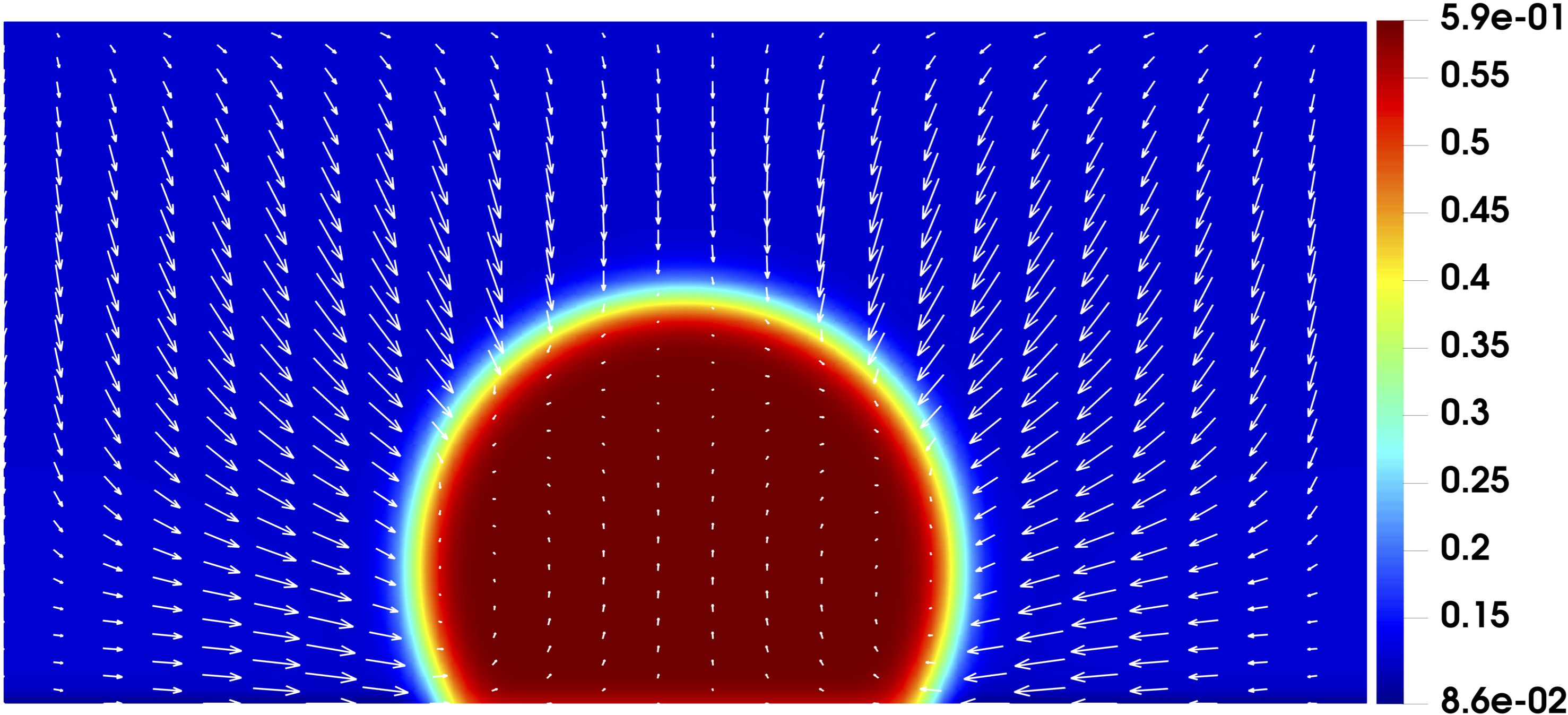}}
\hspace{0.5mm}
\subfloat[$T$]{\includegraphics[width=0.325\textwidth]{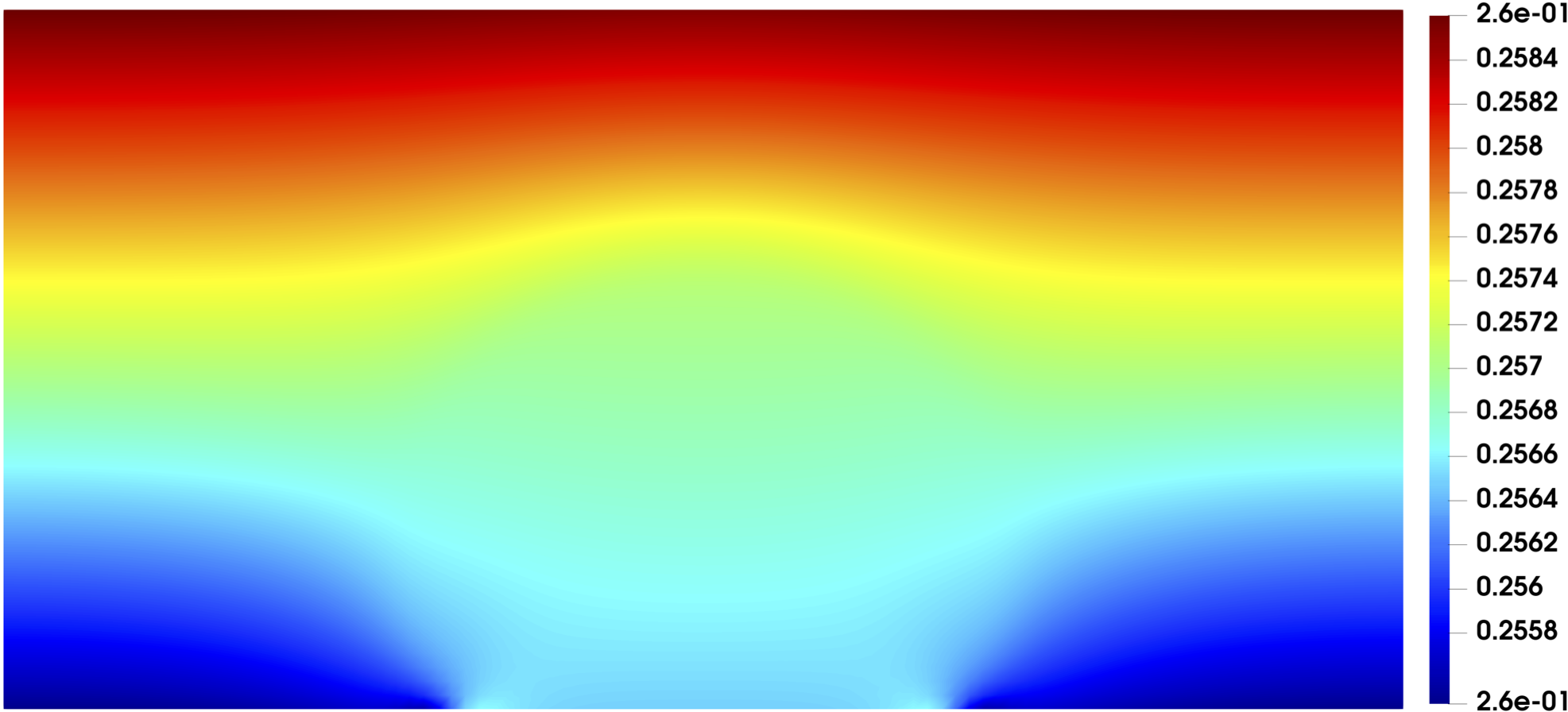}}
\hspace{0.5mm}
\subfloat[$\tilde{\mu}$]{\includegraphics[width=0.325\textwidth]{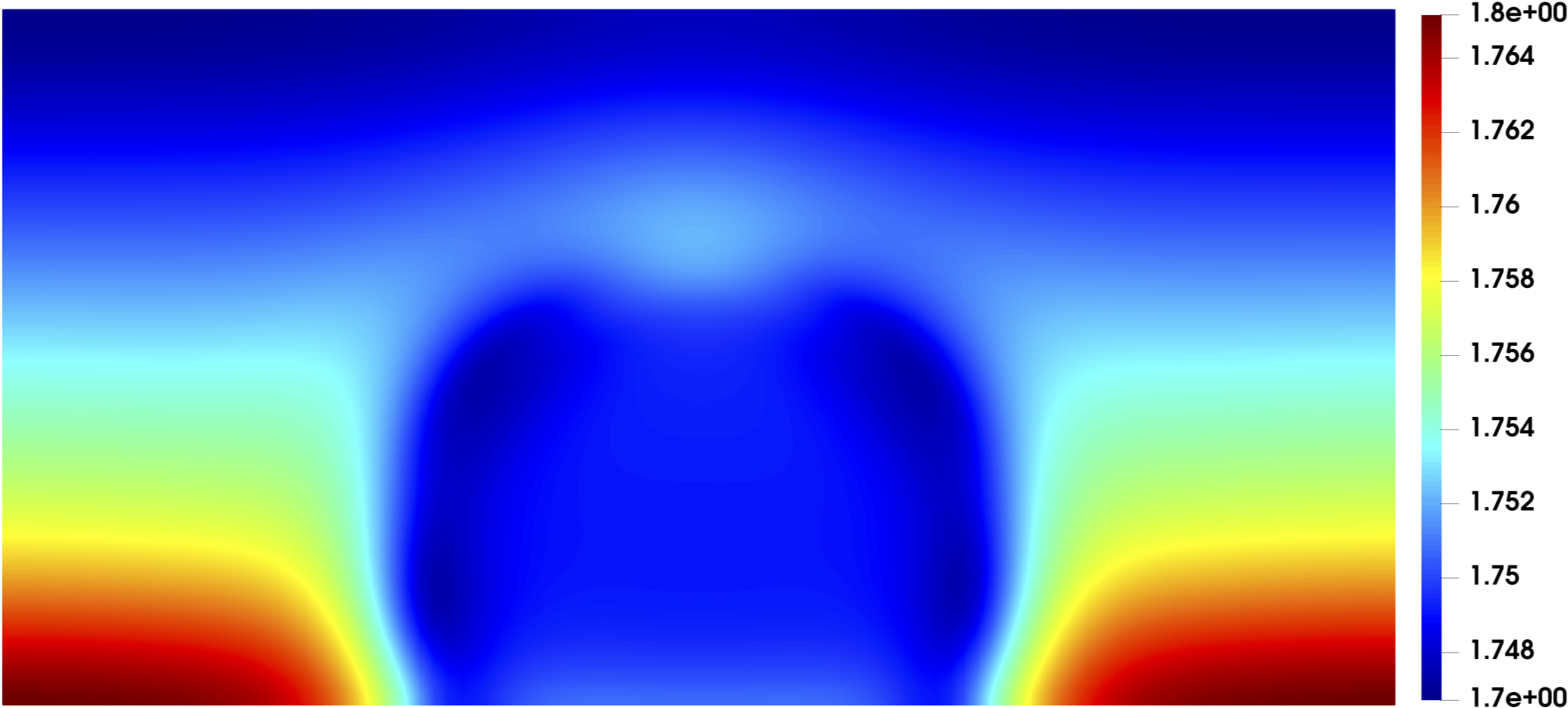}}\\
\subfloat[$n$ (color) and $\mathbf{v}$ (arrow)]{\includegraphics[width=0.325\textwidth]{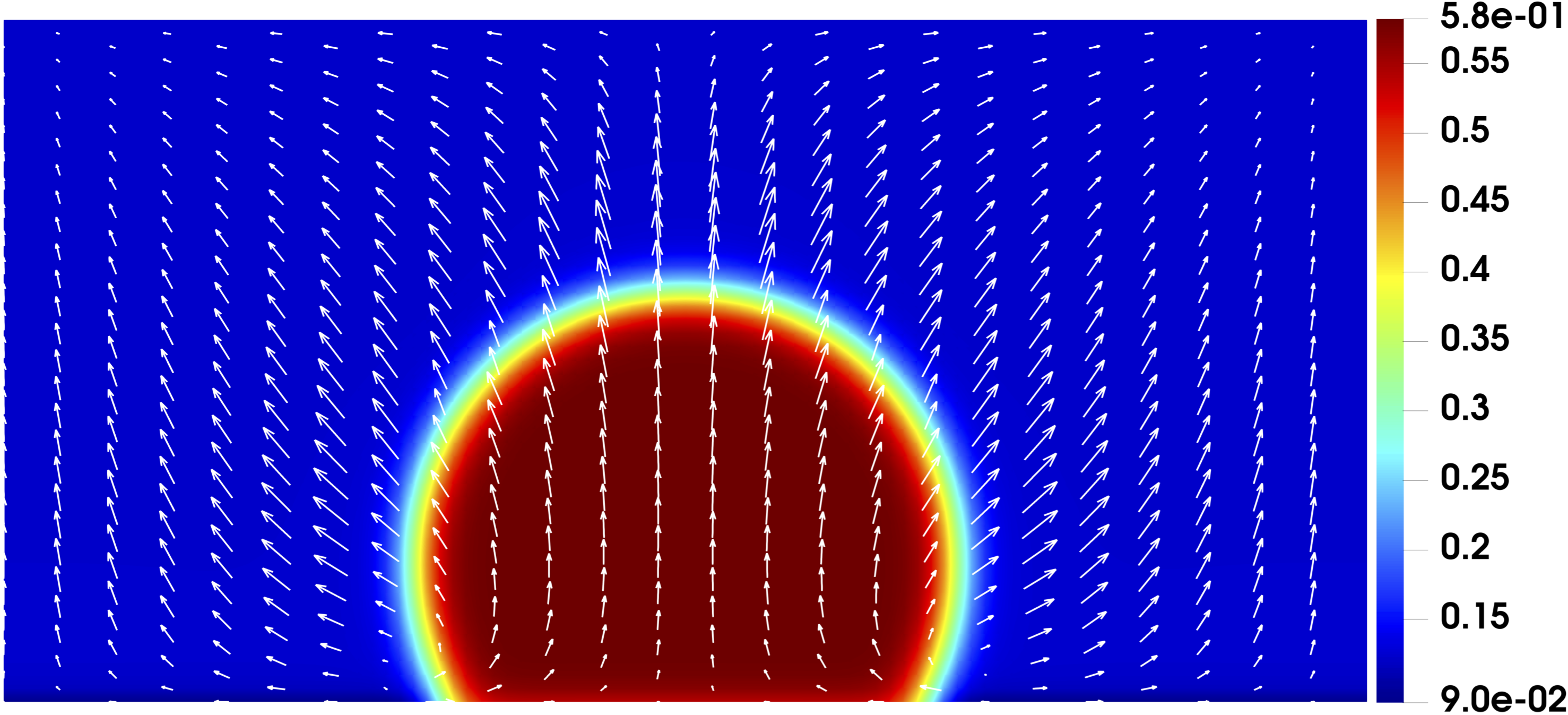}}
\hspace{0.5mm}
\subfloat[$T$]{\includegraphics[width=0.325\textwidth]{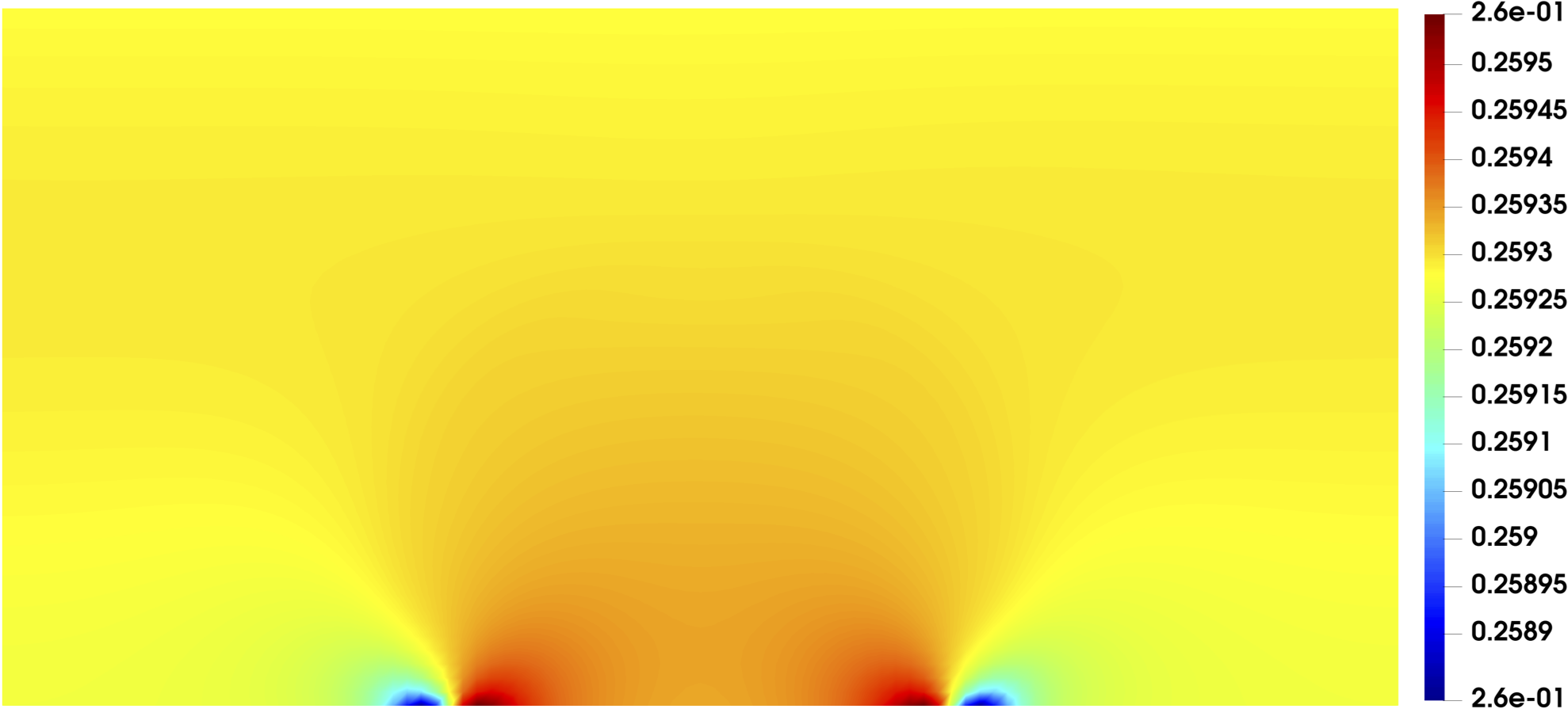}}
\hspace{0.5mm}
\subfloat[$\tilde{\mu}$]{\includegraphics[width=0.325\textwidth]{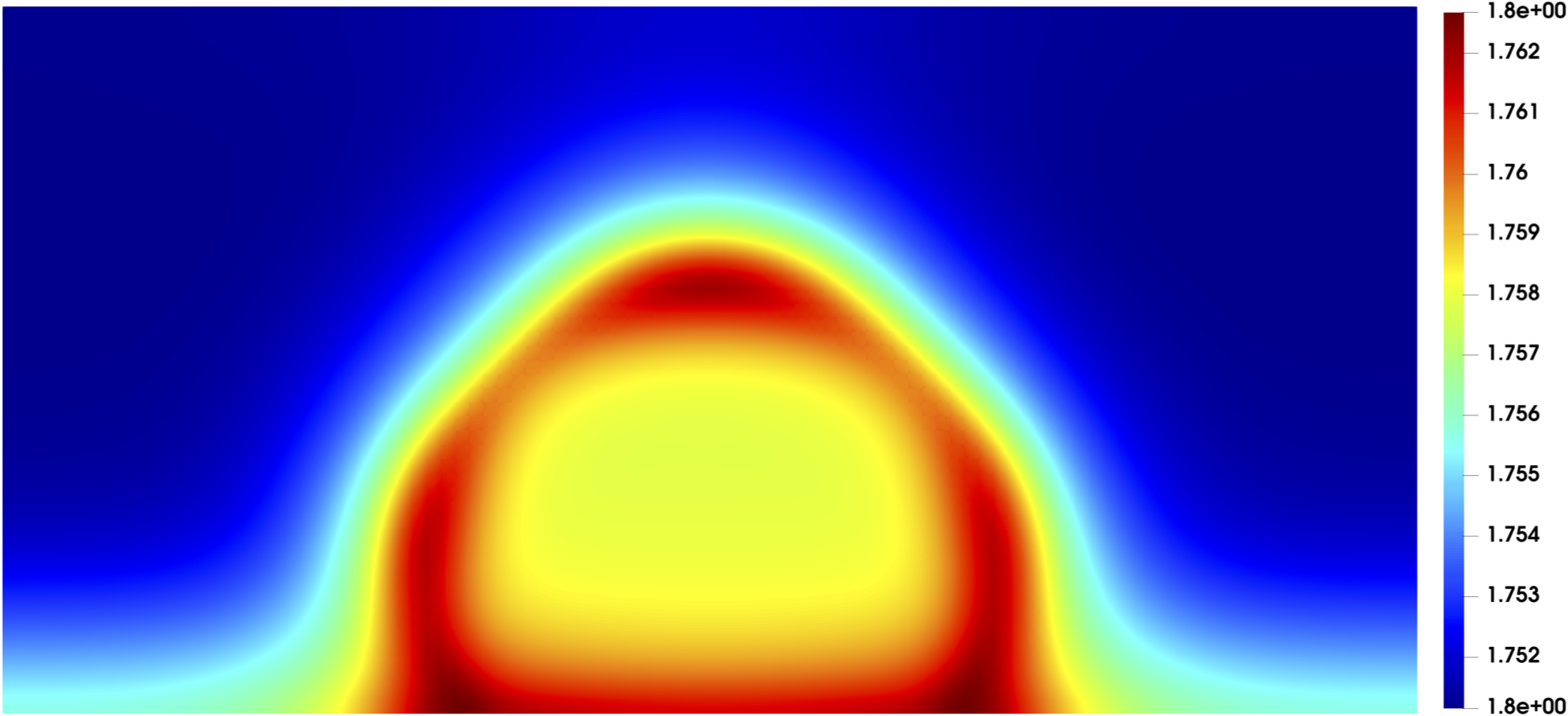}}\\
\subfloat[$n$ (color) and $\mathbf{v}$ (arrow)]{\includegraphics[width=0.325\textwidth]{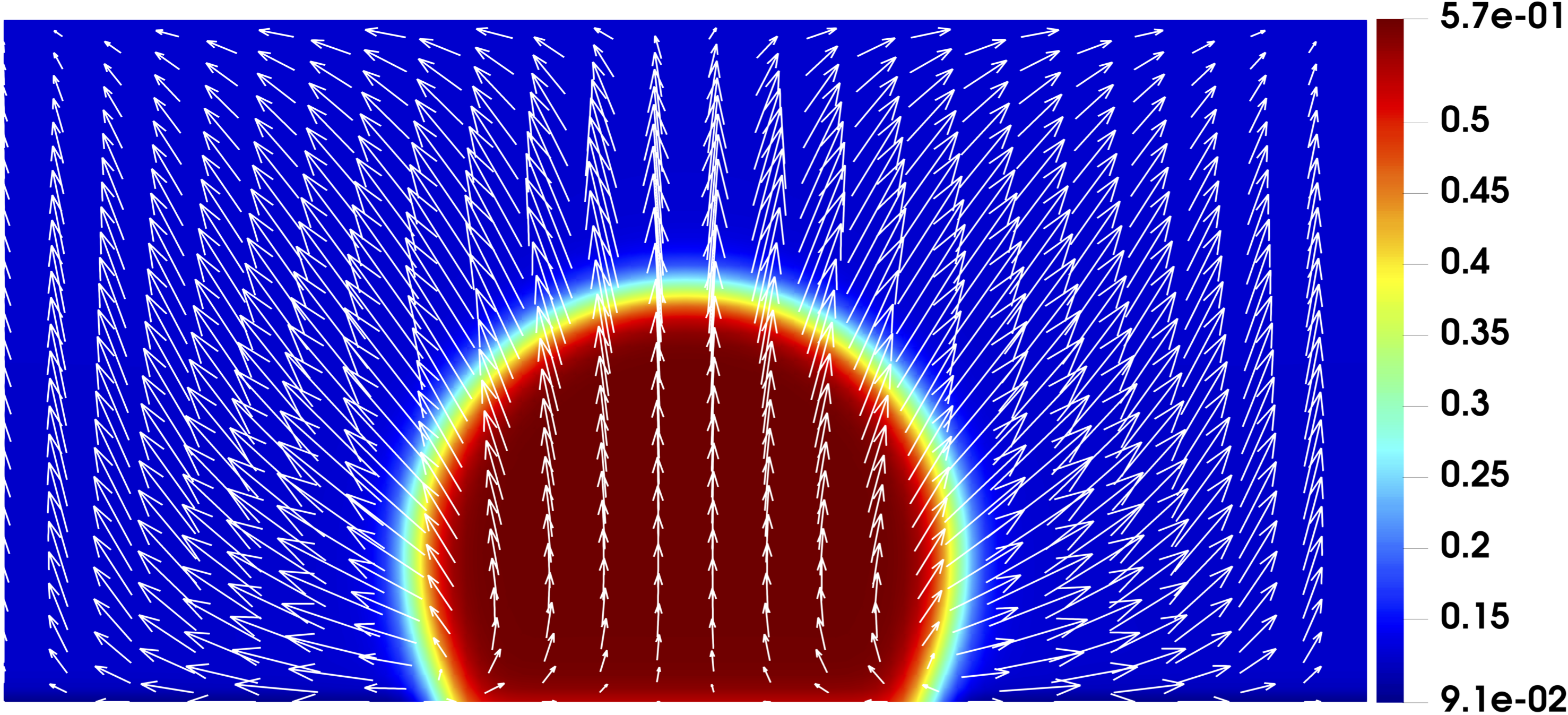}}
\hspace{0.5mm}
\subfloat[$T$]{\includegraphics[width=0.325\textwidth]{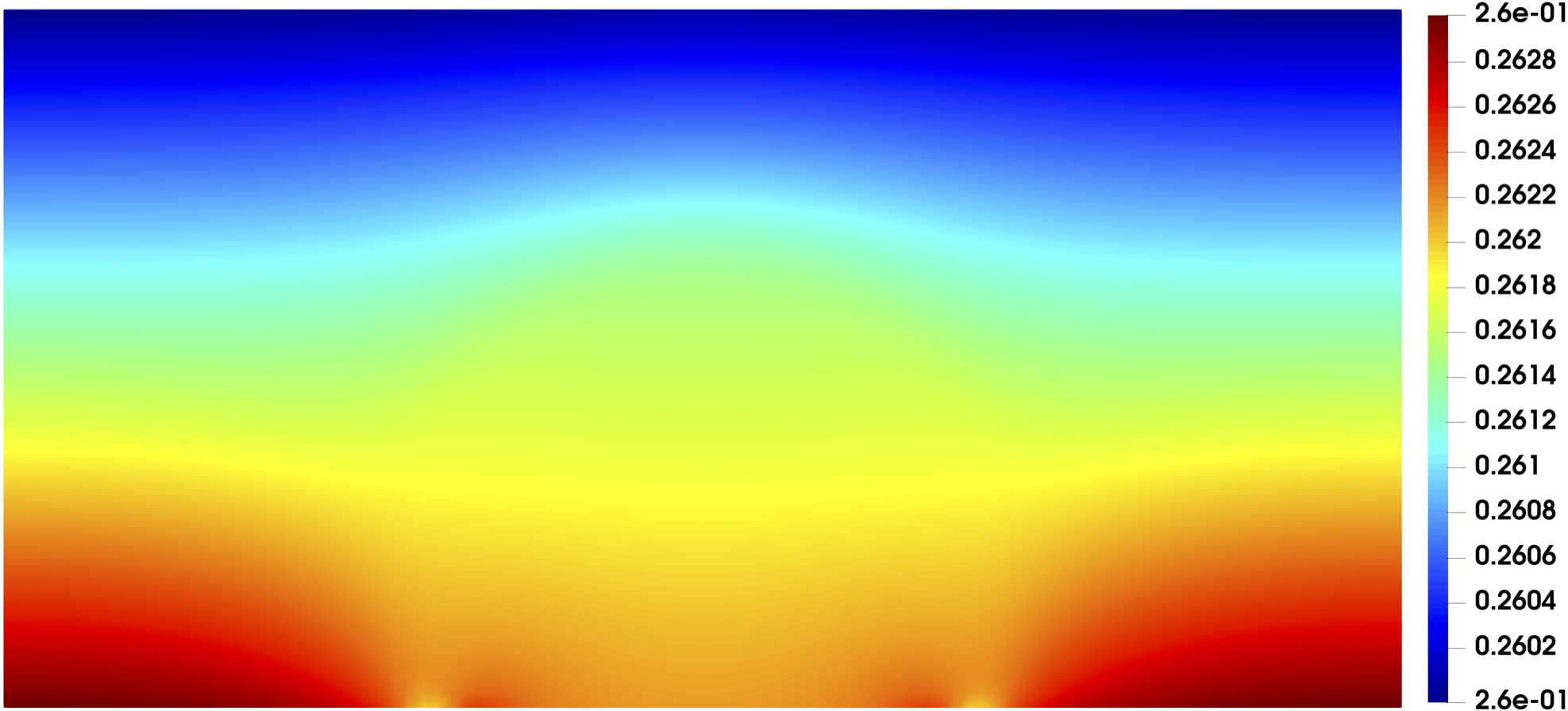}}
\hspace{0.5mm}
\subfloat[$\tilde{\mu}$]{\includegraphics[width=0.325\textwidth]{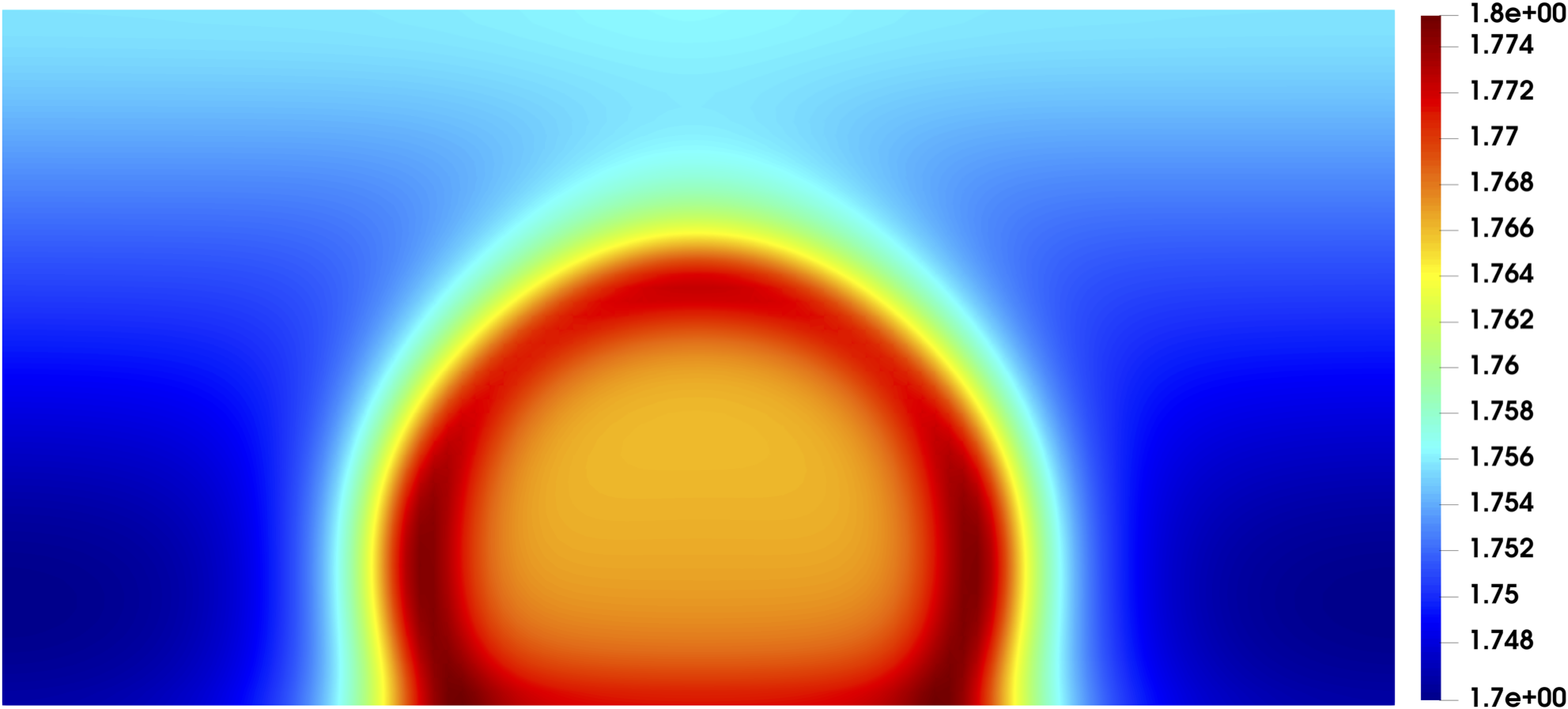}}
\caption{Final-time profiles: (a)-(c) the cooled substrate; (d)-(f) the ordinary substrate; and (g)-(i) the heated substrate.} 
\label{fig:case3:droplet T nv}
\end{figure}
When hydrodynamic boundary conditions are imposed, the outgoing entropy (denoted by $S_{\text{out}}$) should be considered. 
We define $S_{\text{out}}^{k}=\sum_{i=1}^{k}\delta t\int_{\Gamma} (\mathcal{R}_{e}^{-1}\partial_{\tau}(q_{s}^{i}/T^{i})+\mathbf{q}_{w}^{i-1}\cdot\bm{\gamma}/\mathcal{R}_{e}T_{w}+q^{i-1\ast}\mathcal{W}\partial_{\tau}(\sigma_{s}^{i}v_{\tau}^{i-1}))dA$.
As shown in \cref{thm:secondlaw_dc}, $S_{\text{sav}}+S_{\text{out}}$ should increase over time.
\cref{fig:case3:droplet T entropy Mass} shows the evolution of total entropy and total mass error.
\begin{figure}[htbp]
\centering
\subfloat[cooled substrate]{\includegraphics[width=0.24\textwidth]{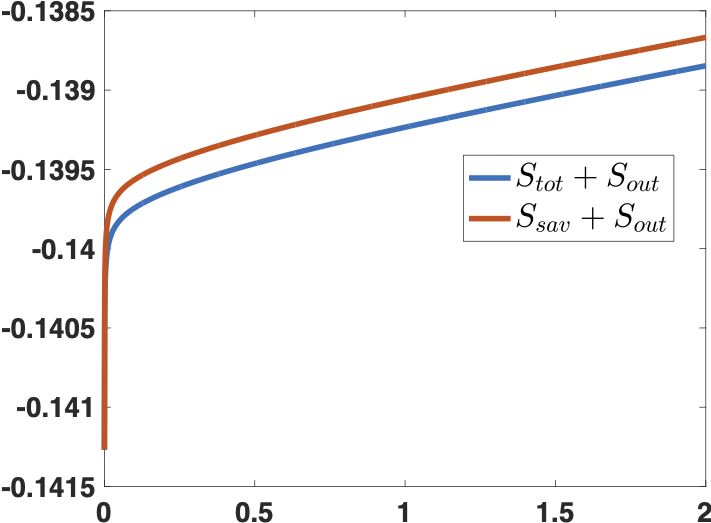}}
\hspace{0.5mm}
\subfloat[ordinary substrate]{\includegraphics[width=0.24\textwidth]{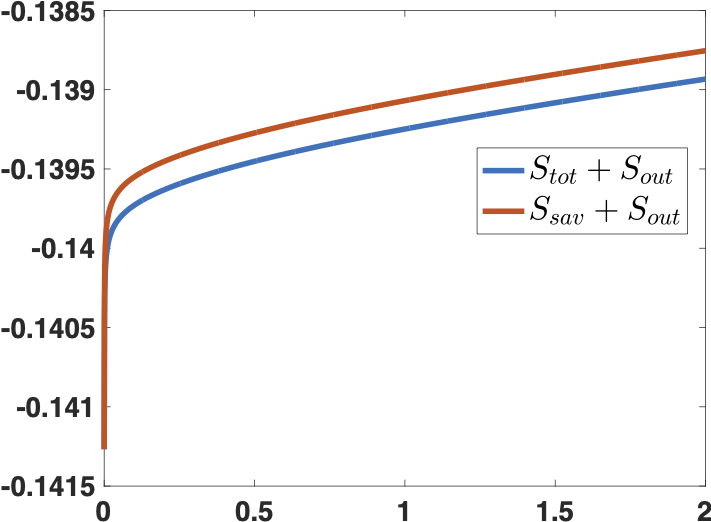}}
\hspace{0.5mm}
\subfloat[heated substrate]{\includegraphics[width=0.24\textwidth]{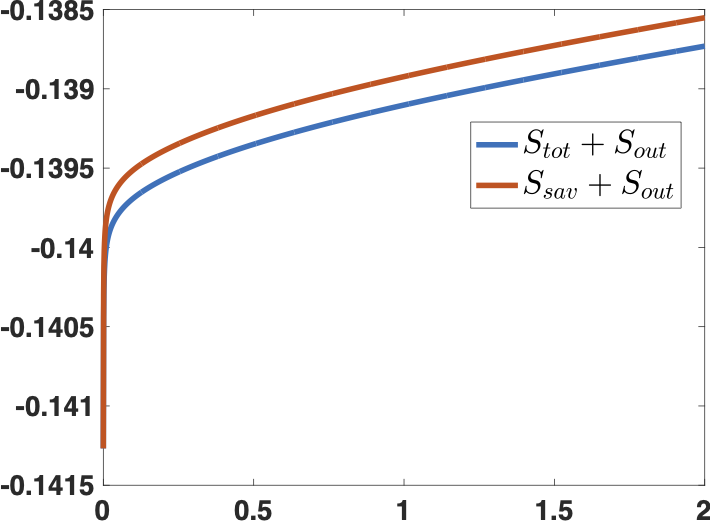}}
\hspace{0.5mm}
\subfloat[total mass error]{\includegraphics[width=0.24\textwidth]{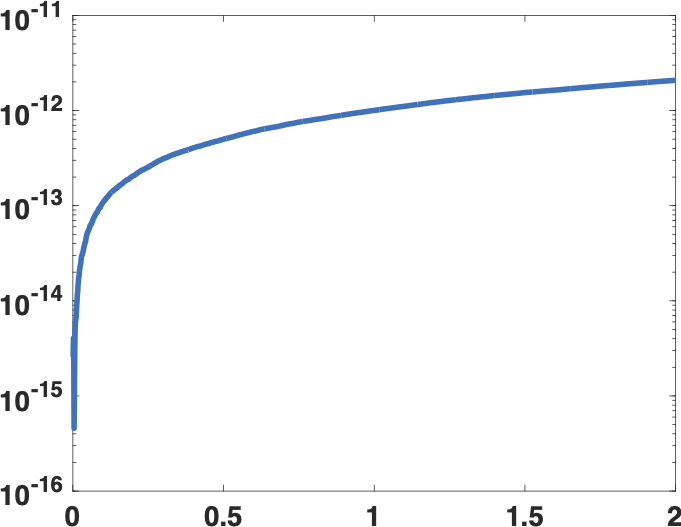}}
\caption{Evolution of total entropy and total mass error.} 
\label{fig:case3:droplet T entropy Mass}
\end{figure}

\textbf{(\romannumeral 2) Solid substrate with a wettability gradient:}
Next, we present and discuss the simulation results of droplet migration on the solid substrate with a wettability gradient, considering both hydrophobic and hydrophilic cases \cite{xu2012droplet}.
As depicted in \cref{fig:case3:solid}(b) and (c), we set the wettability number on the bottom substrate as $\mathcal{W} = 0.0015(1-x)$ in the hydrophobic case and $\mathcal{W} = -0.0015x$ in the hydrophilic case, while $\mathcal{W} = 0$ on the top substrate in both cases.
The initial droplet radius $R_{0}=0.1$ in the hydrophobic case and $0.3$ in the hydrophilic case.
The initial $T = 0.875T_{c}$, $T_{w}=0.875T_{c}$, and final time $T_{f}=30$ in both cases.
\cref{fig:case3:droplet gradW nv} shows the profiles of $n$ and $\mathbf{v}$ at $t = 0$ and $30$, as well as the final-time profile of $\tilde{\mu}$ for both cases.
It is found that, due to the driving force arising from the wettability gradient in the $+x$-direction at the bottom substrate, the droplet at $t = 30$ shows significant movement and deformation in the $x$-direction.
For clarity, the level curves of $(n_g + n_l)/2$ at different times are plotted in \cref{fig:case3:droplet gradW angle entropy mass}(a) and (b), showing that the droplet moves faster in the hydrophilic case, which is quantitatively analyzed in \cite{xu2012droplet}.
\cref{fig:case3:droplet gradW angle entropy mass}(c) and (d) show the evolution of total entropy and total mass error for the hydrophobic case.
\begin{figure}[htbp]
\centering
\subfloat[$t = 0$]{\includegraphics[width=0.33\textwidth]{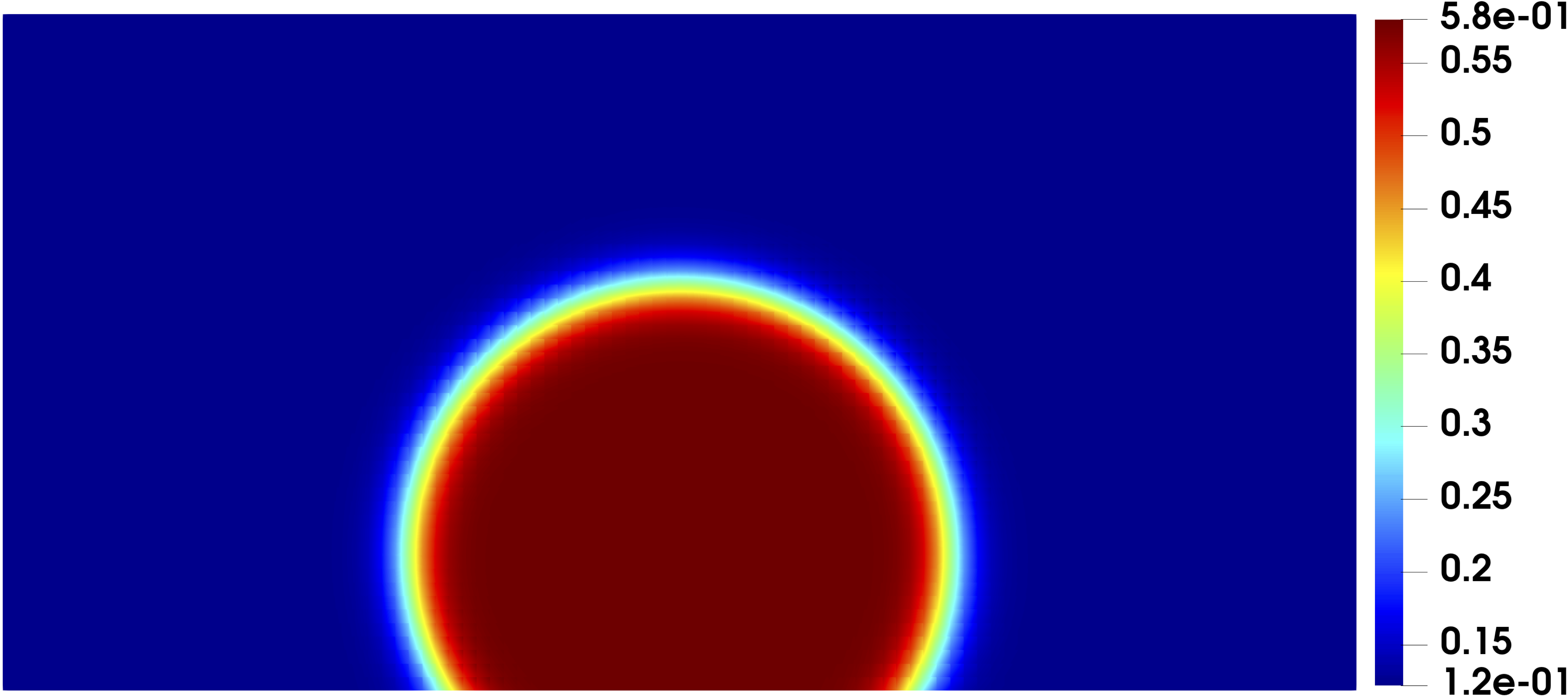}}
\subfloat[$t = 30$]{\includegraphics[width=0.33\textwidth]{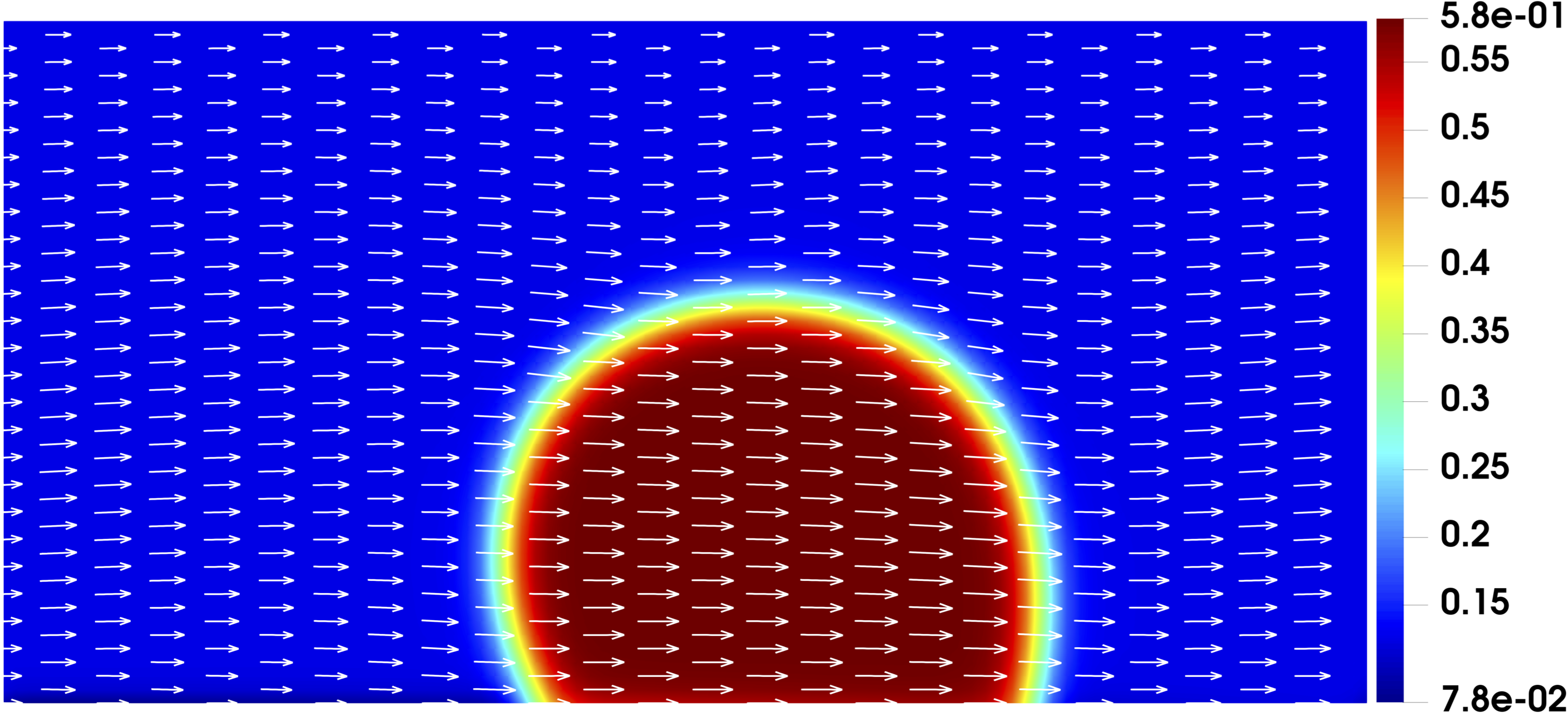}}
\subfloat[$\tilde{\mu}$]{\includegraphics[width=0.33\textwidth]{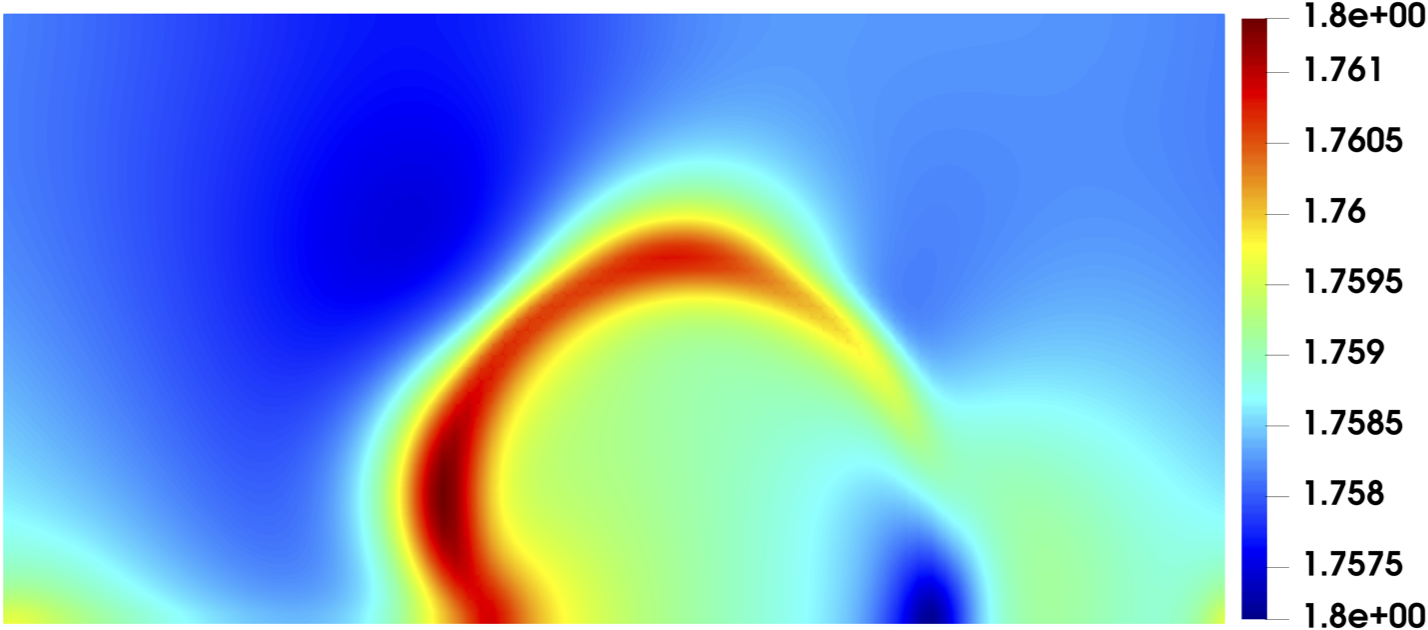}}\\
\subfloat[$t = 0$]{\includegraphics[width=0.33\textwidth]{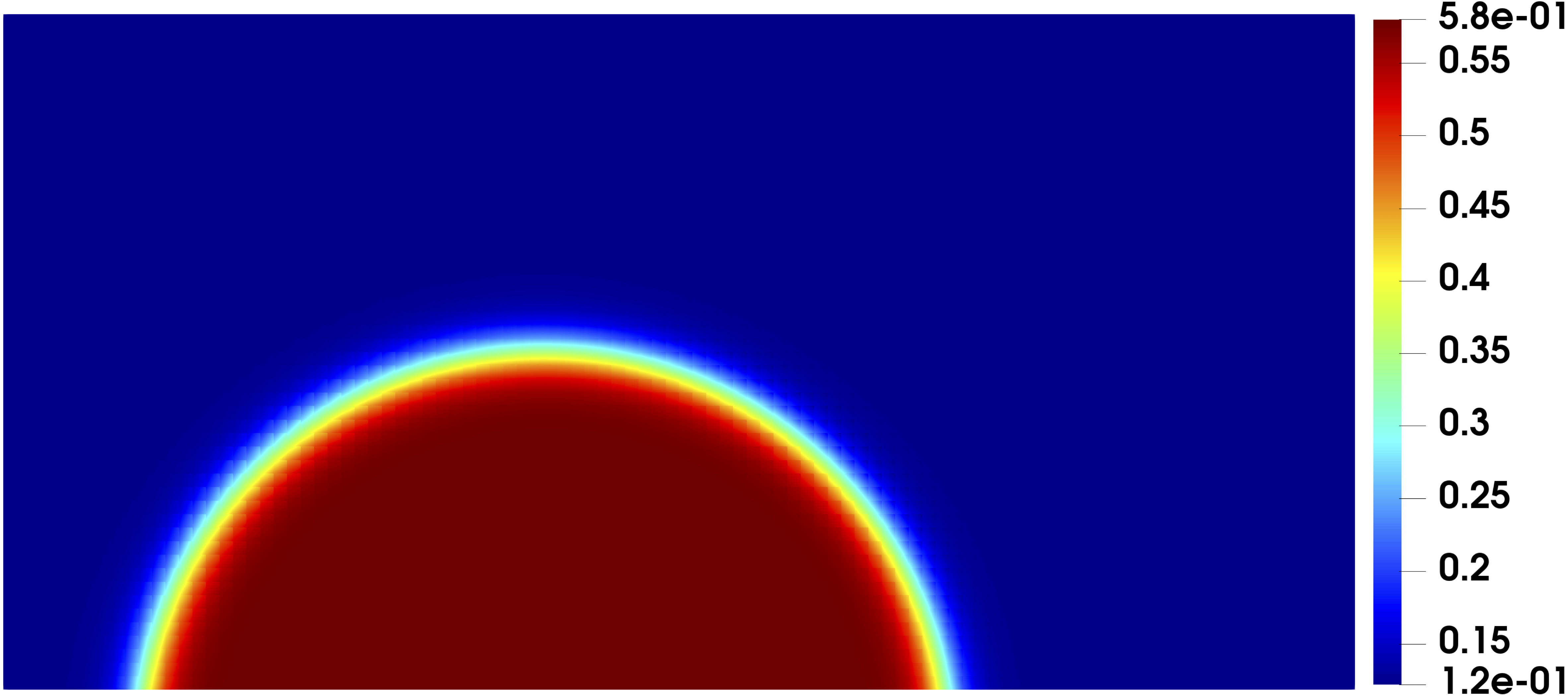}}
\subfloat[$t = 30$]{\includegraphics[width=0.33\textwidth]{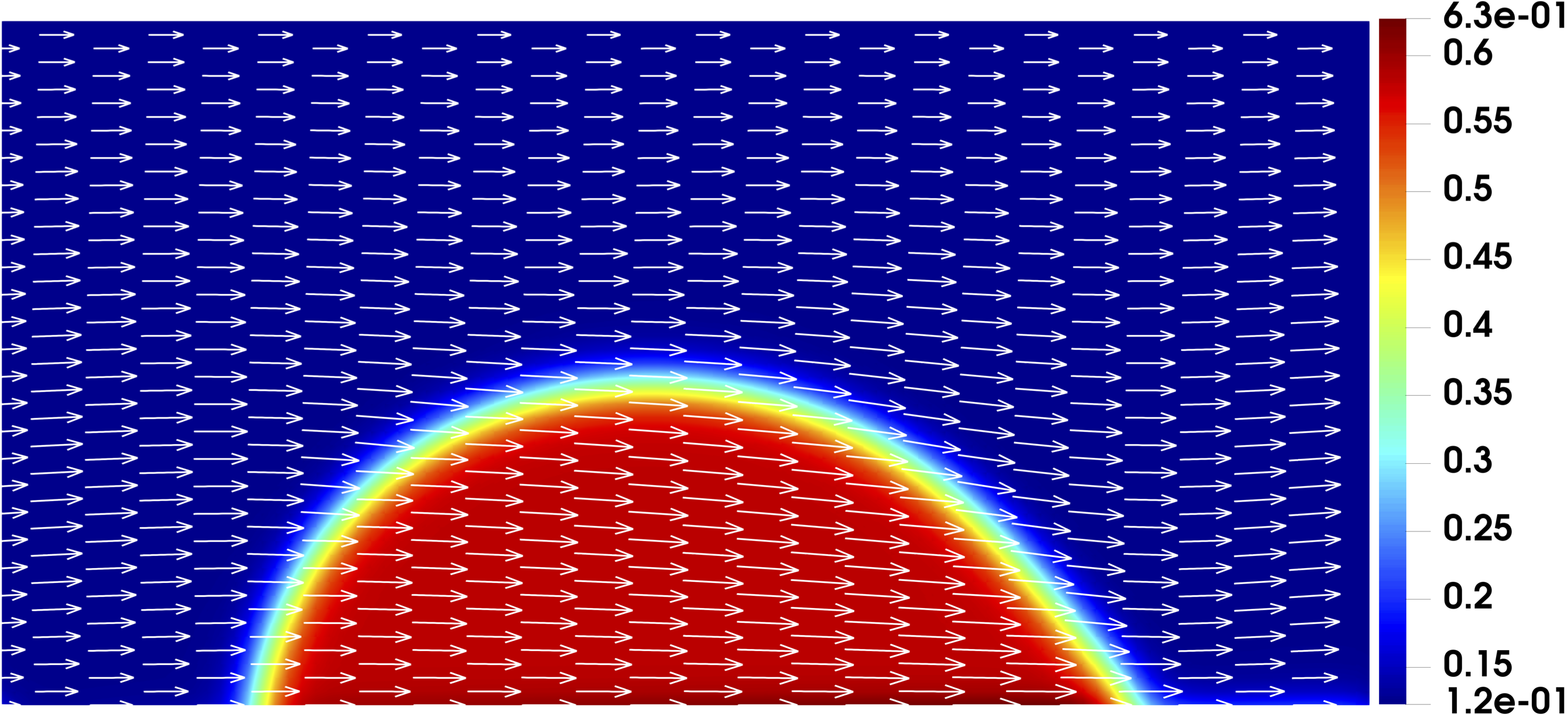}}
\subfloat[$\tilde{\mu}$]{\includegraphics[width=0.33\textwidth]{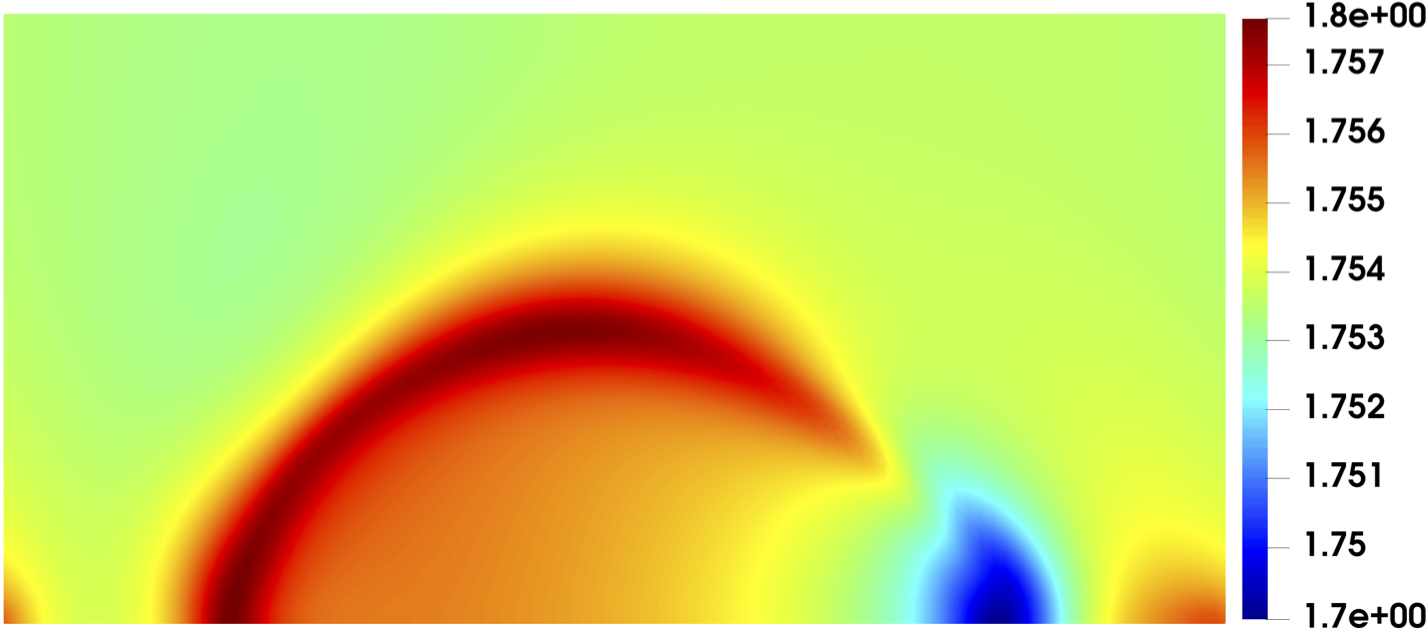}}
\caption{Profiles of $n$ (color) and $\mathbf{v}$ (arrow), and the final-time profile of $\tilde{\mu}$: (a)-(c) the hydrophobic case; (d)-(f) the hydrophilic case.} 
\label{fig:case3:droplet gradW nv}
\end{figure}
\begin{figure}[htbp]
\centering
\subfloat[hydrophobic case]{\includegraphics[width=0.25\textwidth]{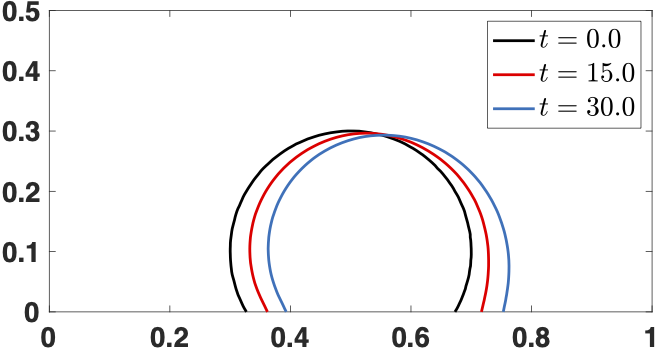}}
\hspace{0.1mm}
\subfloat[hydrophilic case]{\includegraphics[width=0.25\textwidth]{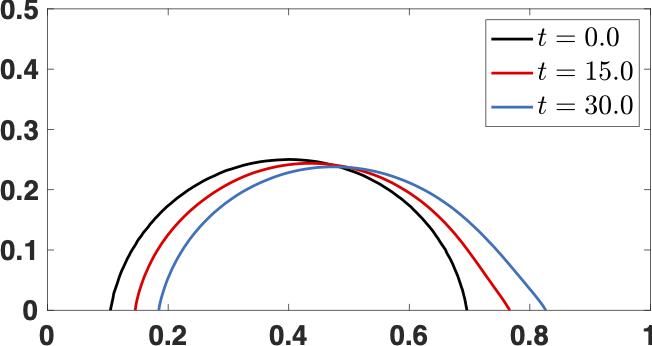}}
\hspace{0.1mm}
\subfloat[total entropy]{\includegraphics[width=0.235\textwidth]{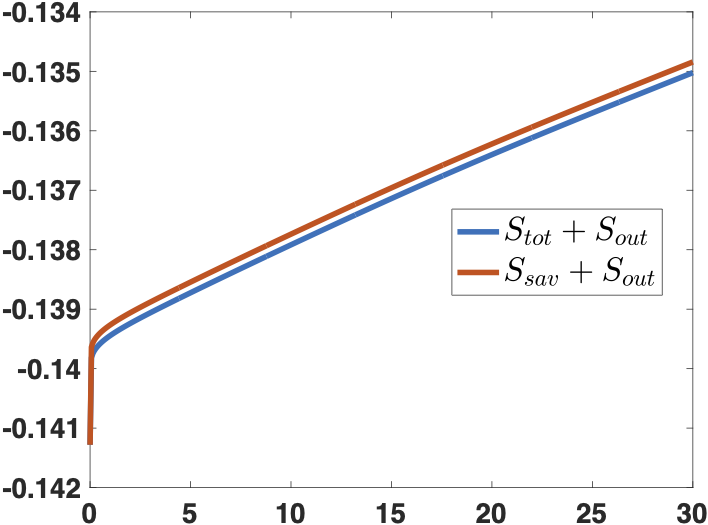}}
\hspace{0.1mm}
\subfloat[total mass error]{\includegraphics[width=0.235\textwidth]{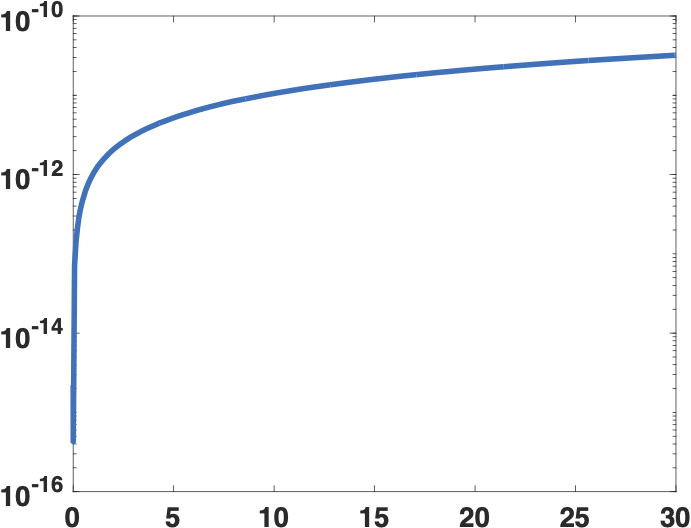}}
\caption{Level curves of $(n_g + n_l)/2$ at different times, and the evolution of total entropy and total mass error.} 
\label{fig:case3:droplet gradW angle entropy mass}
\end{figure}

\textbf{(\romannumeral 3) Solid substrate with a thermal gradient:}
In this case, we simulate the migration of a droplet on the solid substrate with a thermal gradient \cite{xu2012thermal}.
As shown in \cref{fig:case3:solid}(d), we set $R_{0}=0.1$, initial $T = 0.875T_{c}$, $\mathcal{W} = 0.001$ on the bottom substrate and $\mathcal{W} = 0$ on the top substrate.
The bottom substrate temperature varies as $T_{w}=(0.9-0.05x)T_{c}$, while the top substrate temperature remains at $0.875 T_{c}$.
The final time $T_{f}=30$.
\cref{fig:case3:droplet gradT nv angle} shows the profiles of $n$ and $\mathbf{v}$, and the level curves of $(n_g + n_l)/2$ at different times.
It is observed that, induced by the thermal gradient, the droplet spontaneously migrates from the hot region toward the cold region, and finally attains a steady state with an almost constant migration velocity \cite{xu2012thermal}.
\begin{figure}[htbp]
\centering
\subfloat[$t = 0$]{\includegraphics[width=0.33\textwidth]{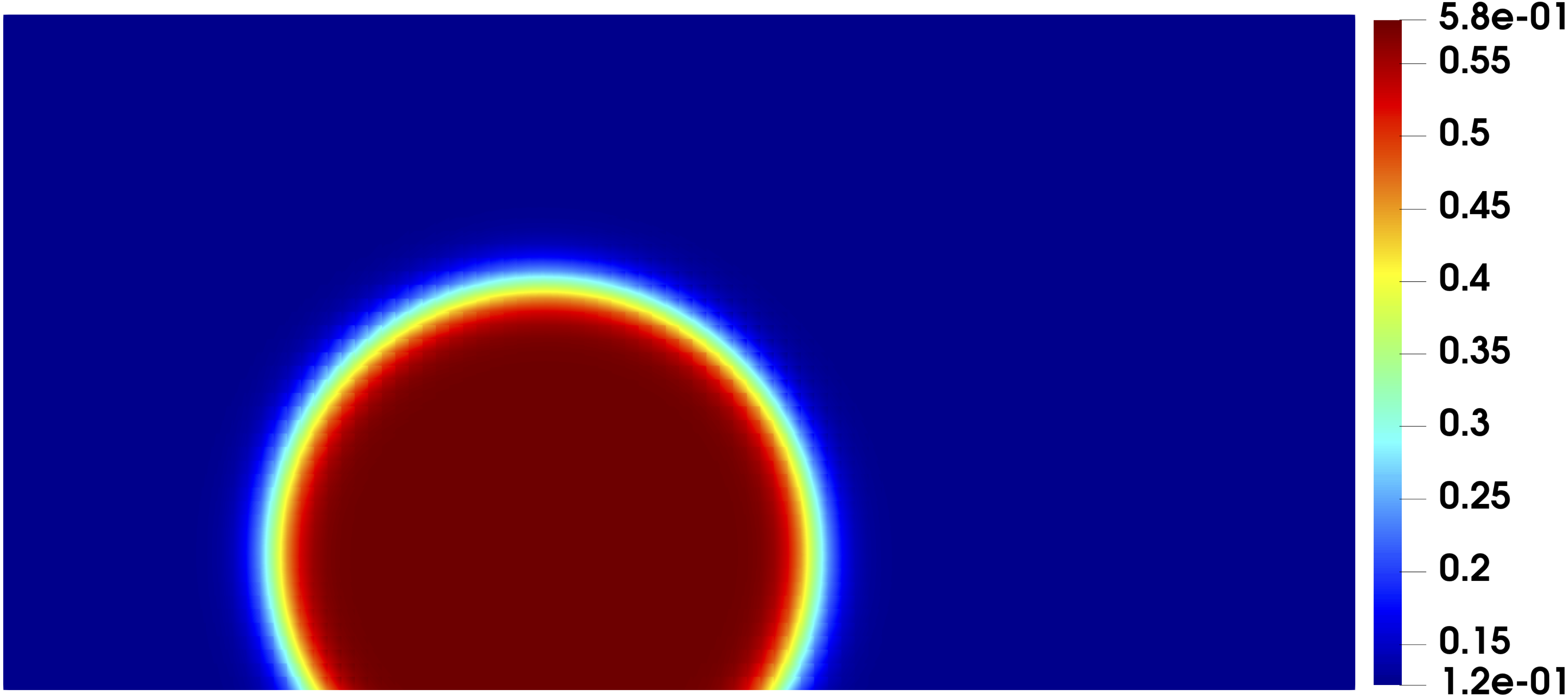}}
\hspace{0.5mm}
\subfloat[$t = 30$]{\includegraphics[width=0.33\textwidth]{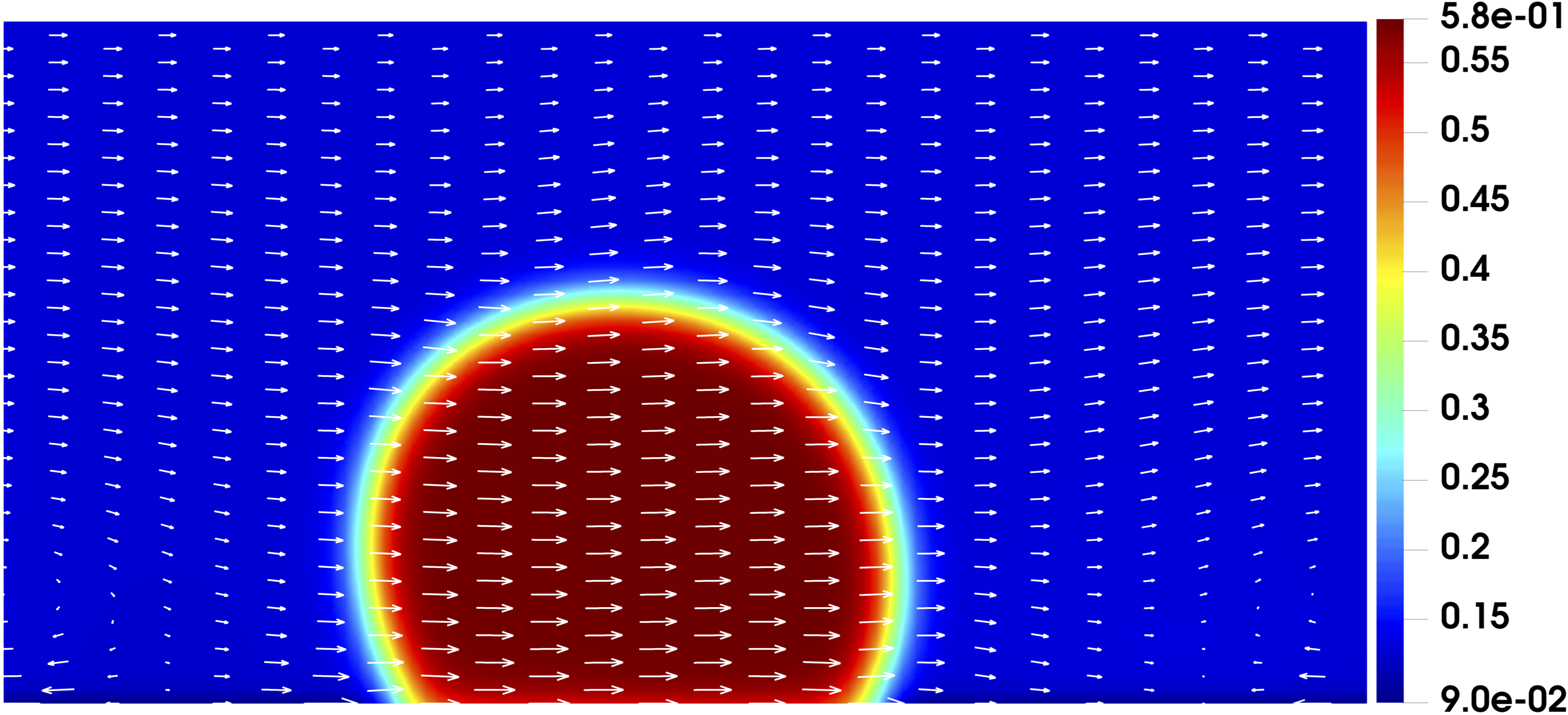}}
\hspace{0.5mm}
\subfloat[level curves of $(n_g + n_l)/2$]{\includegraphics[width=0.315\textwidth]{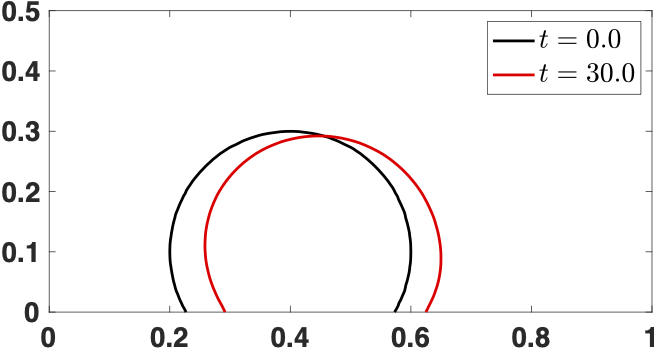}}
\caption{Profiles of $n$ (color) and $\mathbf{v}$ (arrow), and level curves of $(n_g + n_l)/2$ at different times.} 
\label{fig:case3:droplet gradT nv angle}
\end{figure}
The final-time profiles of $T$ and $\tilde{\mu}$ are plotted in \cref{fig:case3:droplet gradT T mu entropy mass}(a) and (b).
Due to the decrease of $T_{w}$ along the bottom substrate, the local evaporation on the hot region and the local condensation on the cold region can be clearly observed.
\cref{fig:case3:droplet gradT T mu entropy mass}(c) and (d) present the evolution of the total entropy and the total mass error, respectively.
\begin{figure}[htbp]
\centering
\subfloat[$T$]{\includegraphics[width=0.28\textwidth]{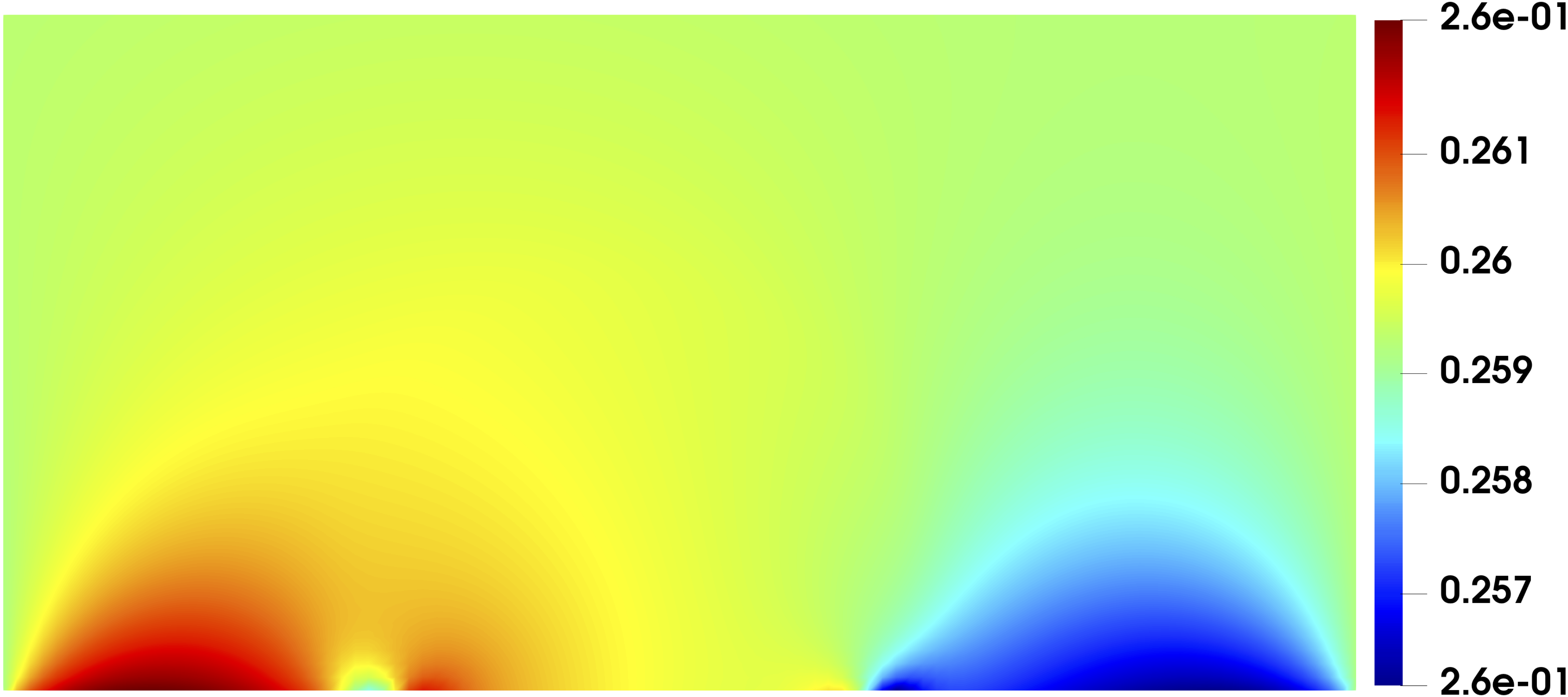}}
\subfloat[$\tilde{\mu}$]{\includegraphics[width=0.28\textwidth]{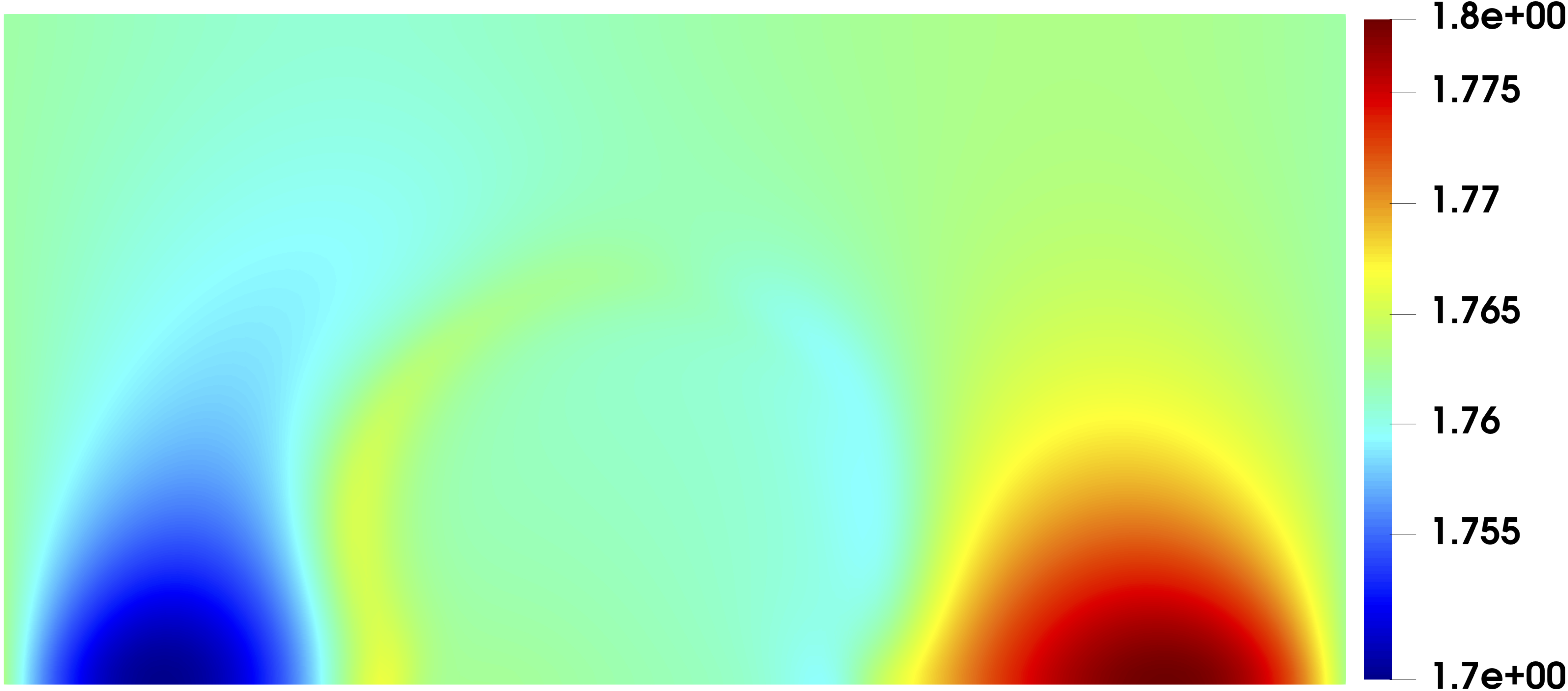}}
\subfloat[total entropy]{\includegraphics[width=0.22\textwidth]{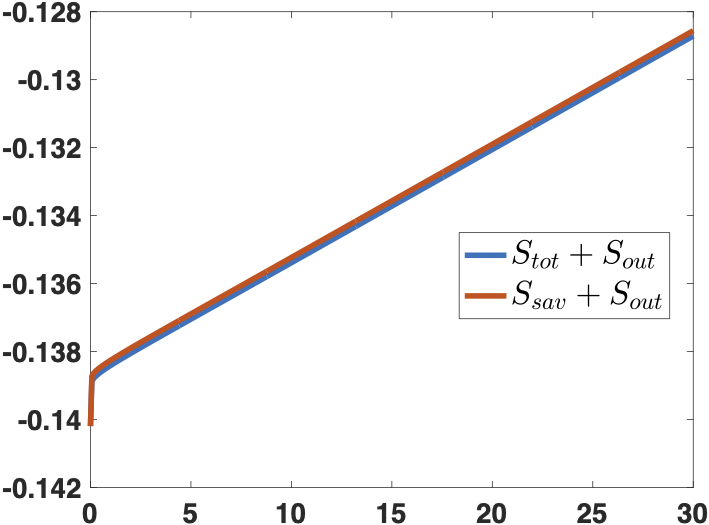}}
\subfloat[total mass error]{\includegraphics[width=0.22\textwidth]{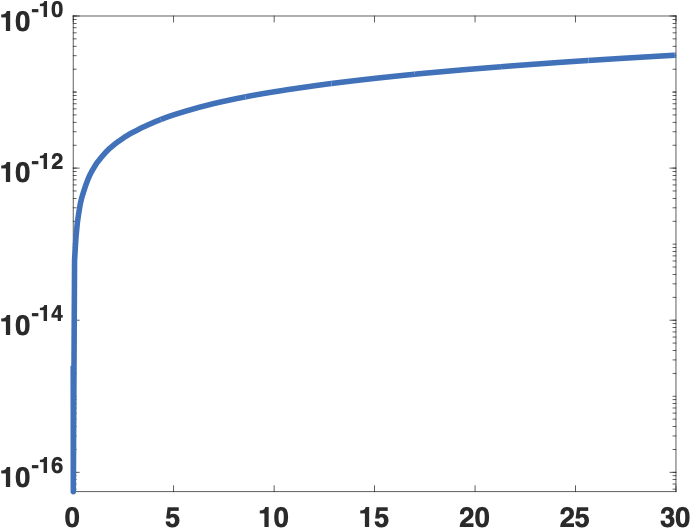}}
\caption{Final-time profiles of $T$ and $\tilde{\mu}$, and the evolution of total entropy and total mass error.} 
\label{fig:case3:droplet gradT T mu entropy mass}
\end{figure}

\section{Conclusions}
\label{sec:conclusions}
In this paper, we present a new thermodynamically consistent model for non-isothermal compressible two-phase flows with MCL.
The proposed model, based on the DVDWT, consists of equations for the number density, fluid velocity, and fluid temperature.
Since the fluid-solid interfaces are assumed to be non-isothermal and heterogeneous, we impose the hydrodynamic boundary conditions on the proposed model, which are able to describe both velocity slip and temperature slip.
The dimensionless form of the proposed model is developed and proved to strictly satisfy the fundamental laws of thermodynamics.
Based on the dimensionless model, we construct two numerical schemes.
The first is fully coupled and thermodynamically consistent, and is rigorously proved to satisfy the temporally discrete first and second laws of thermodynamics. 
By extending the MSAV approach to entropy production, we design the second scheme, which is decoupled, linear, and unconditionally entropy-stable.
It is also proved that this scheme satisfies the temporally discrete second law of thermodynamics.
The fully discrete scheme, constructed by combining the DG and CG methods, is proved to preserve the mass conservation law and the boundedness of the number density.
We conduct numerical simulations to verify the efficiency and stability of the proposed scheme.



\bibliographystyle{siamplain}
\bibliography{references}
\end{document}